\title{Linear Programming Bounds for Randomly Sampling Colorings}
\author{Sitan Chen\thanks{This work was supported in part by NSF CAREER Award CCF-1453261 and NSF Large CCF-1565235.} \\MIT  \and Ankur Moitra\thanks{This work was supported in part by NSF CAREER Award CCF-1453261, NSF Large CCF-1565235, a David and Lucile Packard Fellowship, and an Alfred P. Sloan Fellowship.}\\ MIT}
\newcolumntype{Y}{>{\centering\arraybackslash}X}
\renewcommand{\E}{\mathbb{E}}
\newcommand{\Bad}[1]{\textsc{Bad}$_{#1}$}
\newcommand{\Sing}[1]{\textsc{Sing}$_{#1}$}
\newcommand{\Good}[1]{\textsc{Good}$_{#1}$}
\newcommand{\Badend}[1]{\textsc{BadEnd}$_{#1}$}
\newcommand{\Goodend}[1]{\textsc{GoodEnd}$_{#1}$}
\newcommand{\Tstop}{T_{\text{stop}}}
\renewcommand{\P}{\mathbb{P}}
\begin{document}

\maketitle

\begin{abstract}
\normalsize
Here we study the problem of sampling random proper colorings of a bounded degree graph. Let $k$ be the number of colors and let $d$ be the maximum degree. In $1999$, Vigoda \cite{vigoda2000improved} showed that the Glauber dynamics is rapidly mixing for any $k > \frac{11}{6} d$. It turns out that there is a natural barrier at $\frac{11}{6}$, below which there is no one-step coupling that is contractive, even for the flip dynamics.  

We use linear programming and duality arguments to guide our construction of a better coupling. We fully characterize the obstructions to going beyond $\frac{11}{6}$. These examples turn out to be quite brittle, and even starting from one, they are likely to break apart before the flip dynamics changes the distance between two neighboring colorings. We use this intuition to design a variable length coupling that shows that the Glauber dynamics is rapidly mixing for any $k\ge (\frac{11}{6} - \epsilon_0)d$ where $\epsilon_0 \geq 9.4 \cdot 10^{-5}$. This is the first improvement to Vigoda's analysis that holds for general graphs. 
\end{abstract}

\thispagestyle{empty}
\setcounter{page}{0}

\newpage

\section{Introduction}

\subsection{Background}

Here we study the problem of sampling random proper colorings of a bounded degree graph. More precisely, let $k$ be the number of colors and let $d$ be the maximum degree. A long-standing open question is to give an algorithm that works for any $k \geq d + 2$, when the space of proper colorings is first connected. Despite a long line of intensive investigation \cite{jerrum1995very, salas1997absence, dyer2003randomly, dyer2004randomly, hayes2003randomly, hayes2003non, molloy2004glauber, hayes2005coupling, frieze2006randomly, frieze2007survey}, the best known bounds are quite far from the conjecture. 

In fact, there is a natural Markov chain called the {\em Glauber dynamics} that is widely believed to work: in each step, choose a random node and recolor it with a random color not appearing among its neighbors. It is easy to see that its steady state distribution is uniform on all proper $k$ colorings, again provided that $k \geq d + 2$. It is even conjectured that on an $n$ node graph, the mixing time is $O(n \log n)$ which would be tight \cite{hayes2005general}. We remark that rapidly mixing Markov chains for sampling random colorings immediately give a fully polynomial randomized approximation scheme (FPRAS) for counting the number of proper colorings. There is also interest in this question in combinatorics \cite{brightwell2002random} and in statistical physics, where it corresponds to approximating the partition function of the zero temperature anti-ferromagnetic Potts model \cite{potts1952some}.

Jerrum \cite{jerrum1995very} gave the first significant results and showed that when $k > 2d$ the Glauber dynamics mixes in time $O(n \log n)$. The modern proof of this result is easier and proceeds through {\em path coupling} \cite{bubley1997path}, whereby it is enough to couple the updates between two colorings $\sigma$ and $\tau$ that differ only at a single node $v$ and show that the expected distance between them is strictly decreasing. Then Jerrum's bound follows by comparing how often the distance between the colorings decreases (when $v$ is selected and after the update has the same color in both) vs. how often it increases (when a neighbor of $v$ is selected and recolored in one but not the other). This result is closely related to work in the statistical physics community by Salas and Sokal \cite{salas1997absence} on the {\em Dobrushin uniqueness condition}. 

In a breakthrough work, Vigoda \cite{vigoda2000improved} gave the first algorithm for sampling random colorings that crossed the natural barrier of $2d$. His approach was through a different Markov chain that in addition to recoloring single nodes also swaps the colors in larger alternating components. His chain was a variant of the Wang-Swendsen-Koteck\'{y} (WSK) algorithm \cite{wang1989antiferromagnetic}. The key insight is that the bottleneck in Jerrum's approach \---- when the neighbors of $v$ all have distinct colors \---- can be circumvented by flipping larger components. More precisely, when a neighbor of $v$ is recolored in one chain in a way that would have increased the distance, one can instead match it with the flip of an alternating component of size two in the other chain that keeps the distance the same. But now one needs to couple the flips of larger alternating components in some manner. Vigoda devised a coupling and a choice of flip probabilities that works for any $k > \frac{11}{6} d$. His Markov chain mixes in time $O(n \log n)$ and one can also connect it to Glauber dynamics and prove an $\widetilde{O}(n^2)$ mixing time under the same conditions. This is still the best known bound for general graphs. 

Subsequently, there was a flurry of work on getting better bounds for restricted families of graphs. Dyer and Frieze \cite{dyer2003randomly} considered graphs of logarithmic maximum degree and girth and proved that the Glauber dynamics mixes rapidly whenever $k > \alpha d$ where $\alpha$ is the solution to $\alpha = e^{1/\alpha}$ and numerically $\alpha = 1.763\cdots$. Their approach was to show that under the uniform distribution on proper colorings, the number of colors missing from the neighborhood of $v$ is roughly $k(1-\frac{1}{k})^d$ with high probability. Results like these were later termed {\em local uniformity properties}. They were improved in many directions in terms of reducing the degree and/or the girth requirements \cite{hayes2003randomly, molloy2004glauber, hayes2005coupling, frieze2006randomly, lau2006randomly}, culminating in two incomparable results. Dyer et al. \cite{dyer2004randomly} showed that Glauber dynamics mixes rapidly whenever $k > \beta d$ where $\beta$ is the solution to $(1-e^{-1/\beta})^2 + \beta e^{-1/\beta} = 1$ and numerically $\beta = 1.489\cdots$ for girth $g \geq 6$ and the degree $d$ being a sufficiently large constant. Hayes and Vigoda \cite{hayes2003non} showed rapid mixing for any $k > (1+\epsilon)d$ for any $\epsilon > 0$ provided that the girth $g \geq 11$ and the degree is logarithmic, using an intriguing non-Markovian coupling. 

It is important to emphasize that the types of local uniformity properties being exploited by the works above do {\em not} hold for general graphs \---- e.g. ones with triangles. In fact, even the chromatic number is asymptotically different: Johansson \cite{johansson1996asymptotic} proved that the chromatic number of a triangle free graph is at most $\frac{(9 + o(1)) d}{\log d}$ which was later improved by Molloy \cite{molloy2017list} to $\frac{(1 + o(1)) d }{\log d}$, compared to $d + 1$ for general graphs which is tight. 

There have been many other improvements, but all for special graph families. Through an eigenvalue generalization of the Dobrushin condition, Hayes \cite{hayes2006simple} showed that Glauber dynamics mixes rapidly for $k > d + c \sqrt{d}$ on planar graphs and graphs of constant treewidth. Berger et al. \cite{berger2005glauber} showed rapid mixing on graphs of logarithmic cutwidth, which was strengthened by Vardi \cite{vardi2017randomly} to graphs of logarithmic pathwidth. Some recent papers have studied settings such as bipartite or random graphs \cite{dyer2006randomly, mossel2010gibbs, efthymiou2018sampling}, where it is possible to mix with fewer colors than the maximum degree. Hayes et al. \cite{hayes2015randomly} notably improved the abovementioned result of \cite{hayes2006simple} to show that Glauber dynamics in fact mixes rapidly for planar graphs when $k = \Omega(d/\log d)$. These works all leverage structural properties that hold in restricted settings.

\subsection{Our Results}

Our main result is the first improvement on randomly sampling colorings on general bounded degree graphs since the $1999$ paper of Vigoda \cite{vigoda2000improved}. Specifically, we prove:

\begin{thm}\label{thm:main}
The flip dynamics is rapidly mixing with mixing time $O(n \log n)$, for any $k \geq (\frac{11}{6} - \epsilon_0) d$ where $\epsilon_0 \geq 9.4\cdot 10^{-5}$ is an absolute constant that is independent of $d$.
\end{thm}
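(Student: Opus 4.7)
The plan is to work in the path coupling framework of Bubley--Dyer: take two colorings $\sigma,\tau$ that differ only at a single vertex $v$, couple one step of the flip dynamics in each, and control the expected change in Hamming distance. Fix the relevant ``local signature'' around $v$ -- the colors of $v$'s neighbors, the induced alternating 2-color components (``cherries,'' etc.) that Vigoda's chain is allowed to flip, and their sizes. For each such signature, the tightest expected contraction achievable by any coupling of the two one-step distributions is the optimum of a linear program in which the variables are the joint flip probabilities and the marginals are prescribed by the flip dynamics. At $k=(11/6)d$ this LP has value exactly $0$ for certain signatures. The first step is to write this LP out, pass to the dual, and use complementary slackness to classify all extremal ``bad'' signatures -- precisely the obstructions promised in the abstract. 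I expect these to be very rigid: a small, explicit list of neighborhood shapes with specific color multiplicities and specific cherry sizes.

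\textbf{Brittleness and variable-length coupling.} The second ingredient is to upgrade path coupling to a variable-length version (in the spirit of the non-Markovian couplings of Hayes--Vigoda). Define a stopping time $T_{\mathrm{stop}}$ which, if the current signature is bad, postpones measuring the distance and keeps running the coupled chains for a few more steps; if the signature is good, stops after one step. The key structural claim is brittleness: conditioned on currently being in a bad signature, with constant probability some update within a bounded number of steps destroys the rigid local structure (for example, recoloring a vertex at distance $2$ from $v$ so that a cherry is no longer alternating, or enlarging a cherry out of its critical size), and crucially this destruction can be arranged by a coupling that does not change $d_H(\sigma,\tau)$ during the waiting window. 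After $T_{\mathrm{stop}}$ steps we are therefore in a good signature with positive probability, and on that event we apply the contractive LP-optimal coupling. Averaging over $T_{\mathrm{stop}}$ and invoking the variable-length path coupling theorem then yields strict contraction down to $k \geq (11/6 - \epsilon_0)d$, which translates to mixing time $O(n\log n)$.

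\textbf{Main difficulty.} The hard part will be making brittleness quantitative. One must show that, for each bad signature in the dual classification, the probability that the signature survives $t$ coupled steps decays fast enough relative to the adversarial one-step drift that the average contraction over $[0,T_{\mathrm{stop}}]$ is strictly negative, with constants explicit enough to produce $\epsilon_0 \geq 9.4\cdot 10^{-5}$. This forces a second round of LP-style bookkeeping -- essentially re-solving the one-step LP conditioned on ``bad signature has not yet been destroyed,'' while simultaneously tracking the Markovian transitions between bad signatures themselves. Balancing a longer $T_{\mathrm{stop}}$ (more chances to dismantle bad structure) against the cumulative worst-case drift it allows (more steps in which a neighbor of $v$ could be recolored) is what ultimately pins down $\epsilon_0$, and getting this balance tight enough to beat $11/6$ by any positive amount is the technical heart of the argument.
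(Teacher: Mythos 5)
Your high-level ingredients match the paper's: an LP/duality characterization of the tight one-step obstructions, a brittleness argument for those obstructions, and a variable-length path coupling combined via the Hayes--Vigoda theorem. However, there is a genuine gap in the quantitative mechanism you propose for brittleness. You claim that, conditioned on being in a bad signature, ``with constant probability some update within a bounded number of steps destroys the rigid local structure.'' For the flip dynamics this is false: destroying a bad configuration requires selecting one of $O(1)$ specific vertices near $v$, and each step selects a uniformly random vertex, so the per-step probability of touching the relevant structure is $\Theta(1/n)$. No bounded window of steps gives constant success probability, and a signature-dependent stopping rule of the form ``one step if good, a few more if bad'' cannot work. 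The paper's resolution is to define $T_{\mathrm{stop}}$ as the first time the Hamming distance actually changes under the (unchanged) greedy coupling --- an event that also has probability $\Theta(1/n)$ per step, so $\E[T_{\mathrm{stop}}]=\Theta(n)$ --- and to turn brittleness into a \emph{rate comparison}: starting from a bad configuration, the chance of breaking the bad structure before the distance changes is a constant, because both events occur at rate $\Theta(1/n)$. There is nothing to ``arrange'' so that the waiting window does not change $d(\sigma,\tau)$; the stopping time is defined precisely by that change, and the analysis is a race between the two events.

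A second missing piece is the accounting at the stopping time. Being ``in a good signature with positive probability'' when you stop is not sufficient: the contraction bound must account for all $d$ neighbors of $v$ simultaneously, with the deficient contraction rate on bad colors charged against the surplus on good colors. The paper does this color by color, proving $\E[N_{bad}(\sigma^{(T_{\mathrm{stop}}-1)},\tau^{(T_{\mathrm{stop}}-1)})]\le\gamma\,\E[N_{good}(\sigma^{(T_{\mathrm{stop}}-1)},\tau^{(T_{\mathrm{stop}}-1)})]$ via a fractional matching between coupling schedules that terminate while color $c$ is bad and schedules that terminate while $c$ is good, tracking transitions among coarsened stages (bad, good, and two absorbing end-stages). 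This expected-ratio bound is then fed into a $\gamma$-mixed version of the one-step LP whose objective value dips below $11/6$ and yields the explicit $\epsilon_0$. Your proposal gestures at ``a second round of LP-style bookkeeping'' but does not identify this amortized, per-color comparison, which is the step that actually converts brittleness into a contraction estimate.
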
 

Our proof is guided by linear programming and duality arguments. It is not really a computer assisted proof in the sense that we obtain the improvement over $\frac{11}{6}$ by solving a single linear program that works for all degrees, that leverages various structural tools we prove about the space of proper colorings in general graphs. The starting point for our approach is the observation that choosing the best flip probabilities in the WSK algorithm (i.e. the probability of flipping alternating components of each possible size) can be cast as a linear program, when utilizing Vigoda's greedy coupling \cite{vigoda2000improved}. In this manner, Vigoda's analysis provides a feasible solution. Either surprisingly or unsurprisingly, this feasible solution turns out to be optimal in a strong sense: Not only is it an optimizer to the linear programming relaxation\footnote{Several approximations are made along the way in reaching this linear program, such as restricting to flipping components of size at most $6$ and considering all configurations of colors in the layers around $v$ whether they are realizable or not.}, we in fact prove that there is no one-step coupling that is contractive with respect to the Hamming metric for $k < \frac{11}{6} d$ (Lemma~\ref{lem:tight}). In this sense, Vigoda's threshold of $\frac{11}{6}$ is a natural barrier for a class of analyses. 

However in bounding the mixing time of Markov chains, there is always the hope of utilizing non-Markovian couplings that look into the past or future to obtain better bounds. The first improvements to Vigoda's bound for graphs of large degree and girth by Dyer and Frieze \cite{dyer2003randomly} proceeded in this manner via a burn-in method. Hayes and Vigoda \cite{hayes2003non} gave a sophisticated method based on looking into the future to remove obstructions, again under various assumptions. Non-Markovian couplings have also been used to get $O(n \log n)$ mixing time \---- rather than $\widetilde{O}(n^2)$ \---- for Glauber dynamics with $k = (2 - \epsilon)d$ colors \cite{dyer2001extension, hayes2007variable}. 

We leverage our linear programming formulation for finding a one-step coupling to pass to the dual and exactly characterize the family of local neighborhoods around $v$ that cause such analyses to get stuck at $\frac{11}{6}$. There are already some specific examples along these lines mentioned in \cite{vigoda2000improved}. For our purposes, it is crucial that we have a full characterization because our approach is to use non-Markovian couplings to simultaneously defeat all of these examples. The intuition is that if we can, then by complementary slackness we ought to be able to break the $\frac{11}{6}$ barrier. There are some subtleties to making this work, such as showing that the improvement is a universal constant independent of $d$. 

So how do we construct multi-step couplings? The main idea is the family of tight examples is {\em brittle}, and even starting from one, we are reasonably likely to break it apart before updating $v$ (the node of disagreement) or any of its neighbors. The random process is quite complicated, but we are able to coarsen the state space using the appropriate notions of good and bad (and things in between) states. Ultimately, our coupling is not contractive over one step but when you measure the expected change in distance at a random stopping time (the first time we do not use the identity coupling) it {\em is}. Another way to think about our technique is as an amortized analysis that even though there are pairs of configurations where we cannot break the $\frac{11}{6}$ barrier, the fact that we can for many other pairs (that are likely to be reached even from bad starting points) is enough. 

As in Vigoda's paper \cite{vigoda2000improved}, we obtain the following as an implication of our main result: 

\begin{thm}
The Glauber dynamics is rapidly mixing with mixing time $O(n^2 \log n \log k)$, for any $k \geq (\frac{11}{6} - \epsilon_0) d$ where $\epsilon_0 \geq 9.4\cdot 10^{-5}$.\label{thm:compareglauber}
\end{thm}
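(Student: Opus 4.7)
The plan is to derive Theorem~\ref{thm:compareglauber} from Theorem~\ref{thm:main} via the comparison technique of Diaconis and Saloff-Coste, following the line of argument in Vigoda~\cite{vigoda2000improved}. Both the flip dynamics and the Glauber dynamics are reversible with respect to the same stationary distribution (uniform on proper $k$-colorings), and each non-identity transition of the flip chain can be simulated by a short canonical path of Glauber transitions with bounded congestion. Given this, the comparison lemma converts a spectral-gap bound for flip dynamics into one for Glauber dynamics, losing at most a polynomial factor.

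The first step is to exhibit, for each non-identity transition $\sigma \to \tau$ of the flip dynamics---where $\tau$ differs from $\sigma$ by flipping a two-colored alternating component $S$---a canonical path of Glauber moves from $\sigma$ to $\tau$ of length $O(|S|) = O(1)$, using that the flip chain only ever flips components of size at most $6$. Concretely, I would recolor the vertices of $S$ one at a time in a BFS-like order, with the proviso that whenever directly toggling the two alternating colors at some vertex $v$ would create a conflict with an already-updated neighbor, we first route through a scratch color not appearing among $v$'s neighbors. Such a scratch color exists because $v$ has at most $d$ neighbors while $k = \Omega(d)$. This yields canonical paths of length bounded by a universal constant through the space of proper colorings.

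Next, I would bound the congestion: any given Glauber edge $(x,y)$ appears in at most $O(1)$ of these canonical paths, because any flip canonical path passing through the single-vertex recoloring $(x,y)$ is determined up to a constant number of choices by the small local structure of $S$ around the recolored vertex. Since each flip-chain transition and Glauber-chain transition has stationary-weighted probability on the order of $1/(nk)$, applying the Diaconis--Saloff-Coste comparison lemma then yields $\mathrm{gap}(\mathrm{Glauber}) \geq \Omega(\mathrm{gap}(\mathrm{flip}))$.

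Finally, Theorem~\ref{thm:main} gives $\tau_{\mathrm{mix}}(\mathrm{flip}) = O(n \log n)$, hence $\mathrm{gap}(\mathrm{flip}) = \Omega(1/(n \log n))$, and so $\mathrm{gap}(\mathrm{Glauber}) = \Omega(1/(n \log n))$ as well. Combined with the standard inequality $\tau_{\mathrm{mix}} \leq \mathrm{gap}^{-1} \log(1/\pi_{\min})$ and the fact that $\log(1/\pi_{\min}) \leq n \log k$ for the uniform distribution on proper colorings, this delivers $\tau_{\mathrm{mix}}(\mathrm{Glauber}) = O(n^2 \log n \log k)$. The main obstacle, and essentially the only non-trivial work in this argument, is constructing the canonical paths so that every intermediate coloring is proper \emph{and} the congestion is absolutely bounded; both are tractable precisely because the flip dynamics only ever flips components of constant size, so only a constant amount of care per flip transition is required.
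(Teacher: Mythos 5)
Your proposal is correct and follows essentially the same route as the paper, which simply cites Vigoda's application of the Diaconis--Saloff-Coste comparison theorem showing that a flip-dynamics mixing time of $\tau$ yields a Glauber mixing time of $O(\tau\cdot n\log k)$. You have merely unpacked the canonical-path and congestion details that the paper delegates to \cite{vigoda2000improved}, and your final accounting matches the stated bound.
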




We believe that the idea of leveraging insights from linear programming and duality may be more generally useful in constructing non-Markovian couplings. Such techniques have been used in approximation algorithms \cite{goemans1998improved, jain2003greedy} under the name {\em factor revealing LPs}. Here they play a somewhat different role in utilizing duality and complementary slackness to identify and characterize the obstacles to a single step coupling in a principled manner. We remark that before Vigoda's work \cite{vigoda2000improved}, Bubley et al. \cite{bubley1998beating} used linear programming to show that Glauber dynamics is rapidly mixing with five colors on graph with maximum degree three. Their approach required solving several hundred linear programs, and was subject to ``combinatorial explosion" as a function of the degree. This also points to a main challenge going forward: to find appropriate linear programming relaxations for constructing families of couplings that avoid such an explosion and to understand what these relaxations do and do not give up. This can be quite subtle, but we have many geometric tools to build our understanding upon. 
\subsection{Organization}

In Section~\ref{sec:prelim} we give the basic definitions that arise in variable-length path coupling, recall Vigoda's Markov chain, and interpret Vigoda's one-step coupling analysis as implicitly solving a linear program. In Section~\ref{sec:tightexamples} we identify a family of worst-case neighborhoods which is tight for Vigoda's approach and define a certain ``$\gamma$-mixed'' variant of Vigoda's linear program which will form the basis for our coupling analysis. In Section~\ref{sec:coupling} we formulate our variable-length coupling and exhibit a reduction from analyzing this coupling to analyzing our modified linear program for a particular $\gamma$. This $\gamma$ depends on the ``typical'' collection of alternating components containing $v$ when the coupling terminates, and in Section~\ref{sec:rand_process}, we analyze this typical collection to estimate $\gamma$. In Appendix~\ref{app:obsproof} we prove Lemma~\ref{obs:slack}.

\section{Preliminaries}
\label{sec:prelim}

In a graph $G = (V,E)$, for vertex $v\in V$ define $N(v)$ to be the set of neighbors of $v$ and $d(v)$ to be the degree of vertex $v$. Given a coloring $\sigma: V\to[k]$, define $A_{\sigma}(v)$ to be the set of colors \emph{available to $v$}, i.e. the set of colors $c'$ for which no neighbor of $v$ is colored $c'$. Given a Markov chain with transition probability matrix $P$ on finite state space $\Omega$ and initial state $\sigma^{(0)}$, denote the distribution of state $\sigma^{(t)}$ at time $t$ by $P^t(\sigma^{(0)},\cdot)$. Denote the stationary distribution of the Markov chain by $\pi$, and define the \emph{mixing time} $\tau_{mix}$ by $$\tau_{mix}\triangleq \max_{\sigma^{(0)}\in\Omega}\min\{t: \tvd(P^{t'}(\sigma^{(0)},\cdot),\pi)\le 1/2e \ \forall t'\ge t\}.$$

\subsection{The Flip Dynamics}
\label{subsec:introflip}

The Markov chain we use is a variant of the Wang-Swendsen-Koteck\'{y} (WSK) algorithm \cite{wang1989antiferromagnetic} studied in \cite{vigoda2000improved}, which we define below. In a coloring $\sigma$ of a graph $G$, for vertex $v$ and color $c$ let $S_{\sigma}(v,c)$ denote the set of vertices $w$ for which there exists an alternating path between $v$ and $w$ using only the colors $c$ and $\sigma(v)$. Under this definition, $S_{\sigma}(v,\sigma(v)) = \emptyset$. The motivation for this definition is that if $\sigma$ is proper, then if one \emph{flips} $S_{\sigma}(v,c)$, i.e. changes the color of all $\sigma(v)$-colored vertices in $S_{\sigma}(v,c)$ to $c$ and that of all $c$-colored vertices in $S_{\sigma}(v,c)$ to $\sigma(v)$, the resulting coloring is still proper.

\begin{defn}
	Let $\{p_{\alpha}\}_{\alpha\in\N}\in[0,1]^{\N}$ be a collection of \emph{flip probabilities}. The \emph{flip dynamics} is a random process generating a sequence of colorings $\sigma^{(0)},\sigma^{(1)}, \sigma^{(2)},\cdots$ of $G$ where $\sigma^{(0)}$ is an arbitrary proper coloring and $\sigma^{(t)}$ is generated from $\sigma^{(t-1)}$ as follows:\begin{enumerate}
		\item Select a random vertex $v^{(t)}$ and a random color $c^{(t)}$.
		\item Let $\alpha = |S_{\sigma}(v^{(t)},c^{(t)})|$ and flip $S_{\sigma}(v^{(t)},c^{(t)})$ with probability $p_{\alpha}/\alpha$.
	\end{enumerate}

	The reason for the $p_{\alpha}/\alpha$ term is that we have a nice equivalent way of formulating the flip dynamics. Let $\mathcal{S}$ denote the family of all alternating components, i.e. all $S\subset V$ for which there exist $v,c$ such that $S = S_{\sigma}(v,c)$. $\sigma^{(t)}$ is generated from $\sigma^{(t-1)}$ as follows:\begin{enumerate}
		\item Pick any component $S_t\in\mathcal{S}$, each with probability $1/nk$.
		\item Flip $S$ with probability $p_{|S|}$.
	\end{enumerate}
\end{defn}

\begin{lem}[\cite{vigoda2000improved}]
	The flip dynamics are an aperiodic, irreversible, symmetric Markov chain, so its stationary distribution is the uniform distribution over proper colorings of $G$.
\end{lem}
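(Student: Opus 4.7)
The plan is to verify the three listed properties (aperiodic, irreducible, symmetric) separately and then invoke the elementary fact that a finite, irreducible, aperiodic chain whose transition matrix is symmetric has the uniform distribution as its unique stationary distribution. I will work with the equivalent $(v,c)$-based formulation: at each step one picks a uniformly random pair $(v,c) \in V\times[k]$ (probability $1/(nk)$) and then flips $S_\sigma(v,c)$ with probability $p_{|S_\sigma(v,c)|}/|S_\sigma(v,c)|$.

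Aperiodicity will be the easiest: whenever the color $c$ drawn equals $\sigma(v)$, the set $S_\sigma(v,\sigma(v))$ is empty by convention, so nothing happens. This contributes a self-loop of probability at least $n/(nk) = 1/k > 0$ to every state, forcing $\gcd$ of return times to be $1$. For irreducibility, I will compare with single-site updates: a component $S_\sigma(v,c)$ has size exactly $1$ precisely when $c \in A_\sigma(v)$, in which case the flip simply recolors $v$ to $c$. Since such a flip is executed with the positive probability $p_1/(nk)$ per $(v,c)$, the flip dynamics can simulate any Glauber move, and Glauber dynamics is well-known to be irreducible on proper $k$-colorings when $k \ge d+2$. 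This reduces irreducibility of the flip dynamics to the standard Glauber connectivity argument.

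The main obstacle is symmetry, $P(\sigma,\tau) = P(\tau,\sigma)$ for $\sigma \ne \tau$. The crux is the following structural invariance of alternating components under flipping: if $\tau$ arises from $\sigma$ by flipping $S := S_\sigma(v,c)$, then for every $w \in S$, the alternating component $S_\tau(w,c_w)$ with respect to the ``other'' color $c_w \in \{c,\sigma(v)\} \setminus \{\tau(w)\}$ equals $S$ as a subset of $V$. I would prove this by checking that the alternating-path condition on $\{c,\sigma(v)\}$-bichromatic paths is completely preserved by swapping the two color classes on $S$ — any path in $\sigma$ witnessing membership in $S$ becomes a path in $\tau$ with the two colors exchanged, and no new such paths are created because edges between $S$ and $V\setminus S$ use colors outside $\{c,\sigma(v)\}$ on the $V\setminus S$ side. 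With this fact in hand, the $(v',c')$ pairs producing the transition $\sigma \to \tau$ correspond bijectively (via the identity on $V$) to those producing $\tau \to \sigma$, with matching component size, so both probabilities equal $|S| \cdot \frac{1}{nk}\cdot \frac{p_{|S|}}{|S|} = \frac{p_{|S|}}{nk}$.

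Combining the three properties, the chain is ergodic and its (unique) stationary distribution satisfies $\pi(\sigma)P(\sigma,\tau)=\pi(\tau)P(\tau,\sigma)$ with $P$ symmetric, so $\pi$ must be constant on the communicating class of all proper colorings, i.e.\ uniform. The only genuinely delicate step is the bichromatic-path invariance used for symmetry; the rest is bookkeeping.
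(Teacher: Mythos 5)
The paper does not actually prove this lemma---it is quoted from \cite{vigoda2000improved}---but your argument is correct and is the standard one: the bichromatic invariance you isolate (the subgraph induced on $\{c,\sigma(v)\}$-colored vertices is unchanged by the flip, since for a proper coloring no vertex outside $S$ adjacent to $S$ carries either color, so exactly the $|S|$ pairs $(w,c_w)$ with $w\in S$ generate the reverse move, with total mass $p_{|S|}/nk$) is precisely how symmetry is established in Vigoda's paper, and your self-loop and Glauber-simulation arguments for aperiodicity and irreducibility are the standard ones. The only things to make explicit are the hypotheses $k\ge d+2$ and $p_1>0$ needed for the irreducibility step (and note that ``irreversible'' in the statement is evidently a typo for ``irreducible'').
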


The WSK algorithm corresponds to a choice of $p_{\alpha} = 1$ for all $\alpha\in\N$. For the purposes of path coupling, it turns out one only needs to flip alternating components whose size is at most some absolute constant $N_{max}$ (in Vigoda's Markov chain, $N_{max} = 6$), and this ``local'' nature of the flip dynamics will simplify the analysis. 

\subsection{Variable-Length Path Coupling}

Coupling is a common way to bound the mixing time of Markov chains. A $T$-step coupling for a Markov chain with transition matrix $P$ and state space $\Omega$ defines for every initial $(\sigma^{(0)},\tau^{(0)})\in\Omega^2$ a stochastic process $(\sigma^{(t)},\tau^{(t)})$ such that the distribution of $\sigma^{(T)}$ (resp. $\tau^{(T)}$) is the same as $P^T(\sigma^{(0)},\cdot)$. (resp. $P^T(\tau^{(0)},\cdot)$). The \emph{coupling inequality} states that for any starting point $\sigma^{(0)}$ for the Markov chain, $$\tvd(\sigma^{(t)},\pi)\le\max_{\tau^{(0)}}\Pr(\sigma^{(T)}\neq\tau^{(T)}).$$

We will think of $T$-step couplings as random functions $\Omega^2\to\Omega^2$, so we will denote them by $(\sigma^{(0)},\tau^{(0)})\mapsto(\sigma^{(T)},\tau^{(T)})$, or more succinctly, $(\sigma,\tau)\mapsto(\sigma',\tau')$. So if one can devise a coupling $(\sigma,\tau)\mapsto(\sigma',\tau')$ that $(1-\alpha)$-contracts for some $\alpha > 0$ and metric $d$ on $\Omega$, i.e. that satisfies \begin{equation}\E[d(\sigma',\tau')]\le (1 - \alpha)d(\sigma,\tau),\label{eq:contract}\end{equation} then one can show that $\tau_{mix} = O(T\log(D)/\alpha)$, where $D$ is the diameter of $\Omega$ under $d$.

In complicated state spaces like the space of all proper colorings, it is often tricky to construct couplings that give good bounds on mixing time. Path coupling, introduced in \cite{bubley1997path}, is a useful tool for simplifying this process: rather than define $(\sigma,\tau)\mapsto(\sigma',\tau')$ for all $(\sigma,\tau)\in\Omega^2$, it is enough to do so for a small subset of initial pairs in $\Omega^2$. This is called a \emph{partial coupling}.

For the rest of this subsection, we specialize our discussion of coupling to the setting of sampling colorings. For a graph $G$, let $\Omega^*$ denote the space of all proper colorings of $G$, and let $\Omega = [k]^{V}$ denote the space of all colorings of $G$. Fix any Markov chain over $\Omega$ whose stationary distribution is the uniform distribution over $\Omega^*$, e.g. the Glauber or flip dynamics, and denote it by $\sigma\mapsto\sigma'$.

\begin{defn}
	A \emph{neighboring coloring pair} is a tuple $(G,\sigma,\tau)$ where $\sigma,\tau\in\Omega$ are colorings of $G$ (not necessarily proper) which differ on a single vertex, which we will denote throughout this paper by $v$. Where the context is clear, we will omit $G$ and refer to neighboring coloring pairs as $(\sigma,\tau)$.
\end{defn}

\begin{defn}
	For $\sigma,\tau\in\Omega$, the \emph{Hamming distance} $d(\sigma,\tau)$ is the number of vertices on which $\sigma,\tau$ differ.
\end{defn}

\begin{thm}[\cite{bubley1997path}]
	If the Markov chain $\sigma\mapsto\sigma'$ has a partial coupling $(\sigma,\tau)\mapsto(\sigma',\tau')$ defined for all neighboring coloring pairs $(\sigma,\tau)$ that $(1-\alpha)$-contracts for some $\alpha > 0$, then there exists a coupling defined for all pairs of colorings $(\sigma,\tau)\in\Omega^2$ which $(1-\alpha)$-contracts.
\end{thm}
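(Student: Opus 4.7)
The plan is to execute the standard path-coupling argument of Bubley and Dyer. Given any two colorings $\sigma, \tau \in \Omega$ with $d(\sigma,\tau) = \ell$, let $v_1, \ldots, v_\ell$ be the vertices on which they disagree, and define intermediate colorings $\sigma = \sigma_0, \sigma_1, \ldots, \sigma_\ell = \tau$ by letting $\sigma_i$ agree with $\tau$ on $\{v_1,\ldots,v_i\}$ and with $\sigma$ elsewhere. Since $\Omega = [k]^V$ contains all (not necessarily proper) colorings, each $\sigma_i$ lies in $\Omega$, and each consecutive pair $(\sigma_{i-1}, \sigma_i)$ differs on exactly one vertex and is therefore a neighboring coloring pair to which the hypothesized partial coupling applies.

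Next I would glue the partial couplings along this path into a joint distribution on $(\sigma_0', \sigma_1', \ldots, \sigma_\ell')$. First, sample $\sigma_0' \sim P(\sigma_0, \cdot)$. Then, for $i = 1, \ldots, \ell$ in order, sample $\sigma_i'$ conditional on $\sigma_{i-1}'$ using the partial coupling on $(\sigma_{i-1}, \sigma_i) \mapsto (\sigma_{i-1}', \sigma_i')$. This is consistent because the $\sigma_{i-1}'$-marginal of that partial coupling is exactly $P(\sigma_{i-1}, \cdot)$, which by induction matches the distribution of the already-drawn $\sigma_{i-1}'$. Consequently $\sigma_i' \sim P(\sigma_i, \cdot)$ for every $i$, and in particular $(\sigma_0', \sigma_\ell')$ is a valid coupling of $P(\sigma, \cdot)$ and $P(\tau, \cdot)$.

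Finally, applying the triangle inequality for Hamming distance along the chain and then the contraction hypothesis on each consecutive pair gives
\[
\E[d(\sigma_0', \sigma_\ell')] \;\le\; \sum_{i=1}^\ell \E[d(\sigma_{i-1}', \sigma_i')] \;\le\; \sum_{i=1}^\ell (1 - \alpha)\, d(\sigma_{i-1}, \sigma_i) \;=\; (1 - \alpha)\,\ell \;=\; (1 - \alpha)\, d(\sigma, \tau),
\]
which is the desired contraction for the full coupling on $\Omega^2$.

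There is no serious obstacle here; the only point that requires care is checking that the gluing step yields a genuine joint distribution rather than an inconsistent patchwork. This rests on two observations: the partial coupling's marginal on each coordinate is the true one-step distribution of the chain (so the conditionals chain coherently along the path), and $\Omega$ includes improper colorings (so the interpolation $\sigma_0, \ldots, \sigma_\ell$ always exists regardless of whether $\sigma$ or $\tau$ are proper). Together these let the neighbor-only contraction bootstrap into a global contraction.
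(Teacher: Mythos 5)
Your proposal is the standard Bubley--Dyer path-coupling argument (interpolate through neighboring pairs in $\Omega = [k]^V$, glue the partial couplings via their conditionals, and sum the per-edge contractions using the triangle inequality), and it is correct; the key consistency check -- that each coordinate's marginal under the partial coupling is the true one-step distribution, so the conditionals chain coherently -- is handled properly. The paper itself cites this theorem from \cite{bubley1997path} without reproving it, so there is nothing to compare beyond noting that your argument is the canonical one and that you correctly use the extension of the state space to improper colorings to guarantee the interpolating path exists.
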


\begin{remark}
	The reason we need to extend the state space from $\Omega^*$ to $\Omega$ is that in the context of path coupling, we want to extend the premetric of all neighboring colorings to the Hamming metric. But given two colorings $\sigma,\tau$ for which $d(\sigma,\tau) = \ell$, there does not necessarily exist a sequence of \emph{proper} colorings $\sigma = \sigma_0, \sigma_1, ..., \sigma_{\ell} = \tau$ for which $\sigma_i$ and $\sigma_{i+1}$ are neighboring for all $0\le i < \ell$. However, there certainly exist such sequences if we allow the colorings to be improper.

	This is a standard fix that comes up in typical applications of path coupling to sampling colorings. As noted by the authors in these applications, the stationary distribution for this Markov chain is still the uniform distribution over proper colorings. This is because if we start at a proper coloring, we only ever visit proper colorings, and if start at an improper coloring, we eventually reach a proper coloring, so the support of the stationary distribution consists only of proper colorings.
\end{remark}

Jerrum's $k\ge 2d$ bound \cite{jerrum1995very} and Vigoda's $k\ge(11/6)d$ bound \cite{vigoda2000improved} can both be proved via \emph{one-step} path couplings. Yet there is substantial evidence that multi-step couplings are far stronger than one-step couplings. As shown in \cite{kumar2001coupling}, there exist Markov chains for some sampling problems where one-step coupling analysis is insufficient. \cite{czumaj1999delayed} used multi-step coupling for a tighter analysis of a Markov chain for sampling random permutations, and the celebrated $k\ge(1+\epsilon)d$ result of \cite{hayes2003non} for $\Omega(\log n)$-degree graphs uses a multi-step coupling which is constructed by looking into future time steps.

There are also several other works that carried out a multi-step coupling analysis by looking at one-step coupling over multiple time steps \cite{dyer2001extension,dyer2002very,hayes2007variable} and obtained slight improvements over Jerrum's $k\ge 2d$ bound by terminating path coupling of the Glauber dynamics at a random stopping time. In the literature, this is known as \emph{variable-length coupling}, and this is the approach we take, but for the flip dynamics.


\begin{defn}[Definition 1 in \cite{hayes2007variable}]
	For every initial neighboring coloring pair $(\sigma^{(0)},\tau^{(0)})$, let $(\overbar{\sigma},\overbar{\tau},\Tstop)$ be a random variable where $\Tstop$ is a nonnegative integer, and $\overbar{\sigma},\overbar{\tau}$ are sequences of colorings $(\sigma^{(0)},...,\sigma^{(\Tstop)})\in\Omega^{\Tstop}$ and $(\tau^{(0)},...,\tau^{(\Tstop)})\in\Omega^{\Tstop}$ respectively. We say that $(\overbar{\sigma},\overbar{\tau},\Tstop)$ is a \emph{variable-length path coupling} if $\overbar{\sigma},\overbar{\tau}$ are faithful copies of the Markov chain in the following sense.

	For $(\sigma^{(0)},\tau^{(0)})$ and $t\ge 0$, define the random variables $\sigma_t, \tau_t$ via the following experiment: 1) sample $(\overbar{\sigma},\overbar{\tau},\Tstop)$, 2) if $t\le T$, define $\sigma_t = \sigma^{(t)}, \tau_t = \tau^{(t)}$, 3) if $t > \Tstop$, then sample $\sigma_t$ and $\tau_t$ from $P^{t-\Tstop}(\sigma^{(\Tstop)},\cdot)$ and $P^{t-\Tstop}(\tau^{(\Tstop)},\cdot)$ respectively.

	We say that $\overbar{\sigma}$ (resp. $\overbar{\tau}$) is a \emph{faithful copy} if for every neighboring coloring pair $(\sigma^{(0)},\tau^{(0)})$ and $t\ge 0$, $\sigma_t$ and $\tau_t$ defined above are distributed according to $P^t(\sigma^{(0)},\cdot)$ and $P^t(\tau^{(0)},\cdot)$ respectively.\label{def:varlength}
\end{defn}

Note that when $\Tstop$ is always equal to some fixed $T$, then this is just the usual notion of $T$-step path coupling. The following is an extension of the path coupling theorem of \cite{bubley1997path} to variable-length path couplings.

\begin{thm}[Corollary 4 of \cite{hayes2007variable}]
	For a variable-length path coupling $(\overbar{\sigma},\overbar{\tau},\Tstop)$, let $$\alpha\triangleq 1 - \max_{\sigma^{(0)},\tau^{(0)}}\E[d(\sigma^{(\Tstop)},\tau^{(\Tstop)})],\quad W \triangleq\max_{\sigma^{(0)},\tau^{(0)},t\le\Tstop} d(\sigma^{(t)},\tau^{(t)}), \quad\beta\triangleq \max_{\sigma^{(0)},\tau^{(0)}}\E[\Tstop].$$ If $\alpha > 0$, then the mixing time satisfies $\tau_{mix}\le 2\left\lceil 2\beta W/\alpha\right\rceil\cdot\left\lceil\ln(n)/\alpha\right\rceil.$\label{thm:hayesvigoda}
\end{thm}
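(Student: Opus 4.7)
The plan is to combine Bubley--Dyer path coupling with iterated application of the given variable-length coupling, then control real-time progress via Markov's inequality and conclude with the coupling inequality.

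First, I would extend the partial coupling from neighboring pairs to all of $\Omega^2$. Given $(\sigma^{(0)},\tau^{(0)})$ with $d(\sigma^{(0)},\tau^{(0)})=\ell$, pick a path of neighboring colorings $\sigma^{(0)}=\zeta_0,\zeta_1,\ldots,\zeta_\ell=\tau^{(0)}$ and apply the given partial coupling $(\overbar{\zeta}_i,\overbar{\zeta}_{i+1},\Tstop^{(i)})$ to each consecutive pair. Since the $\Tstop^{(i)}$ may differ, synchronize by a common stopping time $\Tstop^*\triangleq\max_i \Tstop^{(i)}$, letting each chain coast under its own single-chain dynamics after its segment's coupling has terminated; Definition~\ref{def:varlength} (faithfulness) keeps marginals correct because $\Tstop^{(i)}$ depends only on its own segment. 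Triangle inequality and linearity of expectation then give $\E[d(\sigma^{(\Tstop^*)},\tau^{(\Tstop^*)})]\le(1-\alpha)\ell$.

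Second, divide real time into $K\triangleq\lceil\ln(n)/\alpha\rceil$ rounds of length $T_{\text{round}}\triangleq 2\lceil 2\beta W/\alpha\rceil$. In each round, run one epoch of the extended coupling starting from the current pair. By Markov applied to each neighboring segment's $\Tstop^{(i)}$ (whose expectation is at most $\beta$), with high probability the epoch finishes within $T_{\text{round}}$ and the expected Hamming distance contracts by $1-\alpha$; on the rare complementary event, each segment's intermediate distance is capped by $W$, contributing only an additive slack that is absorbed into the contraction rate. After $K$ such rounds the initial distance (at most $n$) shrinks to $\E[d]\le 1/(2e)$, and the coupling inequality $\tvd(\sigma^{(T)},\pi)\le\Pr[\sigma^{(T)}\ne\tau^{(T)}]\le\E[d(\sigma^{(T)},\tau^{(T)})]$ then yields the stated bound on $\tau_{mix}$.

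The main obstacle I expect is the second step. The subtle point is that the extended coupling's stopping time $\Tstop^*$ is a maximum over $\ell$ random variables, so a naive bound gives $\E[\Tstop^*]\le\ell\beta$, which is far too large when $\ell$ is close to $n$. The role of $W$ is precisely to close this gap: by capping the mid-round distance at $W$, the ``Markov failure'' probability per segment is only $O(\alpha/W)$, and the total expected additive slack per round is $O(\alpha)$ rather than $O(\ell)$. Verifying this carefully, and showing that the overall per-round contraction rate remains $1-\Omega(\alpha)$, is what forces the balance $T_{\text{round}}=\Theta(\beta W/\alpha)$ appearing in the theorem statement.
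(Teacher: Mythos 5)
The paper does not prove this statement at all --- it is imported verbatim as Corollary 4 of \cite{hayes2007variable} --- so the only question is whether your argument actually establishes it. The second half of your plan (rounds of length $\lceil 2\beta W/\alpha\rceil$, Markov's inequality giving $\Pr[\Tstop>\ell]\le\beta/\ell\le\alpha/2W$, the $W$-cap turning this into an additive slack of $\alpha/2$, hence per-round contraction $1-\alpha/2$ and $2\lceil\ln n/\alpha\rceil$ rounds) is exactly the right bookkeeping and does reproduce the stated constants. But the first step has a genuine gap, and it is the step where the real content of the Hayes--Vigoda theorem lives. You extend the partial coupling to a general pair by coupling each consecutive pair $(\zeta_i,\zeta_{i+1})$ on a path and synchronizing at $\Tstop^*=\max_i\Tstop^{(i)}$, then claim $\E[d(\sigma^{(\Tstop^*)},\tau^{(\Tstop^*)})]\le(1-\alpha)\ell$ by the triangle inequality. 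This does not follow: the contraction guarantee for segment $i$ holds only at that segment's own stopping time $\Tstop^{(i)}$, and between $\Tstop^{(i)}$ and $\Tstop^*$ the two endpoints of that segment ``coast'' under couplings that give no control on their distance, so $d(\zeta_i^{(\Tstop^*)},\zeta_{i+1}^{(\Tstop^*)})$ can be much larger than $d(\zeta_i^{(\Tstop^{(i)})},\zeta_{i+1}^{(\Tstop^{(i)})})$ in expectation. There is also a consistency issue: $\zeta_{i+1}$ is shared between segments $i$ and $i+1$, and its trajectory beyond $\Tstop^{(i)}$ must simultaneously serve as the left endpoint of segment $i+1$'s still-running coupling, while $\Tstop^*$ is not a stopping time for any single segment's filtration.

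The fix used by Hayes and Vigoda inverts the order of operations: they first convert the variable-length coupling for a \emph{single neighboring pair} into a \emph{fixed}-length $\ell$-step coupling, and only then invoke the ordinary Bubley--Dyer composition (which is unproblematic for fixed horizons). Concretely, for a neighboring pair one runs the variable-length coupling; if it has not stopped by time $\ell$ one truncates (distance at most $W$, an event of probability at most $\beta/\ell$ by Markov); if it stops at $\Tstop\le\ell$ with the distance now $d_1$, one decomposes the current pair into $d_1$ neighboring pairs and recursively applies the $(\ell-\Tstop)$-step construction to each, gluing with standard path coupling. Unrolling the recursion gives $\E[d(\sigma^{(\ell)},\tau^{(\ell)})]\le 1-\alpha+W\Pr[\Tstop>\ell]\le 1-\alpha/2$ for $\ell=\lceil 2\beta W/\alpha\rceil$, after which your round-counting goes through. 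So the obstacle is not where you located it (the Markov/$W$ trade-off in step two is fine); it is the composition of variable-length couplings along a path, which must be avoided rather than patched.
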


\subsection{Vigoda's One-Step Coupling as an LP}
\label{subsec:introonestep}

In this section we review the one-step path coupling analysis from \cite{vigoda2000improved} and give an interpretation of Vigoda's argument as implicitly finding a feasible point for a linear program which turns out to be an optimizer. 

Fix a neighboring coloring pair $(G,\sigma,\tau)$. Note that the symmetric difference $D$ between the set of all alternating components $S_{\sigma}(x,c)$ in $\sigma$ and the set of all alternating components $S_{\tau}(x',c)$ in $\tau$ is precisely the alternating components $S_{\sigma}(u,\tau(v))$ and $S_{\sigma}(v,c)$ in $\sigma$ and the alternating components $S_{\tau}(u,\sigma(v))$ and $S_{\tau}(v,c)$ in $\tau$, for all $u\in N(v)$ and colors $c$ appearing in the neighborhood of $v$. All other alternating components are shared between $\sigma$ and $\tau$, so for those, it's enough to use the identity coupling. Note that for colors $c\neq\sigma(v),\tau(v)$ not appearing in $N(v)$, the identity coupling then matches the flip of $S_{\sigma}(v,c)$ to that of $S_{\tau}(v,c)$ so that the two colorings of $G$ become identical.

So the main concern is how to couple the flips of components in $D$. Let $\delta_c$ denote the number times color $c$ appears in the neighborhood of $v$. We can decompose $D$ into $\cup_{c:\delta_c > 0} D_c$, where the sets $D_c$ are defined as follows:

\begin{defn}
	$D_c$ is set of alternating components consisting of $S_{\sigma}(v,c),S_{\tau}(v,c)$, and all $S_{\sigma}(u,\tau(v))$ and $S_{\tau}(u,\sigma(v))$ for all $c$-colored neighbors $u$ of $v$.\label{def:Dc}
\end{defn}

Informally, $D_c$ is the subset of $D$ that involves the color $c$. It's easy to see that for $c\neq \sigma(v)$, $$S_{\sigma}(v,c) = \left(\bigcup_{u\in N(v): \sigma(u) = c}S_{\tau}(u,\sigma(v))\right)\cup\{v\},$$ and when $c = \sigma(v)$, $S_{\sigma}(v,c), S_{\tau}(u,\sigma(v))= \emptyset$. Likewise we have that for $c\neq\tau(v)$, $$S_{\tau}(v,c) = \left(\bigcup_{u\in N(v): \sigma(u) = c}S_{\sigma}(u,\tau(v))\right)\cup\{v\},$$ and when $c = \tau(v)$, $S_{\tau}(v,c), S_{\sigma}(u,\tau(v)) = \emptyset$. The sets $D_c$ are disjoint except possibly the pair $D_{\sigma(v)},D_{\tau(v)}$, as these both contain $(\sigma(v),\tau(v))$-colored alternating components, though we defer this point to later. Another subtlety is that there may exist multiple neighbors $u_1, \cdots, u_m\in N(v)$ which are colored $c$ but which satisfy $S_{\tau}(u_1,\sigma(v)) = \cdots = S_{\tau}(u_m,\sigma(v))$; to guarantee that the flip of each component is considered exactly once, redefine $S_{\tau}(u_i,\sigma(v)) = \emptyset$ for all $1 < i\le m$. Handle the components $S_{\sigma}(u_i,\tau(v))$ analogously.

In \cite{vigoda2000improved}, Vigoda couples flips of alternating components within $D_c$ as follows. First we require some notation. For $c$ such that $\delta_c > 0$, define $A_c := |S_{\sigma}(v,c)|$, $B_c := |S_{\tau}(v,c)|$, $\vec{a}^c := (|S_{\tau}(u,\sigma(v))|)_{u\in N(v): \sigma(u) = c}$, and $\vec{b}^c := (|S_{\sigma}(u,\tau(v))|)_{u\in N(v): \sigma(u) = c}$ for every color $c$ in the neighborhood of $v$. Also define $a^c_{max} = \max_i a^c_i$ and denote the maximizing $i$ by $i^c_{max}$. Likewise define $b^c_{max} = \max_j b^c_j$ and denote the maximizing $j$ by $j^c_{max}$. When it is clear from context that we are just focusing on a generic color $c$, we will refer to these as $A,B,\vec{a},\vec{b}, a_{max}, i_{max},b_{max},j_{max}$, and we will refer to the $c$-colored neighbors of $v$ as $u_1, \cdots, u_{\delta_c}$ and the corresponding entries in $\vec{a},\vec{b}$ as $a_1,\cdots, a_{\delta_c}$ and $b_1,\cdots, b_{\delta_c}$. Naively we have the bounds \begin{equation}1 + a_{max}\le A \le 1+ \sum_i a_i, \ \ \ \ \ 1 + b_{max}\le B\le 1 + \sum_i b_i.\label{eq:trivABbound}\end{equation}

Note that $S_{\sigma}(v,c)$ and $S_{\tau}(v,c)$ can be quite different but $S_{\sigma}(v,c) \supset S_{\tau}(u_i,\sigma(v))$ so it is easier to understand the overlap between these two components. Among all choices of $i$, this overlap is maximized for $i = i_{max}$, and the idea of Vigoda's coupling is thus to greedily couple the flips of the biggest components, i.e. $S_{\sigma}(v,c), S_{\tau}(v,c)$, to the flips of the next biggest components, i.e. $S_{\tau}(u_{i_{max}},\sigma(v)), S_{\sigma}(u_{j_{max}},\tau(v))$, and then to couple together as closely as possible the flips of $S_{\sigma}(u_i,\tau(v))$ and $S_{\tau}(u_i,\sigma(v))$ for each $i\in[\delta_c]$. Formally, assuming $p_1 \geq p_2 \geq \cdots $ we have:

\begin{enumerate}
	\item Flip $S_{\sigma}(v,c)$ and $S_{\tau}(u_{i_{max}},\sigma(v))$ together with probability $p_A$.
	\item Flip $S_{\tau}(v,c)$ and $S_{\sigma}(u_{j_{max}},\tau(v))$ together with probability $p_B$.
	\item For $i\in[\delta_c]$, define \begin{equation}
		q_i = \begin{cases}
			p_{a_i} - p_A & \text{if $i = i_{max}$} \\
			p_{a_i} & \text{otherwise}
		\end{cases}
		\label{eq:qi}
	\end{equation}

	\begin{equation}
		q'_i = \begin{cases}
			p_{b_i} - p_B & \text{if $i = j_{max}$} \\
			p_{b_i} & \text{otherwise}
		\end{cases}
		\label{eq:qprimei} 
	\end{equation}

	Note that $q_i$ and $q'_i$ are the remaining probability associated to flips $S_{\tau}(u_i,\sigma(v))$ and $S_{\sigma}(u_i,\tau(v))$ respectively.
	
	\begin{enumerate}
		\item Flip $S_{\tau}(u_i,\sigma(v))$ and $S_{\sigma}(u_i,\tau(v))$ together with probability $\min(q_i,q'_i)$
		\item Flip only $S_{\tau}(u_i,\sigma(v))$ together with probability $q_i - \min(q_i,q'_i)$
		\item Flip only $S_{\sigma}(u_i,\tau(v))$ together with probability $q'_i - \min(q_i,q'_i)$
	\end{enumerate}
\end{enumerate}

Coupled moves 1) and 2) change the Hamming distance by at most $A - a_{max} - 1$ and $B - b_{max} - 1$ respectively (with equality, for example, if $G$ is a tree rooted at $v$). For any given $i\in[\delta_c]$, coupled move 3a) changes the Hamming distance by $a_i + b_i - 1$, while coupled moves 3b) and 3c) change the Hamming distance by $a_i$ and $b_i$ respectively. For $A,B,\vec{a},\vec{b}$, define

\begin{equation}
	H(A,B,\vec{a},\vec{b}) = (A - a_{max} - 1)p_A + (B - b_{max} - 1)p_B + \sum_i f(u_i), \label{eq:H}
\end{equation}

where \begin{equation}
	f(u_i) = a_iq_i + b_iq'_i - \min(q_i,q'_i)\label{eq:fi}
\end{equation}

The above discussion implies that for $c\neq\sigma(v),\tau(v)$ appearing in the neighborhood of $v$, \begin{equation}\E[d(\sigma',\tau') - 1\vert X_c]\le H(A_c,B_c,\vec{a}^c,\vec{b}^c),\label{eq:mainvigodaineqcomponent}\end{equation} where $X_c$ is the random event that the coupling flips components in $D_c$ in both colorings. For $c$ not appearing in the neighborhood of $v$, the Hamming distance will not change if alternating components containing the color $c$ are flipped in both colorings, as the coupling is the identity on these components, except if $v$ is flipped to $c$ in both colorings, in which case the Hamming distance decreases by 1.

Lastly, we review how the case of $c = \sigma(v),\tau(v)$ and $D_{\sigma(v)}\cup D_{\tau(v)}\neq\emptyset$ is handled in \cite{vigoda2000improved}. This is the main place where one needs to be careful about the fact that neighboring coloring pairs $\sigma,\tau$ need not be proper.

\begin{remark}
	When $c =\sigma(v),\tau(v)$, we must make sure not to double count flips, as it is possible that $D_{\sigma(v)}$ and $D_{\tau(v)}$ share alternating components. In this remark, suppose $D_{\sigma(v)}\cap D_{\tau}(v)\neq\emptyset$. This can only happen if there exist $x_i,y_j\in N(v)$ colored $\sigma(v),\tau(v)$ respectively for which $S_{\sigma}(v,\tau(v)) = S_{\sigma}(x_i,\tau(v))$ and $S_{\tau}(v,\sigma(v)) = S_{\tau}(x_i,\sigma(v))$. To avoid double counting, Vigoda sets $S_{\sigma}(v,\tau(v)) = S_{\tau}(y_j,\sigma(v)) = \emptyset$ in this case. The bound \eqref{eq:mainvigodaineqcomponent} then holds for both $c = \sigma(v),\tau(v)$. The only difference is that some values among $A_c,B_c$ and the entries of $\vec{a}^c,\vec{b}^c$ will be zero, in which case we take $p_0 = 0$.

	Specifically, for $c = \tau(v)$, we have $A_c = 0$, $B_c = b^c_{max} = 0$, and at least one $a^c_j$ is zero, namely the one corresponding to the component $S_{\tau}(y_j,\sigma(v)) = S_{\tau}(v,\sigma(v))$. In this case one can check that $H(A_c,B_c,\vec{a}^c,\vec{b}^c) = \sum a^c_jp_{a^c_j}$, and provided $\alpha p_{\alpha}\le 1$ for all $\alpha$, this is at most $\delta_c - 1$.

	For $c = \sigma(v)$, we have $A_c = 0$, $a^c_i = 0$ for all $i$, and $B_c = \sum_j b^c_j$. Let $j^*$ be the index of the unique neighbor $u_{j^*}$ of $v$ for which $S_{\tau}(v,\sigma(v)) = S_{\sigma}(u_{j^*},\sigma(v))$. Then because $S_{\sigma}(u_{j^*},\sigma(v))$ contains $v$, we need to modify the definition of $b^c_{max}$. Let $b^c_{max} = \max_j(b^c_j - \mathbb{I}_{j = j^*})$ and denote the maximizing $j$ by $j^c_{max}$. Then the lower bound on $B_c$ in \eqref{eq:trivABbound} still holds, and \eqref{eq:mainvigodaineqcomponent} still holds. Moreover, if $\delta_c = 1$, then $\E[d(\sigma',\tau') - 1|X_{\sigma(v)}] = H(A_c,B_c,\vec{a}^c,\vec{b}^c) = 0$.

	\label{remark:specialcase}
\end{remark}

Henceforth, we will refer to the coupling defined above as the \emph{greedy coupling}. We can conclude the following, implicit in \cite{vigoda2000improved}:

\begin{lem}
	Let $(\sigma,\tau)\mapsto(\sigma',\tau')$ be the greedy coupling. Then \begin{equation}\E[d(\sigma',\tau') - 1] \le\frac{1}{nk}\left(-|\{c:\delta_c = 0\}| + \sum_{c: \delta_c\neq 0}H(A_c,B_c,\vec{a}^c,\vec{b}^c)\right).\label{eq:mainvigodaineq}\end{equation}
\end{lem}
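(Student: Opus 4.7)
The plan is to apply linearity of expectation, decomposing $\E[d(\sigma', \tau') - 1]$ as a sum of contributions from each coupled move of the greedy coupling and showing that the total bound follows from the per-color analysis already set up in the text. Each $(w, c) \in V \times [k]$ event has marginal probability $1/nk$ in either chain, so every coupled move contributes its expected distance change weighted by $1/nk$, and I would index the contributions by the color $c$.

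For colors $c$ with $\delta_c = 0$, I would show that each contributes exactly $-1/nk$ by a case analysis: if $c \notin \{\sigma(v), \tau(v)\}$, then $S_\sigma(v, c) = S_\tau(v, c) = \{v\}$ is shared, and identity coupling flips both singletons together, recoloring $v$ to $c$ in both chains and decreasing the Hamming distance by one; if $c = \sigma(v)$, then $S_\sigma(v, c) = \emptyset$ but $S_\tau(v, c) = \{v\}$, so the unilateral flip in $\tau$ recolors $v$ to $\sigma(v)$ and again drops the distance by one; the case $c = \tau(v)$ is symmetric. Under the convention $p_1 = 1$ of the flip dynamics, the total contribution from this bucket is $-|\{c : \delta_c = 0\}|/nk$.

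For colors $c$ with $\delta_c > 0$, the conditional contribution is already at most $H(A_c, B_c, \vec{a}^c, \vec{b}^c)$ by \eqref{eq:mainvigodaineqcomponent}, obtained by summing the per-move distance changes for moves 1, 2, and 3a--3c, weighted by their respective flip probabilities $p_A$, $p_B$, $q_i$, $q'_i$, and $\min(q_i, q'_i)$ as defined in \eqref{eq:H} and \eqref{eq:fi}. Multiplying by the $1/nk$ per-event probability, summing over colors, and observing that the remaining shared components (those missing $v$) contribute zero since identity coupling preserves the Hamming disagreement at $v$, yields \eqref{eq:mainvigodaineq}.

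I expect the main obstacle to be the bookkeeping around the overlap $D_{\sigma(v)} \cap D_{\tau(v)}$, which I would handle via the convention of Remark~\ref{remark:specialcase}: setting certain components to $\emptyset$ to avoid double-counting the shared $(\sigma(v), \tau(v))$-alternating components, redefining $b^c_{max}$ for $c = \sigma(v)$ to account for $v$ lying inside $S_\sigma(u_{j^*}, \sigma(v))$, and checking that \eqref{eq:mainvigodaineqcomponent} still holds under the convention $p_0 = 0$. These adjustments are essentially carried out in that remark and require care rather than new ideas.
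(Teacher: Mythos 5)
Your proposal is correct and follows essentially the same route as the paper, which presents this lemma as the direct conclusion of the preceding discussion: the decomposition of the symmetric difference $D$ into the sets $D_c$, the per-color bound \eqref{eq:mainvigodaineqcomponent}, the observation that colors with $\delta_c=0$ each contribute $-1/nk$ via the coupled recoloring of $v$, and the double-counting conventions of Remark~\ref{remark:specialcase}. No substantive differences.
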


The function $H$ implicitly depends on the choice of flip probabilities $\{p_{\alpha}\}$, while $A_c,B_c,\vec{a}^c,\vec{b}^c$ depend on $(G,\sigma,\tau)$. The remaining analysis in \cite{vigoda2000improved} once \eqref{eq:mainvigodaineq} has been deduced essentially boils down to picking $\{p_{\alpha}\}$\footnote{Note that one must set $p_1 = 1$ because otherwise, by rescaling all flip probabilities by a factor of $1/p_1$, the mixing time simply scales by a factor of $1/p_1$.} to minimize the right-hand side of \eqref{eq:mainvigodaineq} over all graphs $G$ of max-degree $d$ and $k$-colorings $\sigma,\tau$ of $G$. The following gives terminology for quantifying over all such $(G,\sigma,\tau)$.

\begin{defn}
	$(A,B,\vec{a},\vec{b})$ is \emph{realizable} if there exists a neighboring coloring pair $(G,\sigma,\tau)$ and color $c$ such that $(A,B,\vec{a},\vec{b}) = (A_c,B_c,\vec{a}^c,\vec{b}^c)$.
\end{defn}

Vigoda's remaining analysis can thus be interpreted as solving the following linear program.

\begin{lp}
	For variables $\{p_{\alpha}\}_{\alpha\in\N}$ and $\lambda$, minimize $\lambda$ subject to the following constraints: $0\le p_{\alpha}\le p_{\alpha-1}\le p_1 = 1$ for all $\alpha\ge 2$, and for all realizable $(A,B,\vec{a},\vec{b})$, define a constraint \begin{equation}
		 H(A,B,\vec{a},\vec{b})\le -1 + \lambda\cdot m,
	\end{equation} where $m$ is the number of entries in $\vec{a}$. Note that we can implement the $\min(\cdot)$ terms in the definition of $f(\cdot)$ by introducing the appropriate auxiliary variables so that the above is still a linear program.
	\label{def:prelp}
\end{lp}

There are three minor issues with this linear program: $(a)$ the linear program has an infinite number of variables, $(b)$ it has an infinite number of constraints, and $(c)$ given $\vec{a},\vec{b}$, it is not immediately obvious how to enumerate all $A,B$ for which $(A,B,\vec{a},\vec{b})$ is realizable.

He handles $(a)$ by restricting to flips of components of size at most $N_{max}$, i.e. by fixing some small constant $N_{max}$ and insisting that \begin{equation}p_{\alpha} = 0 \ \forall \alpha > N_{max}.\label{eq:NMax}\end{equation}

He handles $(b)$ by shrinking the feasible region as follows. Define $$g(u_i) = a_ip_{a_i} + b_ip_{b_i} - \min(p_{a_i},p_{b_i})$$ and note the following.

\begin{lem}
$H(A,B,\vec{a},\vec{b})\le (A - 2)p_A + (B - 2)p_B + \sum_i g(u_i)$.\label{lem:crude}
\end{lem}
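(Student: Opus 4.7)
The plan is to prove the inequality by reducing it to three separate pieces corresponding to the three summands on the right-hand side. Writing out $H$ explicitly via \eqref{eq:H}, the claim splits into showing
\[ (A - a_{max} - 1)p_A \le (A-2)p_A, \quad (B - b_{max} - 1)p_B \le (B-2)p_B, \quad \text{and} \quad f(u_i) \le g(u_i) \text{ for every } i. \]
The first two reduce to $a_{max} \ge 1$ and $b_{max} \ge 1$ (combined with $p_A, p_B \ge 0$), which hold in the main case where the component sizes are positive. So the real content of the lemma is the pointwise comparison $f(u_i) \le g(u_i)$.

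For this pointwise bound, the key observation is that $q_i \le p_{a_i}$ and $q'_i \le p_{b_i}$, with equality unless $i \in \{i_{max}, j_{max}\}$. Hence for all other $i$, $f(u_i) = g(u_i)$ on the nose and there is nothing to prove. For the exceptional indices, I argue via monotonicity of the auxiliary function $h(x,y) \triangleq a_i x + b_i y - \min(x,y)$. Away from the diagonal, the partial derivatives are $a_i - \mathbb{I}[x < y]$ and $b_i - \mathbb{I}[y < x]$, both nonnegative whenever $a_i, b_i \ge 1$, so $h$ is componentwise nondecreasing. Replacing $(q_i, q'_i)$ by the pointwise-larger $(p_{a_i}, p_{b_i})$---a replacement that only affects the single coordinate(s) corresponding to $i_{max}$ and/or $j_{max}$---then only increases the value of $h$, yielding $f(u_i) = h(q_i, q'_i) \le h(p_{a_i}, p_{b_i}) = g(u_i)$.

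The main obstacle I anticipate is airtight bookkeeping at the degenerate configurations described in Remark~\ref{remark:specialcase}, where $c \in \{\sigma(v),\tau(v)\}$ and some of $A_c, B_c$ or entries of $\vec a^c, \vec b^c$ may be zero. In each such case, however, the convention $p_0 = 0$ forces the corresponding $p_A$ or $p_B$ term to vanish, so the offending summand in $H$ disappears and the coefficient inequalities become trivial equalities $0 \le 0$; and for any index $i$ with $a_i = 0$ that would be $i_{max}$, we have $q_i = p_{a_i} - p_A = 0 = p_{a_i}$, so again $f(u_i) = g(u_i)$. Hence the casework collapses, and the lemma follows from the monotonicity argument above applied in the non-degenerate regime.
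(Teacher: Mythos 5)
Your proof is correct and essentially the same as the paper's: both reduce to the identity $f(u_i)=g(u_i)$ for $i\neq i_{max},j_{max}$ and to the $1$-Lipschitzness of $\min$ at the exceptional indices, absorbing the remaining slack via $a_{max},b_{max}\ge 1$ (the paper merely keeps the correction $(1-a_{max})p_A+(1-b_{max})p_B$ attached to the $f$-versus-$g$ comparison rather than to the coefficient terms). The one caveat is in your degenerate-case paragraph: for $c=\sigma(v)$ with the redefined $b_{max}$ of Remark~\ref{remark:specialcase} one can have $b_{max}=0$ while $p_B=p_1\neq 0$, so the coefficient inequality does not collapse to $0\le 0$ there \---- but the paper's own proof likewise assumes $a_{max},b_{max}\ge 1$, and the lemma is only invoked for tuples satisfying \eqref{eq:trivABbound} with the unmodified definitions, so this does not affect correctness where it is used.
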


\begin{proof}
For $i\neq i_{max},j_{max}$, we have that $g(u_i) = f(u_i)$. If $i_{max} = j_{max} = i$, then $$f(u_i)\le g(u_i) + p_A(-a_{max} + 1) + p_B(-b_{max} + 1),$$ and if $i_{max}\neq j_{max}$, then $$f(u_{i_{max}}) + f(u_{j_{max}}) \le g(u_{i_{max}}) + g(u_{j_{max}}) + p_A(-a_{max} + 1) + p_B(-b_{max} + 1).$$ Now we have $$\sum_i f(u_i)\le \sum_i g(u_i) + p_A(-a_{max} + 1) + p_B(-b_{max} + 1),$$ from which the desired bound follows after noting that $a_{max},b_{max}\ge 1$.\end{proof}

Whereas $f(u_i)$ are linear functions of $p_{a_i},p_{b_i},p_A,p_B$, $g(u_i)$ is simply a linear function of $p_{a_i},p_{b_i}$. So we can pick some $m^*$ (\cite{vigoda2000improved} picks $m^* = 3$) and replace the infinitely many constraints for which $m\ge m^*$ in Linear Program~\ref{def:prelp} with finitely many constraints to optimize the upper bound on $H(A,B,\vec{a},\vec{b})$ in Lemma~\ref{lem:crude}.

Finally, Vigoda implicitly handles $(c)$ above by relaxing the requirement of realizability as follows. To cover all constraints corresponding to $c\neq\sigma(v),\tau(v)$, include \begin{equation}H(A,B,\vec{a},\vec{b})\le -1 + \lambda\cdot m\label{eq:mainconstraint}\end{equation} for all $m < m^*$ and $(A,B,\vec{a},\vec{b})$ for which $\vec{a},\vec{b}\in\{0,1,\cdots,N_{max}\}^m\backslash\{(0,0,\cdots 0)\}$ and $A,B$ satisfy \eqref{eq:trivABbound}. To cover all constraints corresponding to $c=\sigma(v)$, include the constraint \begin{equation}(B - b_m)p_B + \sum^{m-1}_{i=1}b_ip_{b_i}\le -1 + \lambda\cdot m\label{eq:Hforsigmav}\end{equation} for all $2\le m < m^*$, $0\le b_1\le\cdots\le b_m\le N_{max}$ where $b_m > 0$, and $B = \sum_i b_i$. Indeed we know by Remark~\ref{remark:specialcase} that for $c = \sigma(v)$, any realizable $(A_c,B_c,\vec{a}^c,\vec{b}^c)$ satisfies $H(A_c,B_c,\vec{a}^c,\vec{b}^c) = 0$ if $\delta_c = 1$, and satisfies $A_c = 0$, $\vec{a}^c = (0,...,0)$, $B_c = \sum_i b^c_i$, and $$H(A_c,B_c,\vec{a}^c,\vec{b}^c) = (B_c - b^c_{max} - 1)p_{B_c} + \sum_{i\neq j_{max}} b^c_ip_{b^c_i} + b_{j_{max}}(p_{b^c_{j_{max}}} - p_{B_c}) \le (B_c - b^c_m)p_{B_c} + \sum b^c_ip_{b^c_i}$$ if $\delta_c > 1$. So this relaxation covers all constraints corresponding to $c = \sigma(v)$. And to cover all constraints corresponding to $c = \tau(v)$, merely include the constraint \begin{equation}\alpha\cdot p_{\alpha}\le 1 \ \forall\alpha\in\N.\label{eq:pap}\end{equation} As observed in Remark~\ref{remark:specialcase}, when $c = \tau(v)$, \eqref{eq:pap} ensures that $H(A_c,B_c,\vec{a}^c,\vec{b}^c)\le\delta_c - 1$, and for any $\lambda > 1$ (corresponding to $k > d$, which is the regime we are interested in to begin with), we automatically have that $\delta_c - 1 < -1 + \lambda\cdot\delta_c$.

Concretely, we have the following linear program.

\begin{lp}
	Fix some $N_{max}\ge 1$ and $m^*\ge 2$. For variables $\{p_{\alpha}\}_{\alpha\in\N}$ and $\lambda$, and dummy variables $x, y$, minimize $\lambda$ subject to the following constraints: $0\le p_{\alpha}\le p_{\alpha-1}\le p_1 = 1$ for all $\alpha\ge 2$, constraint \eqref{eq:NMax}, constraint \eqref{eq:mainconstraint} for all $m < m^*$ and $(A,B,\vec{a},\vec{b})$ for which $\vec{a},\vec{b}\in\{0,1,\cdots,N_{max}\}^m\backslash\{(0,0,\cdots 0)\}$ and $A,B$ satisfy \eqref{eq:trivABbound}, constraint \eqref{eq:Hforsigmav} for all $2\le m < m^*$ and $0\le b_1\le\cdots\le b_m\le N_{max}$ where $b_m > 0$ and $B = \sum_i b_i$, constraint \eqref{eq:pap}, and constraints \begin{equation*}x\ge (A - 2)p_A\end{equation*} \begin{equation*}y\ge a\cdot p_a + b\cdot p_b - \min(p_a,p_b)\end{equation*} \begin{equation}-1 + \lambda\cdot m^* \ge 2x + m^*\cdot y\label{eq:approxconstraint}\end{equation} for every $A,a,b$ satisfying $0\le A\le 1+N_{max}$ and $0\le a < b\le N_{max}$.
	\label{def:lp}
\end{lp}

\begin{remark}
	Note that we only add in constraints for the upper bound of Lemma~\ref{lem:crude} in the case of $m = m^*$ (constraint \eqref{eq:approxconstraint}), but this is because the constraint for $m = m^*$ immediately implies the corresponding constraints for $m > m^*$.
\end{remark}

From the above discussion, the constraints of Linear Program~\ref{def:lp} are stronger than those of Linear Program~\ref{def:prelp}. It follows that

\begin{obs}
	The objective value of Linear Program~\ref{def:prelp} is at most that of Linear Program~\ref{def:lp}. Moreover, the objective value of Linear Program~\ref{def:lp} is non-increasing in $m^*$.\label{obs:comparelambdas}
\end{obs}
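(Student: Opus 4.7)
Both claims amount to showing that certain constraint sets are weaker or stronger than one another, so the whole proof is an ``all constraints of LP $X$ are implied by those of LP $Y$'' verification rather than anything genuinely clever. I will handle the two statements in turn.

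For the first part, I would fix an arbitrary feasible point $(\{p_\alpha\},\lambda)$ of Linear Program~\ref{def:lp} and check that it satisfies every constraint of Linear Program~\ref{def:prelp}, i.e.\ $H(A,B,\vec a,\vec b)\le -1+\lambda m$ for every realizable tuple. The case split is the one flagged in the text surrounding the definition of LP~\ref{def:lp}. If the color $c$ associated to the tuple equals $\tau(v)$, then Remark~\ref{remark:specialcase} gives $H\le \delta_c-1$, and \eqref{eq:pap} together with $\lambda>1$ (which we may assume, since we are working above the trivial $k>d$ threshold) yields $\delta_c-1<-1+\lambda\delta_c$. If $c=\sigma(v)$, then Remark~\ref{remark:specialcase} reduces $H$ to the left-hand side of \eqref{eq:Hforsigmav} (or to $0$ when $\delta_c=1$, which trivially obeys the bound), so the relevant constraint in LP~\ref{def:lp} directly gives what we need. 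For $c\notin\{\sigma(v),\tau(v)\}$ with $m<m^*$, the bound \eqref{eq:trivABbound} is a consequence of realizability and $N_{\max}$ is an enforced upper bound on component sizes, so the quantified constraint \eqref{eq:mainconstraint} in LP~\ref{def:lp} already includes the realizable $(A,B,\vec a,\vec b)$ verbatim. Finally, for $m\ge m^*$, Lemma~\ref{lem:crude} bounds $H(A,B,\vec a,\vec b)$ by $(A-2)p_A+(B-2)p_B+\sum_i g(u_i)$, which in turn is at most $2x+m\cdot y$ for the auxiliary maximizers $x,y$ in LP~\ref{def:lp}; constraint \eqref{eq:approxconstraint} (noting $m\ge m^*$ and $\lambda>0$) then dominates this, closing the case.

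For the monotonicity in $m^*$, I would argue that increasing $m^*$ by one can only weaken the constraint set, hence can only shrink the optimum. Concretely, passing from $m^*$ to $m^*+1$ does three things: the direct constraints \eqref{eq:mainconstraint} for $m<m^*$ are kept unchanged; a new batch of direct constraints \eqref{eq:mainconstraint} is added at $m=m^*$; and the crude constraint \eqref{eq:approxconstraint}, which previously had coefficient $m^*$, is replaced by the same constraint with coefficient $m^*+1$. For the newly added batch at $m=m^*$, Lemma~\ref{lem:crude} shows that the old \eqref{eq:approxconstraint} with parameter $m^*$ already implied each such direct constraint, so every feasible point of the smaller-$m^*$ LP automatically satisfies them. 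For the replacement of \eqref{eq:approxconstraint}, subtracting the two versions shows that the $m^*+1$ version is implied by the $m^*$ version (this is precisely the remark in the paper that ``the constraint for $m=m^*$ immediately implies the corresponding constraints for $m>m^*$''), so again feasibility is preserved. Thus the feasible region for $m^*+1$ contains the feasible region for $m^*$, and the objective, a minimization of the single variable $\lambda$, is non-increasing.

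I do not expect any real obstacle here: once one writes down the LPs explicitly, both inclusions are bookkeeping exercises that invoke Remark~\ref{remark:specialcase} and Lemma~\ref{lem:crude}. The only mildly delicate point is remembering that LP~\ref{def:lp} handles the cases $c=\sigma(v)$ and $c=\tau(v)$ through separate constraints \eqref{eq:Hforsigmav} and \eqref{eq:pap} rather than through \eqref{eq:mainconstraint}, so the first part of the argument must explicitly invoke Remark~\ref{remark:specialcase} to justify that these specialized constraints cover the corresponding realizable tuples in LP~\ref{def:prelp}.
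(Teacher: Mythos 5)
Your proposal is correct and follows essentially the same route as the paper, whose proof of Observation~\ref{obs:comparelambdas} consists of asserting that the preceding discussion (Remark~\ref{remark:specialcase}, Lemma~\ref{lem:crude}, and the remark after Linear Program~\ref{def:lp}) shows every constraint of Linear Program~\ref{def:prelp} is implied by those of Linear Program~\ref{def:lp}, exactly the bookkeeping you carry out. One small imprecision: the $m^*$-version of \eqref{eq:approxconstraint} implies the $m$-version for $m>m^*$ because $\lambda\ge y$ (which itself follows from \eqref{eq:approxconstraint} together with $x\ge 0$), not merely because $\lambda>0$ as your parenthetical suggests.
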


The following gives one direction of the connection between the analysis of greedy one-step coupling and the objective value of Linear Program~\ref{def:prelp} in the following lemma.

\begin{lem}\label{lem:contractive}
Let $\lambda_2^*$ be the objective value of Linear Program~\ref{def:lp}.
If $\lambda_2^*\ge 1$ and $k > \lambda_2^*d$, then there exist flip probabilities $\{p_{\alpha}\}_{\alpha\in\N}$ for which $\E[d(\sigma',\tau') - 1] < 0$ for all such $G,\sigma,\tau$, where $(\sigma,\tau)\mapsto(\sigma',\tau')$ is the greedy coupling.\label{lem:vigodaoptimal}
\end{lem}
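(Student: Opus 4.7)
My plan is to chain together inequality \eqref{eq:mainvigodaineq} with the LP constraints of Linear Program~\ref{def:lp}, verifying that when we take $\{p_\alpha\}$ to be the values in an optimal solution with objective $\lambda_2^*$, the per-color bound $H(A_c,B_c,\vec a^c,\vec b^c)\le -1 + \lambda_2^*\cdot \delta_c$ holds for every color $c$ with $\delta_c>0$. Once this is established, summing over $c$ gives
\[
\sum_{c:\delta_c>0} H(A_c,B_c,\vec a^c,\vec b^c)\le -|\{c:\delta_c>0\}| + \lambda_2^*\sum_{c}\delta_c\le -|\{c:\delta_c>0\}| + \lambda_2^* d,
\]
since the total color multiplicity over $N(v)$ equals $d(v)\le d$. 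Plugging this into \eqref{eq:mainvigodaineq} yields
\[
\E[d(\sigma',\tau')-1]\le\frac{1}{nk}\Bigl(-|\{c:\delta_c=0\}|-|\{c:\delta_c>0\}|+\lambda_2^* d\Bigr)=\frac{\lambda_2^* d - k}{nk},
\]
which is strictly negative exactly when $k>\lambda_2^* d$, as desired.

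The bulk of the work is therefore to establish the per-color bound, splitting into three cases as in the preceding discussion. For $c\neq\sigma(v),\tau(v)$ with $\delta_c=m<m^*$, the tuple $(A_c,B_c,\vec a^c,\vec b^c)$ is realizable and hence satisfies $\vec a^c,\vec b^c\in\{0,\ldots,N_{\max}\}^m\setminus\{\vec 0\}$ with $A_c,B_c$ obeying \eqref{eq:trivABbound}, so constraint \eqref{eq:mainconstraint} applies directly. For $c\neq\sigma(v),\tau(v)$ with $\delta_c=m\ge m^*$, I first invoke Lemma~\ref{lem:crude} to write $H\le (A-2)p_A+(B-2)p_B+\sum_i g(u_i)\le 2x+my$ using the dummy-variable bounds from Linear Program~\ref{def:lp}, and then combine with the approximation constraint \eqref{eq:approxconstraint} to push through to $-1+\lambda_2^* m$; the extension from $m=m^*$ to $m>m^*$ is exactly the content of the remark following Linear Program~\ref{def:lp}. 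For $c=\sigma(v)$, Remark~\ref{remark:specialcase} gives $H=0$ when $\delta_c=1$ (which satisfies the desired bound precisely because of the hypothesis $\lambda_2^*\ge 1$), and when $\delta_c>1$ the same remark reduces $H$ to a quantity dominated by the left-hand side of \eqref{eq:Hforsigmav}. For $c=\tau(v)$, Remark~\ref{remark:specialcase} together with \eqref{eq:pap} shows $H\le \delta_c-1$, which is at most $-1+\lambda_2^*\delta_c$ since $\lambda_2^*\ge 1$.

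The main technical subtlety I expect is the $m\ge m^*$ case: the approximation constraint \eqref{eq:approxconstraint} is written only for $m=m^*$, and we need the implication that $2x+m^*y\le -1+\lambda m^*$ combined with an appropriate control on $y$ yields $2x+my\le -1+\lambda m$ for all larger $m$. Concretely, rewriting gives $2x+my\le -1+\lambda m + (m-m^*)(y-\lambda)$, so the desired inequality follows provided $y\le\lambda$ at the optimum; since $\lambda_2^*\ge 1$ and the $y$-constraints involving $p_1=1$ force $y$ to sit at a value consistent with $\lambda_2^*$, this reduces to verifying the standard observation behind the remark. Checking this carefully, and confirming that the case analysis (including the realizability-free relaxation in \eqref{eq:Hforsigmav} and \eqref{eq:pap}) exhausts every color $c$ arising in \eqref{eq:mainvigodaineq}, is the only non-routine part of the argument; everything else is an assembly of the bounds already prepared in Section~\ref{subsec:introonestep}.
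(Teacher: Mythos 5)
Your proposal is correct and follows essentially the same route as the paper: the paper's proof is a two-line application of \eqref{eq:mainvigodaineq} together with the fact (established in the discussion preceding Linear Program~\ref{def:lp}) that its constraints dominate every realizable per-color term $H(A_c,B_c,\vec a^c,\vec b^c)$, which is exactly the case analysis you spell out. The extra care you take with the $m\ge m^*$ extension (via $y\le\lambda$ at any feasible point) and with the $c=\sigma(v),\tau(v)$ cases is just an explicit unpacking of the paper's Remark~\ref{remark:specialcase} and the remark following Linear Program~\ref{def:lp}.
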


\begin{proof}
	Let $\{p_{\alpha}\}_{\alpha\in\N}$ be flip probabilities achieving objective value $\lambda_2^*$. By \eqref{eq:mainvigodaineq} and Lemma~\ref{lem:crude}, we have that $$nk\cdot \E[d(\sigma',\tau'] - 1]\le -|\{c: \delta_c = 0\}| + \sum_{c:\delta_c\neq 0} H(A_c,B_c,\vec{a}^c,\vec{b}^c)\le -k + \lambda^*\cdot d < 0$$ as desired.
\end{proof}

Later we will prove the converse that if $k < \lambda_2^*d$, then then for any flip probabilities $\{p_{\alpha}\}_{\alpha\in\N}$ there exists a neighboring coloring pair $(G,\sigma,\tau)$, such that $\E[d(\sigma',\tau')] > 1$. This is not obvious. It is \emph{a priori} unclear how to conclude this even about Linear Program~\ref{def:prelp} because its constraint set is infinite. It is even less clear how to conclude this about Linear Program~\ref{def:lp} because it is a relaxation of Linear Program~\ref{def:prelp}, so even the objective values of the linear programs need not agree. We address these issues in Section~\ref{subsec:tightconstraints} and show that in fact any optimizer of Linear Program~\ref{def:lp} gives an optimal one-step coupling for the flip dynamics.


In \cite{vigoda2000improved}, Vigoda shows that for $m^* = 3$, $N_{max} = 6$, and the following flip probabilities, Linear Program~\ref{def:lp} attains a value of $11/6$: \begin{equation}
	p_1 = 1, p_2 = 13/42, p_3 = 1/6, p_4 = 2/21, p_5 = 1/21, p_6 = 1/84, p_{\alpha} = 0 \ \forall \alpha\ge 7.\label{eq:vigodasprobs}
\end{equation}

Not only is this a feasible solution to Linear Program~\ref{def:lp}, but it turns out to be optimal. We prove in Section~\ref{sec:tightexamples} that increasing $m^*, N_{max}$ and changing $\{p_{\alpha}\}$ will not yield a better value. In fact, we will see that this remains true even if we restrict our attention to a small subset of the constraints, which will guide us to a family of examples that are tight for any one-step path coupling of the flip dynamics and which will motivate our strategy for breaking Vigoda's $k\ge (11/6)d$ barrier.


\section{A Closer Look at Vigoda's Linear Program}
\label{sec:tightexamples}

\subsection{A Tight Collection of Constraints}
\label{subsec:tightconstraints}

For purposes we explain later in Remark~\ref{remark:altprobs}, in this section we will consider the following flip probabilities rather than those chosen in \cite{vigoda2000improved}. We emphasize that these flip probabilities are chosen for the purposes of the discussion in this section only and are not the flip probabilities we will use in our final analysis.

\begin{equation}
	p_1 = 1, p_2 = \frac{463}{1500}, p_3 = \frac{1}{6}, p_4 = \frac{287}{3000}, p_5 = \frac{29}{600}, p_6 = \frac{71}{3000}, p_{\alpha} = 0 \ \forall \alpha\ge 7.\label{eq:ourprobs1}
\end{equation}
	
One can check that this choice is also feasible for Linear Program~\ref{def:lp} and attains a value of $11/6$ even for $m^* = 3$. What is important about this choice for our purposes is which constraints of Linear Program~\ref{def:lp} they make tight and which are slack:

\begin{obs}
	Let $N_{max}\ge 6$ in Linear Program~\ref{def:lp} and $m^* = 3$. \eqref{eq:pap} is tight under the assignment \eqref{eq:ourprobs1} only for $\alpha = 1$. Among the constraints of the form \eqref{eq:mainconstraint}, the only constraints that are tight under the assignment \eqref{eq:ourprobs1} are those for which either 1) $m = 1$ and either $a_1 = 1$ and $2\le b_1\le 4$ or $b_1 = 1$ and $2\le a_1\le 4$, 2) $m = 2$ and either $(a_1,a_2,b_1,b_2) = (1,1,3,3)$, $A = 3$, and $6\le B\le 7$ or $(a_1,a_2,b_1,b_2) = (3,3,1,1)$, $6\le A\le 7$, and $B = 3$.

	Any other constraints of the form \eqref{eq:mainconstraint} that do not meet these conditions, and all constraints of the form \eqref{eq:Hforsigmav} and \eqref{eq:approxconstraint}, are not tight under the assignment \eqref{eq:ourprobs1}.
	\label{obs:slack}
\end{obs}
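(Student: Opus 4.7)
The claim amounts to a finite verification once the explicit rational values in \eqref{eq:ourprobs1} are substituted: with $m^* = 3$ and $p_\alpha = 0$ for $\alpha \geq 7$, each of the constraint families \eqref{eq:pap}, \eqref{eq:Hforsigmav}, \eqref{eq:approxconstraint}, and \eqref{eq:mainconstraint} involves only indices bounded by $N_{max} \leq 6$ and $m \leq 2$, so each reduces to a bounded list of scalar inequalities that can be checked in exact rational arithmetic against $\lambda = 11/6$. The plan is to organize the proof as four independent case analyses, one per family.

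The three secondary families are quick. For \eqref{eq:pap}, tabulating $\alpha p_\alpha$ for $\alpha = 1, \dots, 6$ gives $1, \frac{463}{750}, \frac{1}{2}, \frac{287}{750}, \frac{29}{120}, \frac{71}{500}$, so only $\alpha = 1$ saturates the bound. For \eqref{eq:Hforsigmav} only $m = 2$ is in range (so the right-hand side is $\frac{8}{3}$), and a table over the pairs $0 \leq b_1 \leq b_2 \leq 6$ with $b_2 > 0$ establishes strict slack. For \eqref{eq:approxconstraint} (right-hand side $\frac{9}{2}$) a similar table over $0 \leq A \leq 7$ and $0 \leq a < b \leq 6$ does the same.

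The bulk of the work is \eqref{eq:mainconstraint}. For $m = 1$, the bounds \eqref{eq:trivABbound} collapse to $A = 1 + a_1$ and $B = 1 + b_1$, so the first two terms in $H$ vanish and the constraint becomes
\[ a_1 (p_{a_1} - p_{a_1+1}) + b_1 (p_{b_1} - p_{b_1+1}) - \min\bigl(p_{a_1} - p_{a_1+1},\, p_{b_1} - p_{b_1+1}\bigr) \leq \tfrac{5}{6}. \]
After symmetrizing in $(a_1, b_1)$ and enumerating the triangle $a_1 \leq b_1$ in $\{0, \dots, 6\}^2 \setminus \{(0,0)\}$, one reads off that equality holds exactly when one of $a_1, b_1$ equals $1$ and the other lies in $\{2, 3, 4\}$, which matches the statement. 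For $m = 2$, the ranges of $A, B$ are intervals of length at most $\sum_i a_i - a_{max}$ and $\sum_i b_i - b_{max}$ respectively, so the total number of tuples is bounded by a few thousand; I would group the enumeration by $(a_{max}, b_{max})$, halve the work using the symmetry swapping $\vec a$ and $\vec b$, and exploit that $p_\alpha$ is non-increasing so that $(A - a_{max} - 1) p_A$ is monotone in $A$ on the admissible interval, reducing each symmetry class to checking only the extremal values of $A$ and $B$. The only saturating cases that emerge are the two announced in the statement, with everything else strictly slack by margins readable off the gaps $p_\alpha - p_{\alpha+1}$.

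The main obstacle is purely combinatorial bookkeeping: ensuring the enumeration is exhaustive, that the ranges dictated by \eqref{eq:trivABbound} are correctly used, and that arithmetic mistakes do not creep into a several-hundred-case computation. To control this I would work throughout over the common denominator $3000$ of \eqref{eq:ourprobs1}, carry out the $m = 2$ enumeration by a small computer-algebra script to produce a certified list of tight and slack constraints, and present the final outcome as a table in the appendix.
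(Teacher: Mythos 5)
Your overall plan---treat the claim as a finite verification in exact rational arithmetic, one case analysis per constraint family---is sound and is essentially the route the paper acknowledges when it says the observation ``can be verified numerically.'' The appendix proof, however, takes a more analytical path: for $m=1$ it bounds the two building blocks $(i-1)(p_i-p_{i+1})\le 71/500$ and $i(p_i-p_{i+1})\le 1037/1500$ and identifies exactly when each is attained, and for $m=2$ it reuses the structure of Claim~6 of Vigoda's paper (normalizing so that $p_{a_{max}}-p_A\le p_{b_{max}}-p_B$, then splitting on the value of $\min(p_{a_2},p_{b_2}-p_B)$) to reduce $H$ to a sum of a few one-variable expressions such as $(a-1)p_a\le 1/3$, each maximized independently. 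That buys a human-checkable proof with no enumeration; your version buys mechanical certainty at the cost of a several-hundred-case table. Your tabulations for \eqref{eq:pap}, \eqref{eq:Hforsigmav}, \eqref{eq:approxconstraint}, and the $m=1$ reduction (where \eqref{eq:trivABbound} forces $A=1+a_1$, $B=1+b_1$ and the $p_A,p_B$ terms vanish) all match the paper.

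One step of your $m=2$ pruning is wrong as stated: $(A-a_{max}-1)p_A$ is \emph{not} monotone in $A$ on the admissible interval (e.g.\ with $a_{max}=1$ the values $0,\,p_3,\,2p_4,\,3p_5,\,4p_6$ rise and then fall), and moreover $H$ depends on $p_A$ not only through that term but also through $q_{i_{max}}=p_{a_{max}}-p_A$ inside $f(u_{i_{max}})$ and the associated $\min$; the net coefficient of $p_A$ in $H$ is $A-a_{max}-1-a_{max}$ plus a possible contribution from the $\min$. So restricting to the extremal values of $A$ and $B$ in each symmetry class could in principle miss an interior maximum. This does not sink the proposal---your fallback of enumerating every admissible $A,B$ (the ranges have length at most $N_{max}$) is finite and is what your script should do---but either drop the monotonicity shortcut or replace it by tracking the exact coefficient of $p_A$ in $H$, as the paper implicitly does when it isolates the term $(A-2a_1)p_A$.
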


This can be verified numerically. We give a proof in Appendix~\ref{app:obsproof} for completeness. 

\begin{remark}
	Under Vigoda's choice of flip probabilities \eqref{eq:vigodasprobs}, the constraint \eqref{eq:mainconstraint} for $m = 2$ and $(a_1,a_2,b_1,b_2,A,B) = (1,1,2,2,3,5)$ is tight in addition to the constraints from Observation~\ref{obs:slack}. But as Observation~\ref{obs:slack} demonstrates, there are other choices of flip probabilities, e.g. \eqref{eq:ourprobs1}, for which these additional constraints have positive slack so there are more free directions to move in while maintaining optimality. \label{remark:altprobs}
\end{remark}

If we restrict our attention to the constraints with zero slack in Observation~\ref{obs:slack}, we obtain the following linear program:

\begin{lp}
	For nonnegative variables $\{p_{\alpha}\}_{\alpha\in\N}$ with $0\le p_{\alpha}\le p_{\alpha-1}\le p_1 = 1$ for all $\alpha\ge 2$, minimize $\lambda$ subject to: 

	$$p_1 + p_2 - 2p_3 - \min(p_1-p_2,p_2-p_3) \le -1 +\lambda$$ $$p_1 - p_2 + 3p_3 - 3p_4 - \min(p_1 - p_2,p_3 - p_4) \le -1 + \lambda$$ $$p_1 - p_2 + 4p_4 - 4p_5 - \min(p_1-p_2,p_4-p_5)\le -1 + \lambda$$ $$2p_1 + 5p_3 - \min(p_1 - p_3,p_3 - p_6)\le -1 + 2\lambda.$$ $$2p_1 + 5p_3 - \min(p_1 - p_3,p_3 - p_7)\le -1 + 2\lambda.$$\label{def:mostbasictight}
\end{lp}

\begin{cor}
	The objective values of Linear Program~\ref{def:prelp}, Linear Program~\ref{def:lp} with $N_{max}\ge 7$ and $m^* = 3$, and Linear Program~\ref{def:mostbasictight} are all equal to 11/6.\label{cor:mostbasictightcor}
\end{cor}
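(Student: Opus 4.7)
The plan is to sandwich the objective values of all three linear programs at $11/6$. The upper bounds are immediate: by Observation~\ref{obs:slack}, the assignment \eqref{eq:ourprobs1} is feasible for Linear Program~\ref{def:lp} at $\lambda = 11/6$, so its objective value is at most $11/6$; Observation~\ref{obs:comparelambdas} then gives the same upper bound for Linear Program~\ref{def:prelp}; and since Linear Program~\ref{def:mostbasictight} drops all constraints of Linear Program~\ref{def:lp} not identified in Observation~\ref{obs:slack}, its feasible region is only larger, so its objective value is at most that of Linear Program~\ref{def:lp}.

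For the matching lower bound on Linear Program~\ref{def:mostbasictight}, I would construct an explicit dual certificate. Replacing the $\min$ term in Constraint~1 by the valid upper bound $p_2-p_3$ and using $p_1 = 1$ yields $\lambda \ge 2 - p_3$. Replacing the $\min$ term in Constraint~4 by $p_3 - p_6$ and using $p_6 \ge 0$ to absorb any residual $p_6$ contribution yields $\lambda \ge 3/2 + 2p_3$. Taking twice the first inequality plus the second cancels $p_3$ and gives $3\lambda \ge 11/2$, i.e., $\lambda \ge 11/6$.

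To lift this to Linear Program~\ref{def:prelp}, I would verify that the two constraints used above are indexed by realizable $(A, B, \vec{a}, \vec{b})$, so Linear Program~\ref{def:prelp} actually contains them. For Constraint~1, the tuple $(2, 3, (1), (2))$ is realized by the three-vertex path $v, u, w$ with $\sigma = (1, 2, 3)$, $\tau = (3, 2, 3)$, and $c = 2$. In lieu of Constraint~4, I would use Constraint~5 (with $b_1 = b_2 = 3$ and $B = 7$), which yields the same lower bound $\lambda \ge 3/2 + 2p_3$ after resolving its $\min$ term; its underlying tuple $(3, 7, (1,1), (3,3))$ is realized by a ``double-broom'' graph in which $v$ is adjacent to $u_1, u_2$ and each $u_i$ is followed by a disjoint length-two alternating tail $u_i, w_i, x_i$, with $\sigma$ coloring $v = 1$, $u_i = x_i = 2$, $w_i = 3$, and $\tau$ recoloring $v$ to $3$. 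The same linear combination of the two resulting inequalities gives that the objective value of Linear Program~\ref{def:prelp} is at least $11/6$, completing the proof.

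The main obstacle is spotting the right dual certificate using realizable constraints. Observation~\ref{obs:slack} gives five tight constraints, but most are redundant for the lower bound; the key is to recognize that after resolving the $\min$ terms in Constraint~1 and Constraint~4 (or Constraint~5) in the right direction, only the single primal variable $p_3$ survives, and a suitable non-negative combination cancels it to leave a pure inequality forcing $\lambda \ge 11/6$.
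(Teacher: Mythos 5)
Your proof is correct, and its lower-bound half takes a genuinely different route from the paper. The paper's proof sandwiches $\lambda^*_3\le\lambda^*_1\le\lambda^*_2$ via constraint-set inclusions (using a $p_6\leftrightarrow p_7$ symmetry argument to discard the non-realizable penultimate constraint of Linear Program~\ref{def:mostbasictight}) and then closes the loop with $\lambda^*_2=\lambda^*_3$ by complementary slackness, a step it ultimately delegates to numerical verification. You instead exhibit an explicit dual certificate: resolving the $\min$ in the first constraint gives $\lambda\ge 2-p_3$, resolving it in the fourth (or fifth) gives $\lambda\ge 3/2+2p_3$, and the combination $2\cdot(\text{first})+(\text{second})$ cancels $p_3$ and yields $3\lambda\ge 11/2$. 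This is a clean, hand-checkable replacement for the complementary-slackness step, and your realizability witnesses for $(2,3,(1),(2))$ and $(3,7,(1,1),(3,3))$ check out (both $H$-computations reproduce the stated constraints), so the same certificate lower-bounds Linear Program~\ref{def:prelp} directly, sidestepping the paper's symmetry argument for the non-realizable tuple $(6,3,(3,3),(1,1))$. The upper bounds via feasibility of \eqref{eq:ourprobs1} and Observation~\ref{obs:comparelambdas} match the paper. One wrinkle worth flagging, which is in the paper's statement of Linear Program~\ref{def:mostbasictight} rather than in your argument: as printed, its fourth constraint omits a $-p_6$ term relative to $H(6,3,(3,3),(1,1))$, so taken literally it is marginally \emph{stronger} than the corresponding constraint of Linear Program~\ref{def:lp}; your lower bound is unaffected (the residual $+p_6\ge 0$ only helps you), but the assertion that Linear Program~\ref{def:mostbasictight} is a relaxation of Linear Program~\ref{def:lp} should be read with the intended $H$-form of that constraint.
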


\begin{proof}
	Denote these objective values by $\lambda^*_1, \lambda^*_2, \lambda^*_3$. By complementary slackness, Observation~\ref{obs:slack} implies $\lambda^*_2 = \lambda^*_3$ (this can also be verified numerically). But note that $$\lambda^*_3\le\lambda^*_1\le\lambda^*_2.$$ The second inequality follows by Observation~\ref{obs:comparelambdas}. For the first inequality, note that although the tuple $(A,B,\vec{a},\vec{b}) = (6,3,(3,3),(1,1))$ corresponding to the penultimate constraint of Linear Program~\ref{def:mostbasictight} is not realizable, the last two constraints of Linear Program~\ref{def:mostbasictight} are symmetric with respect to flipping $p_6$ and $p_7$ and are also the only constraints involving $p_6$ and $p_7$, so $\lambda^*_3$ is equal to the objective value of the linear program obtained by removing the penultimate constraint of Linear Program~\ref{def:mostbasictight}. Finally, the tuples $(a+1,2,(a),(1))$ for $a = 2,3,4$ are all realizable, as is the tuple $(7,3,(3,3),(1,1))$, so the constraint set of Linear Program~\ref{def:mostbasictight} is a strict subset of that of Linear Program~\ref{def:lp}, completing the proof.
\end{proof}

\subsection{A Worst-Case Family of Examples for One-Step Coupling}

As discussed after Lemma~\ref{lem:contractive}, it is not immediately obvious that a minimizer of either Linear Program~\ref{def:prelp} or Linear Program~\ref{def:lp} characterizes the optimal one-step coupling analysis for the flip dynamics. One way to interpret Corollary~\ref{cor:mostbasictightcor} is as a resolution to this question via the following lemma, which exhibits a family $\mathcal{C}$ of just \emph{four} neighboring coloring pairs $(G,\sigma,\tau)$ for which no one-step coupling, greedy or otherwise, simultaneously contracts with respect to the Hamming metric for $k < (11/6)d$ for all $(G,\sigma,\tau)\in\mathcal{C}$. To clarify, this lemma is not used in the proof of our main result, but provides intuition for the limitations of one-step coupling and motivates our approach for circumventing them.

\begin{figure}[ht]
	\centering
	\subcaptionbox{$G_1$}{\includegraphics[width=0.45\textwidth]{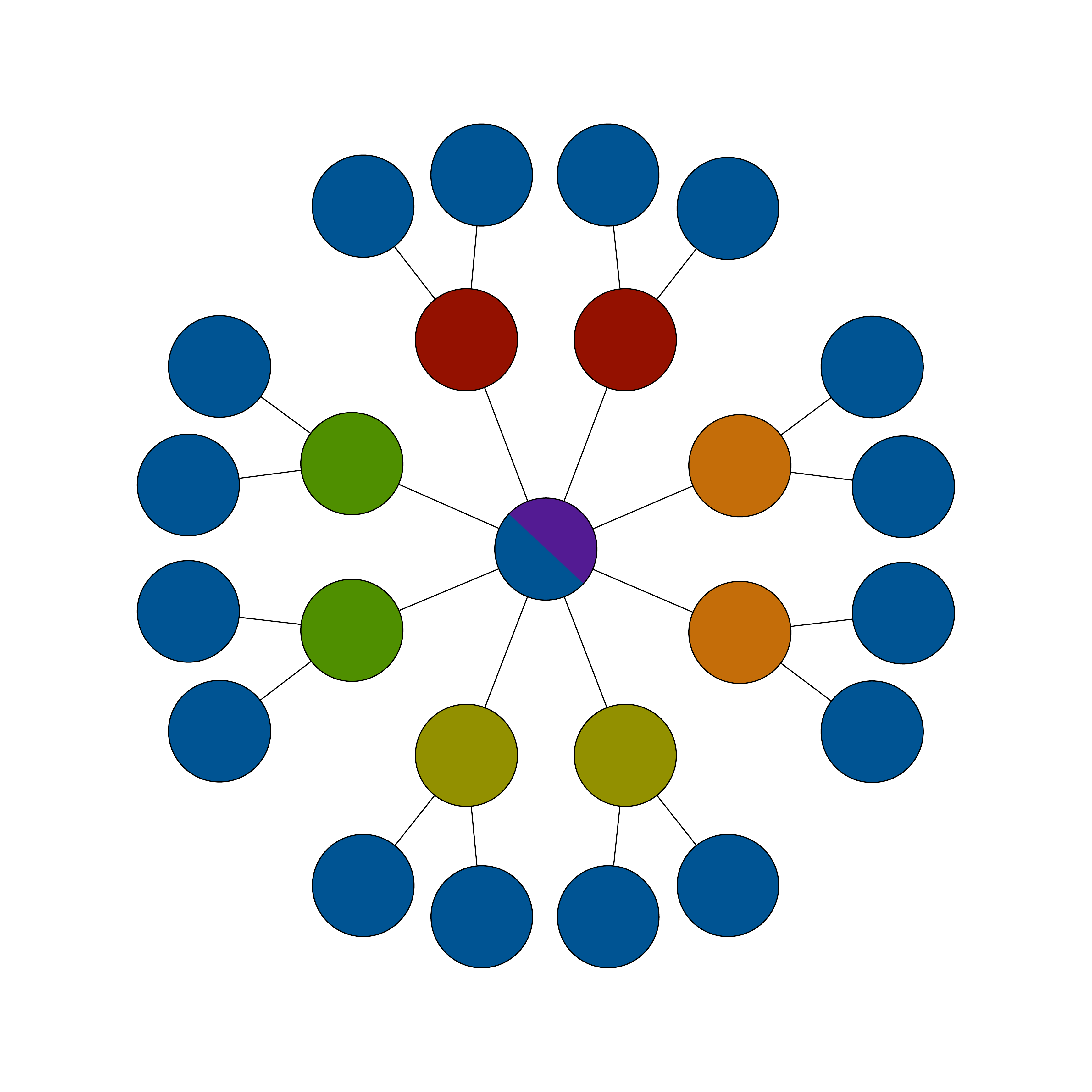}}
	\subcaptionbox{$G_2$}{\includegraphics[width=0.45\textwidth]{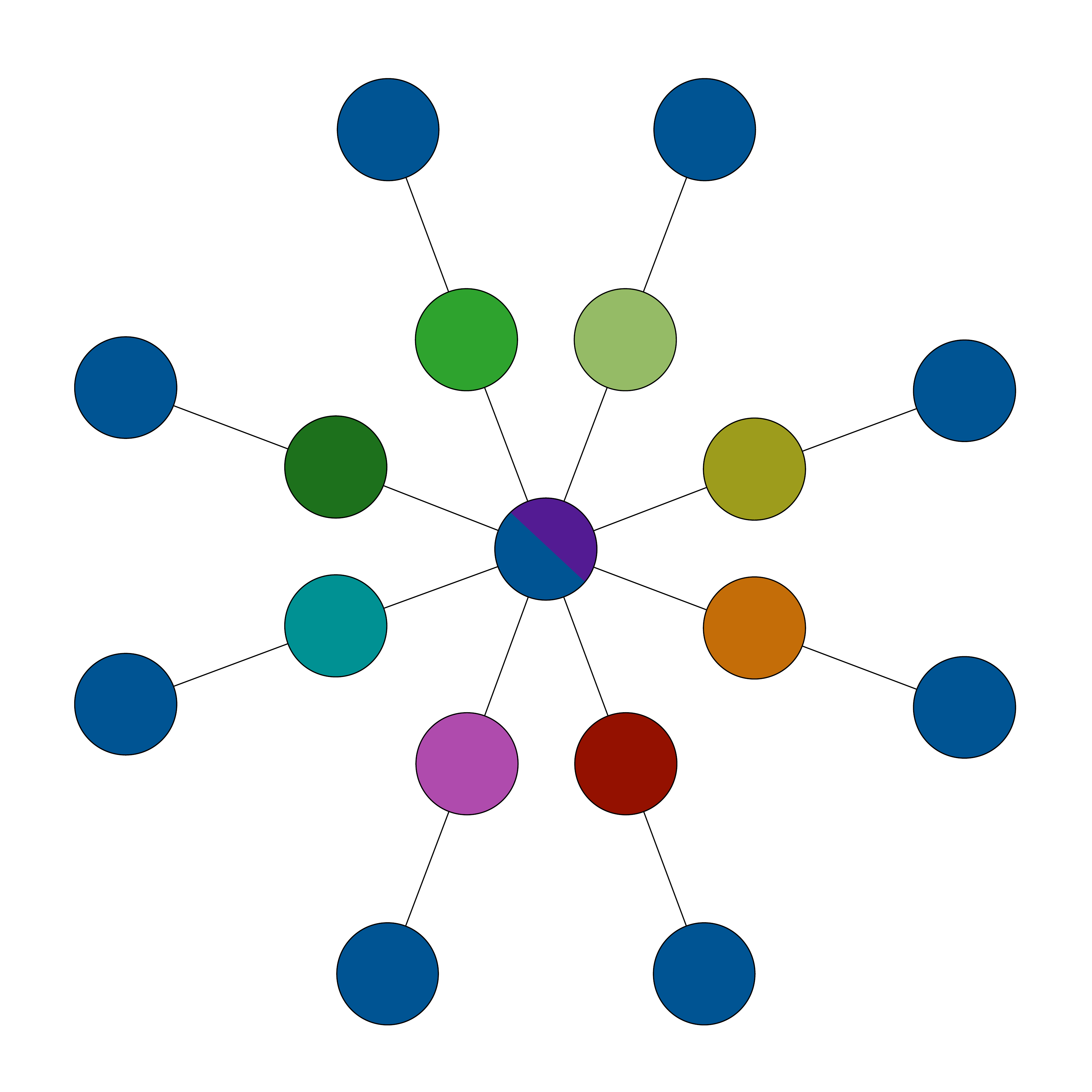}}
	\caption{Examples of graphs defined in Construction~\ref{constr}, where $\sigma(v) = \text{blue}$ and $\tau(v) = \text{purple}$}
	\label{fig:allGs}
\end{figure}

\begin{constr}
	Let $d$ be even. Let $G_1$ be the tree of height two rooted at a vertex $v$ with exactly $d$ children $u_1,...,u_d$ such that each $u_i$ has exactly two children $w^i_1$ and $w^i_2$. In colorings $\sigma_{1},\tau_{1}$, assign $u_{2j-1}$ and $u_{2j}$ the color $c_j$ for $j = 1,...,d$, and assign all $w^i_{\ell}$ the color $\sigma_1(v)\neq\tau_1(v)$

	For $a = 2,3,4$, let $G_a$ be the tree of height $a + 1$ rooted at vertex $v$ with $d$ path graphs, each consisting of $a$ other vertices, attached to $v$. Let $u_1,...,u_d$ be the neighbors of $v$. In colorings $\sigma_a,\tau_a$, assign each $u_i$ with a distinct color $c_i$, and assign all descendants of $u_i$ which are an odd distance away from $v$ with $c_i$. Assign all remaining descendants of $v$ with the color $\sigma_a(v)\neq\tau_a(v)$.

	Let $\mathcal{C}^* = \{(G_i,\sigma_i,\tau_i)\}_{1\le i\le 4}$ (see Figure~\ref{fig:allGs}).\label{constr}
\end{constr}

\begin{lem}\label{lem:tight}
	If $k < (11/6)d$, there exists no choice of flip probabilities $\{p_{\alpha}\}$ and one-step coupling $(\sigma,\tau)\mapsto(\sigma',\tau')$ for which $\E[d(\sigma',\tau') - 1] < 0$ for all $(G,\sigma,\tau)\in\mathcal{C}^*$, where $\mathcal{C}^*$ is defined in Construction~\ref{constr}.\label{lem:Cfamily}
\end{lem}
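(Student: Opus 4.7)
The approach will be via LP duality: assuming $\E[d(\sigma',\tau') - 1] < 0$ on each of the four examples in $\mathcal{C}^*$ will force the underlying flip probabilities $\{p_\alpha\}$ to yield a feasible solution to Linear Program~\ref{def:mostbasictight} with objective strictly less than $11/6$, contradicting Corollary~\ref{cor:mostbasictightcor}. To set this up, I would first identify the realizable tuple $(A_c, B_c, \vec{a}^c, \vec{b}^c)$ for each color $c \in N(v)$ on each of the four graphs. Since each $G_i$ is a tree with clean structure, this is a direct computation: on $G_1$ every color $c_j$ (each appearing on exactly two of $v$'s neighbors) yields $(7, 3, (3, 3), (1, 1))$, and on $G_a$ for $a \in \{2, 3, 4\}$ every $c_i \in N(v)$ yields $(a+1, 2, (a), (1))$. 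These tuples are precisely those underlying the four essential tight constraints of Linear Program~\ref{def:mostbasictight}, with the penultimate $(6, 3, (3, 3), (1, 1))$ constraint being non-realizable but redundant per the argument in Corollary~\ref{cor:mostbasictightcor}.

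Next, I would establish that on these trees the greedy coupling is optimal among all one-step couplings, so that
\[
\E[d(\sigma',\tau') - 1] \;\ge\; \frac{1}{nk}\Bigl(-(k - d_{\mathrm{dist}}) + \sum_{c \in N(v)} H(A_c, B_c, \vec{a}^c, \vec{b}^c)\Bigr),
\]
where $d_{\mathrm{dist}}$ is the number of distinct colors in $N(v)$, i.e.\ $d/2$ for $G_1$ and $d$ for $G_a$. The tree structure ensures that for distinct $c \neq c'$ the vertex sets of the components in $D_c$ and $D_{c'}$ overlap only at $v$, so the coupling decomposes color-wise; within a single color $c$, coupling the flips of the at most four components in $D_c$ reduces to a small transportation LP whose cost matrix records the Hamming distance change under each joint flip. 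A direct inspection of this transportation LP---clean in the $m = 1$ cases, with more bookkeeping for $m = 2$ on $G_1$---verifies that Vigoda's greedy matching realizes the optimal assignment and achieves cost exactly $H(A_c, B_c, \vec{a}^c, \vec{b}^c)$. The $k - d_{\mathrm{dist}} - 2$ colors absent from $N(v) \cup \{\sigma(v), \tau(v)\}$ and the special colors $\sigma(v), \tau(v)$ themselves each contribute $-1$ via the singleton flip of $\{v\}$.

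Combining these pieces, if $\E[d(\sigma',\tau') - 1] < 0$ on every element of $\mathcal{C}^*$, then writing $\lambda = k/d$ one obtains $H(a+1, 2, (a), (1)) < \lambda - 1$ for each $a \in \{2, 3, 4\}$ and $H(7, 3, (3, 3), (1, 1)) < 2\lambda - 1$. These are exactly the four essential constraints of Linear Program~\ref{def:mostbasictight}, so $\{p_\alpha\}$ would be LP-feasible with LP objective $< \lambda < 11/6$, contradicting Corollary~\ref{cor:mostbasictightcor}. The main obstacle I anticipate is the optimality of greedy on these trees---particularly for $G_1$, where the two-neighbor structure requires checking that no alternative joint flip assignment, including ones that pair flips across different colors, can improve upon greedy's maximization of the zero-cost pairings $(\sigma_A, \tau_B)$ and $(\sigma_B, \tau_A)$ in the color-wise transportation LP.
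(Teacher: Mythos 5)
Your proposal is correct and follows essentially the same route as the paper's proof: compute the tuples $(7,3,(3,3),(1,1))$ for $G_1$ and $(a+1,2,(a),(1))$ for $G_a$, observe that contraction on all four examples would make $\{p_\alpha\}$ feasible for Linear Program~\ref{def:mostbasictight} with value $\lambda = k/d < 11/6$, contradicting Corollary~\ref{cor:mostbasictightcor}, and separately argue that the greedy coupling is optimal among one-step couplings on these trees. The paper handles the greedy-optimality step (which you correctly flag as the main remaining work) by the same kind of mass-reallocation/exchange argument you describe via your color-wise transportation LP.
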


\begin{proof}
	We first show there exists no choice of flip probabilities for which greedy coupling contracts for all of $\mathcal{C}^*$. Let $\lambda = k/d$, and suppose to the contrary that $1\le \lambda < 11/6$ and yet there exists a set of flip probabilities $\{p_{\alpha}\}$ for which all pairs of colorings in $\mathcal{C}^*$ contracted in distance. For $a = 2,3,4$, the expected change in distance for $G_a$ is $$d\cdot H(a+1,2,(a),(1)) = d\left(p_1 - p_2 + a(p_a - p_{a+1}) - \min(p_1 - p_2,p_a - p_{a+1})\right) < 0 \le d(-1 + \lambda).$$ The expected change in distance for $G_1$ is $$(d/2)\cdot H(7,3,(3,3),(1,1)) = (d/2)\left(2p_1 + 5p_3 - \min(p_1 - p_3,p_3-p_7)\right) < 0 \le (d/2)(-1 + 2\lambda).$$ But this would imply that under this choice of $\{p_{\alpha}\}$, the linear program in Definition~\ref{def:mostbasictight} achieves a value of $\lambda < 11/6$, contradicting Corollary~\ref{cor:mostbasictightcor}.

	Finally, it is straightforward to check that no one-step coupling can do better than the greedy coupling. This is clear for $G_a$ with $a = 2,3,4$. Indeed, certainly for any component not in some $D_c$ for color $c$ in the neighborhood of $v$, the coupling should just be the identity. Now for any neighbor $u$ of $v$ with color $c$, suppose a nonzero amount of probability mass $p$ for the flip of $S_{\tau}(u,\sigma(v))$ is matched in the optimal one-step coupling to the flip of a component other than $S_{\sigma}(v,c)$. The expected change in distance conditioned on this pair of components being chosen in the coupling is strictly greater than the expected change if that mass $p$ were instead reallocated to the empty flip in $\sigma$, contradicting optimality. By symmetry we can show that the flip of $S_{\sigma}(u_i,\tau(v))$ is coupled only to the empty flip in $\tau$ and the flip of $S_{\tau}(u_i,c)$. Finally, if not all of the probability mass for the flip of $S_{\sigma}(v,c)$ is matched to the flip of $S_{\tau}(u,\sigma(v))$, then we can strictly improve the coupling by reallocating that mass to $S_{\tau}(u,\sigma(v))$.

	A similar argument shows that the optimal one-step coupling for $G_1$ is the greedy coupling.
\end{proof}

In other words, not only is it impossible for any one-step coupling analysis of the flip dynamics to cross Vigoda's 11/6 barrier for general graphs, but it's impossible even for the family of four \emph{tree} graphs defined in Lemma~\ref{lem:Cfamily}.

\subsection{Modifying the LP}
\label{subsec:modifyinglp}

For a color $c$, the condition that $(A_c,B_c,\vec{a}^c,\vec{b}^c)$ for a neighboring coloring pair $(G,\sigma,\tau)$ correspond to a constraint of Linear Program~\ref{def:mostbasictight} is a very stringent condition on $(G,\sigma,\tau)$. The hope is that for a suitable notion of ``typical,'' this condition holds for few colors $c$ for a ``typical'' neighboring coloring pair. Before exploring this line of thought, it will be convenient to give a name to neighboring coloring pairs with this condition.

\begin{defn}
Given a neighboring coloring pair $(G,\sigma,\tau)$ and a color $c$ appearing in the neighborhood of $v$, then the pair $\sigma,\tau$ is in the state \begin{enumerate}
	\item \Sing{c} if $\delta_c = 1$ and $c\neq\sigma(v),\tau(v)$,
	\item \Bad{c} if $\delta_c = 2$ and $(A_c,B_c,\vec{a}^c,\vec{b}^c)$ is either $(7,3,(3,3),(1,1))$ or $(3,7,(1,1),(3,3))$ (see Figure~\ref{fig:treeexamples}).
	\item \Good{c} otherwise.
\end{enumerate} Moreover, define $N_{sing}(\sigma,\tau)$, $N_{bad}(\sigma,\tau)$, and $N_{good}(\sigma,\tau)$ to be the number of $c$ for which $(G,\sigma,\tau)$ is in state \Sing{c}, \Bad{c}, \Good{c} respectively.\label{def:badgood}
\end{defn}

\begin{obs}
Let $(G,\sigma,\tau)$ be any neighboring coloring pair. For $c = \sigma(v),\tau(v)$, if $\delta_c > 0$, then $\sigma,\tau$ are in state \Good{c}.\label{obs:always}
\end{obs}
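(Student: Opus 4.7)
The proof plan is to check directly against the definition of the three states given in Definition~\ref{def:badgood}, using the special structure of $A_c, B_c, \vec{a}^c, \vec{b}^c$ when $c\in\{\sigma(v),\tau(v)\}$ that was recorded in Remark~\ref{remark:specialcase}.

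First I would rule out the state \Sing{c}: this is immediate because \Sing{c} requires $c\neq\sigma(v),\tau(v)$ by definition, and we are assuming $c$ is one of $\sigma(v),\tau(v)$.

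Next I would rule out \Bad{c}. By Definition~\ref{def:badgood}, \Bad{c} requires $(A_c,B_c,\vec{a}^c,\vec{b}^c)$ to be one of $(7,3,(3,3),(1,1))$ or $(3,7,(1,1),(3,3))$, both of which have $A_c>0$. But Remark~\ref{remark:specialcase} tells us that when $c=\sigma(v)$ we have $A_c=0$ (and $\vec{a}^c$ is the all-zeros vector), and when $c=\tau(v)$ we again have $A_c=0$ (in fact also $B_c=0$ in that case). Either way $A_c=0$, so neither of the two tuples listed in the definition of \Bad{c} can be realized.

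Since the state is one of \Sing{c}, \Bad{c}, \Good{c}, and the first two are excluded, the pair $\sigma,\tau$ must be in state \Good{c}. The only step that requires any care is quoting the right part of Remark~\ref{remark:specialcase} to ensure $A_c=0$ in both sub-cases; there is no genuine obstacle.
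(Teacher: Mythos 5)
Your proof is correct and follows essentially the same route as the paper's: both rule out \Sing{c} via the requirement $c\neq\sigma(v),\tau(v)$ and rule out \Bad{c} using the degenerate values of $(A_c,B_c)$ from Remark~\ref{remark:specialcase} (the paper additionally splits off the $\delta_c=1$ case, where \Bad{c} is already excluded by $\delta_c\neq 2$, but this is an immaterial difference). No gaps.
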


\begin{proof}
	Let $c\in\{\sigma(v),\tau(v)\}$ and suppose $\delta_c > 0$. If $\delta_c = 1$, then by definition $\sigma,\tau$ are in state \Good{c}. If $\delta_c \ge 2$, then because $A_c = 0$ for $c = \sigma(v),\tau(v)$ by Remark~\ref{remark:specialcase}, $\sigma,\tau$ must be in state \Good{c}.
\end{proof}

\Sing{c} corresponds to the first three constraints of Linear Program~\ref{def:mostbasictight},\footnote{Here, for colors $c\neq\sigma(v),\tau(v)$, we do not distinguish between $c$ for which $\delta_c = 1$ and $(A,B)\in\{(1,2),(1,3),(1,4),(2,1),(3,1),(4,1)\}$ versus colors for which $\delta_c = 1$ and $(A,B)\not\in\{(1,2),(1,3),(1,4),(2,1),(3,1),(4,1)\}$, even though the latter do not correspond to the tight constraints that appear in Linear Program~\ref{def:mostbasictight}. This is just for simplicity of analysis, though making this distinction should yield additional improvements upon our main result. On the other hand, if $c = \sigma(v),\tau(v)$, then $A_c = 0$ by Remark~\ref{remark:specialcase} and we know by Observation~\ref{obs:slack} that this does not correspond to a tight constraint in Linear Program~\ref{def:mostbasictight}.} \Bad{c} corresponds to the last constraint of Linear Program~\ref{def:mostbasictight}, and \Good{c} corresponds to $c$ for which $(A_c,B_c,\vec{a}^c,\vec{b}^c)$ does not correspond to a constraint of Linear Program~\ref{def:mostbasictight}.


\definecolor{vcolor}{RGB}{0,84,147}
\definecolor{ccolor}{RGB}{79,143,0}

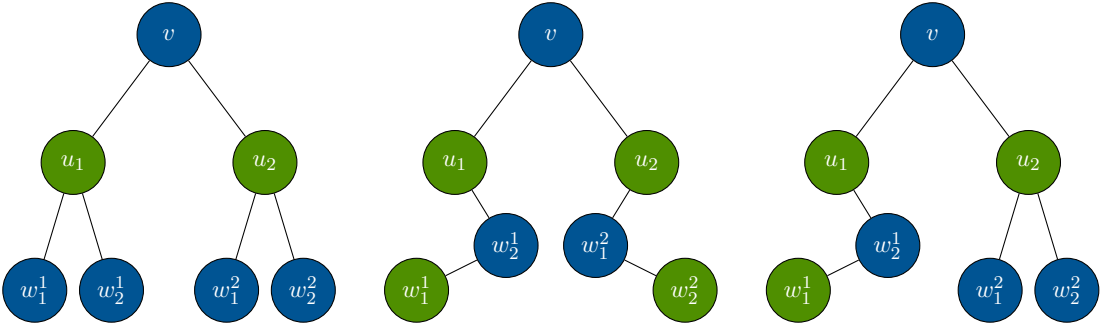
\begin{figure}[ht]\centering
\subcaptionbox{}{
\begin{tikzpicture}[scale=0.85,every node/.style={scale=0.85}]
	\node[shape=circle,draw=black,text=white,fill=vcolor,minimum size=1cm] (v) at (0,0) {$v$};
	\node[shape=circle,draw=black,text=white,fill=ccolor,minimum size=1cm] (u1) at (-1.5,-2) {$u_1$};
	\node[shape=circle,draw=black,text=white,fill=ccolor,minimum size=1cm] (u2) at (1.5,-2) {$u_2$};
	\node[shape=circle,draw=black,text=white,fill=vcolor,minimum size=1cm] (w11) at (-2.1,-4) {$w^1_1$};
	\node[shape=circle,draw=black,text=white,fill=vcolor,minimum size=1cm] (w12) at (-0.9,-4) {$w^1_2$};
	\node[shape=circle,draw=black,text=white,fill=vcolor,minimum size=1cm] (w21) at (0.9,-4) {$w^2_1$};
	\node[shape=circle,draw=black,text=white,fill=vcolor,minimum size=1cm] (w22) at (2.1,-4) {$w^2_2$};
    \path [-] (v) edge node[left] {} (u1);
    \path [-] (v) edge node[left] {} (u2);
    \path [-] (u1) edge node[left] {} (w11);
    \path [-] (u1) edge node[left] {} (w12);
    \path [-] (u2) edge node[left] {} (w21);
    \path [-] (u2) edge node[left] {} (w22);
\end{tikzpicture}
}\quad
\subcaptionbox{}{
\begin{tikzpicture}[scale=0.85,every node/.style={scale=0.85}]
	\node[shape=circle,draw=black,text=white,fill=vcolor,minimum size=1cm] (v) at (0,0) {$v$};
	\node[shape=circle,draw=black,text=white,fill=ccolor,minimum size=1cm] (u1) at (-1.5,-2) {$u_1$};
	\node[shape=circle,draw=black,text=white,fill=ccolor,minimum size=1cm] (u2) at (1.5,-2) {$u_2$};
	\node[shape=circle,draw=black,text=white,fill=ccolor,minimum size=1cm] (w11) at (-2.1,-4) {$w^1_1$};
	\node[shape=circle,draw=black,text=white,fill=vcolor,minimum size=1cm] (w12) at (-0.7,-3.3) {$w^1_2$};
	\node[shape=circle,draw=black,text=white,fill=vcolor,minimum size=1cm] (w21) at (0.7,-3.3) {$w^2_1$};
	\node[shape=circle,draw=black,text=white,fill=ccolor,minimum size=1cm] (w22) at (2.1,-4) {$w^2_2$};
    \path [-] (v) edge node[left] {} (u1);
    \path [-] (v) edge node[left] {} (u2);
    \path [-] (w12) edge node[left] {} (w11);
    \path [-] (u1) edge node[left] {} (w12);
    \path [-] (u2) edge node[left] {} (w21);
    \path [-] (w21) edge node[left] {} (w22);
\end{tikzpicture}
}\quad
\subcaptionbox{}{
\begin{tikzpicture}[scale=0.85,every node/.style={scale=0.85}]
	\node[shape=circle,draw=black,text=white,fill=vcolor,minimum size=1cm] (v) at (0,0) {$v$};
	\node[shape=circle,draw=black,text=white,fill=ccolor,minimum size=1cm] (u1) at (-1.5,-2) {$u_1$};
	\node[shape=circle,draw=black,text=white,fill=ccolor,minimum size=1cm] (u2) at (1.5,-2) {$u_2$};
	\node[shape=circle,draw=black,text=white,fill=ccolor,minimum size=1cm] (w11) at (-2.1,-4) {$w^1_1$};
	\node[shape=circle,draw=black,text=white,fill=vcolor,minimum size=1cm] (w12) at (-0.7,-3.3) {$w^1_2$};
	\node[shape=circle,draw=black,text=white,fill=vcolor,minimum size=1cm] (w21) at (0.9,-4) {$w^2_1$};
	\node[shape=circle,draw=black,text=white,fill=vcolor,minimum size=1cm] (w22) at (2.1,-4) {$w^2_2$};
    \path [-] (v) edge node[left] {} (u1);
    \path [-] (v) edge node[left] {} (u2);
    \path [-] (w12) edge node[left] {} (w11);
    \path [-] (u1) edge node[left] {} (w12);
    \path [-] (u2) edge node[left] {} (w21);
    \path [-] (u2) edge node[left] {} (w22);
\end{tikzpicture}
}
\caption{$(G,\sigma,\tau)$ in state \Bad{\text{green}} \---- only $\sigma$ shown}
\label{fig:treeexamples}
\end{figure}

$N_{sing}(\sigma,\tau)$ can be large even for a ``typical'' neighboring coloring: consider any $(G,\sigma,\tau)$ where the neighbors of $v$ are all connected to each other. Indeed, this example is the reason that all existing results on sampling colorings that followed \cite{vigoda2000improved} needed to at least assume triangle-freeness of $G$, otherwise the uniformity properties they leverage simply do not hold.

Instead of avoiding state \Sing{c}, we want ``typical'' neighboring coloring pairs to avoid \Bad{c} for many $c$ (of course, we still need to make precise what we mean by ``typical,'' which we defer to Section~\ref{sec:coupling}).

Consider the following thought experiment. Let $\mathcal{C}$ be the family of all neighboring coloring pairs such that for every $(G,\sigma,\tau)\in\mathcal{C}$, \begin{equation}N_{bad}(\sigma,\tau)\le\gamma\cdot N_{good}(\sigma,\tau)\label{eq:comparable}\end{equation} for some absolute constant $\gamma > 0$. Now suppose $k = (11/6-\epsilon)d$ for some small absolute constant $\epsilon > 0$, and our goal was just to design a greedy coupling so that every neighboring coloring pair in some family contracts. Observation~\ref{obs:slack} and complementary slackness intuitively suggest that this should be possible for a small enough $\epsilon$ depending only on $\gamma$.

To get an effective estimate for $\epsilon$ in this thought experiment, we need to encode \eqref{eq:comparable} into Linear Program~\ref{def:lp}. This motivates the following modified linear program, which captures a type of amortized analysis that we will describe in more detail shortly. 

\begin{lp}
Fix some $N_{max}\ge 1$, $m^*\ge 2$, and $\gamma > 0$. For variables $\{p_{\alpha}\}_{\alpha\in\N}$, $\lambda_{sing}$, $\lambda_{bad}$, $\lambda_{good}$, and $\lambda$, and dummy variables $x, y$, minimize $\lambda$ subject to the following constraints: $0\le p_{\alpha}\le p_{\alpha-1}\le p_1 = 1$ for all $\alpha\ge 2$, constraint \eqref{eq:NMax}, and constraint \eqref{eq:pap}. For all $(A,B,\vec{a},\vec{b})$ for which $m < m^*$, $\vec{a},\vec{b}\in\{0,1,\cdots,N_{max}\}^m\backslash\{(0,0,\cdots 0)\}$, and $A,B$ satisfy \eqref{eq:trivABbound}, define a constraint $$H(A,B,\vec{a},\vec{b}) \le -1 + \lambda_s\cdot m,$$ where $\lambda_s$ is:
\begin{itemize}[noitemsep]
	\item $\lambda_{sing}$ if $m = 1$.
	\item $\lambda_{bad}$ if $m = 2$ and either $(a_1,a_2,b_1,b_2) = (1,1,3,3)$ and $B=7$, or $(a_1,a_2,b_1,b_2) = (3,3,1,1)$ and $A=7$.
	\item $\lambda_{good}$ otherwise.
\end{itemize} Define a constraint $$(B - b_m)p_B + \sum_i b_i p_{b_i} \le -1 + \lambda_{good}\cdot m$$ for all $2\le m < m^*$, $0\le b_1\le\cdots\le b_m\le N_{max}$ where $b_m > 0$, and $B = \sum_i b_i$. Define the constraints $$x\ge (A - 2)p_A$$ $$y\ge a\cdot p_a + b\cdot p_b - \min(p_a,p_b)$$ $$-1 + \lambda_{good}\cdot m^* \ge 2x + m^* y$$ for every $A,a,b$ satisfying $0\le A\le 1+N_{max}$ and $0\le a < b\le N_{max}$. Finally, define the constraints $$\lambda \ge \lambda_{sing}, \ \ \ \lambda\ge\lambda_{good}$$ $$\lambda\ge\frac{\gamma}{\gamma+1}\cdot\lambda_{bad} + \frac{1}{\gamma+1}\cdot\lambda_{good}.$$ Call this the \emph{$\gamma$-mixed coupling LP} and denote its objective value by $\lambda^*_{\gamma}$.
\label{def:mixedlp}
\end{lp}

Denote a minimizing choice of $\{p_{\alpha}\}$ and $\lambda_{sing},\lambda_{bad},\lambda_{good}$ for the $\gamma$-mixed coupling LP by $\{p^*_{\alpha}\}$ and $\lambda^*_{sing},\lambda^*_{bad},\lambda^*_{good}$. As discussed, Observation~\ref{obs:slack} and complementary slackness suggest that $\lambda^*_{\gamma} < 11/6$, and for any fixed $\gamma$ we can get an effective estimate on $\lambda^*_{\gamma}$ simply by solving the LP.

Then for any $(G,\sigma,\tau)\in\mathcal{C}$ and the greedy coupling $(\sigma,\tau)\mapsto(\sigma',\tau')$, observe that \begin{align}\E[d(\sigma',\tau') - 1] &\le -|\{c: \delta_c = 1\}| + \sum_{\substack{c: \sigma,\tau \\ \text{\Sing{c}}}} H(A_c,B_c,\vec{a}^c,\vec{b}^c) + \sum_{\substack{c: \sigma,\tau \\ \text{\Bad{c}}}} H(A_c,B_c,\vec{a}^c,\vec{b}^c) + \sum_{\substack{c: \sigma,\tau \\ \text{\Good{c}}}} H(A_c,B_c,\vec{a}^c,\vec{b}^c)\nonumber\\
&\le -|\{c: \delta_c = 1\}| + \sum_{\substack{c: \sigma,\tau \\ \text{\Sing{c}}}}(-1 + \lambda_{sing}) + \sum_{\substack{c: \sigma,\tau \\ \text{\Bad{c}}}} (-1 + 2\lambda_{bad}) + \sum_{\substack{c: \sigma,\tau \\ \text{\Good{c}}}} (-1 + \delta_c\lambda_{good})\nonumber\\
&= -k + \lambda_{sing}\cdot N_{sing}(\sigma,\tau) + 2\lambda_{bad}\cdot N_{bad}(\sigma,\tau) + \sum_{\substack{c: \sigma,\tau \\ \text{\Good{c}}}}\delta_c\cdot \lambda_{good}.\label{eq:basicconvexcombo}
\end{align} But because $\delta_c\ge 2$ for any $c\neq\sigma(v),\tau(v)$ for which $\sigma,\tau$ are in state \Good{c}, because $\sigma,\tau$ are always in state \Good{\sigma(v)}, \Good{\tau(v)}, and because $$N_{sing}(\sigma,\tau) + 2N_{bad}(\sigma,\tau) + \sum_{c:\sigma,\tau \ \text{\Good{c}}} \delta_c = d(v),$$ we conclude that \eqref{eq:basicconvexcombo} is a convex combination of the terms $$-k + \lambda^*_{sing}\cdot d(v), \quad -k + \lambda^*_{good}\cdot d(v), \quad-k + \left(\frac{\gamma}{\gamma+1}\cdot \lambda^*_{bad} + \frac{1}{\gamma + 1}\cdot \lambda^*_{good}\right)d(v).$$ So we conclude that $\E[d(\sigma',\tau') - 1]\le -k + \lambda^*_{\gamma}d(v) < 0$ as long as $k> \lambda^*_{\gamma}d$.

In the next section, we go from the intuition of this thought experiment to a rigorous notion of ``typical'' neighboring coloring pairs avoiding the state \Bad{c}. We already reduced finding a coupling for all of $\mathcal{C}$ to analyzing the $\gamma$-mixed coupling LP, and in the sequel we will reduce finding a coupling for \emph{all} neighboring coloring pairs to analyzing the $\gamma$-mixed coupling LP.


\section{Avoiding Worst-Case Neighborhoods in Expectation}
\label{sec:coupling}

The key idea is that regardless of what neighboring coloring pair $(G,\sigma,\tau)$ one starts with, the probability that $\sigma',\tau'$ derived from one step of greedy coupling has changed in distance is $\Theta(1/n)$ (see Lemma~\ref{lem:probend} below). So in expectation, one can run $\Theta(n)$ steps of greedy coupling before the two colorings either coalesce to the same coloring or have Hamming distance greater than 1, but by that time the set of colors around $v$ will have changed substantially. This is the main insight of \cite{dyer2001extension,hayes2007variable}, who leverage it to analyze the Glauber dynamics and slightly improve upon Jerrum's $k\ge 2d$ bound under extra girth and degree assumptions.

We leverage this insight to analyze the flip dynamics under no extra assumptions. Recall that at the end of Section~\ref{sec:tightexamples} we showed that if $k > \lambda^*_{\gamma}d$, then there exist flip probabilities for which one step of greedy coupling will contract any $(G,\sigma,\tau)$ for which $N_{bad}(\sigma,\tau)\le\gamma\cdot N_{good}(\sigma,\tau)$. In this section, after formalizing the variable-length coupling alluded to above, we show that if $k > \lambda^*_{O(\gamma)}\cdot d$, then there exist flip probabilities for which the variable-length coupling will contract any neighboring coloring pair because by the end of the coupling, it will satisfy $N_{bad}(\sigma,\tau)\le \gamma\cdot N_{good}(\sigma,\tau)$ in expectation.

\subsection{The Variable-Length Coupling}

Our variable-length coupling simply runs the greedy coupling until the distance between the two colorings changes.

\begin{enumerate}
	\item Start with two neighboring colorings $\sigma^{(0)},\tau^{(0)}$.
	\item Initialize $t = 1$. Repeat:
	\begin{enumerate}
	 	\item Run the greedy one-step coupling of Section~\ref{subsec:introonestep} to flip components $S_t$ in $\sigma^{(t-1)}$ and $S'_t$ in $\tau^{(t-1)}$, producing $\sigma^{(t)},\tau^{(t)}$ (note that $S_t$ or $S'_t$ might be empty, e.g with probability $1 - p_{\alpha}$, a component of size $\alpha$ that is chosen to be flipped is not actually flipped).
	 	\item If $d(\sigma^{(t)},\tau^{(t)})\neq d(\sigma^{(t-1)},\tau^{(t-1)})$, then terminate and define $\Tstop = t$. Otherwise, increment $t$.
	 \end{enumerate}

	 We call any subsequence of pairs of flips $(S_i,S'_i),...,(S_j,S'_j)$ a \emph{coupling schedule} starting from the neighboring coloring pair $(G,\sigma^{(i-1)},\tau^{(i-1)})$.
\end{enumerate}

It is easy to see that this satisfies the conditions of being a variable-length coupling as in Definition~\ref{def:varlength}. Indeed it is the same coupling as in \cite{hayes2007variable}, except it is for the flip dynamics instead of the Glauber dyanmics. Note that we have a characterization of the pairs of flips $(S_t,S'_t)$ which terminate the coupling: at least one of them must belong to the symmetric difference $D$ defined in Section~\ref{subsec:introonestep}.

\begin{defn}
	Given a neighboring coloring pair $(G,\sigma,\tau)$, a pair of components $S$ in $\sigma$ and $S'$ in $\tau$ is \emph{terminating} if $S = S_{\sigma}(v,c)$ or $S' = S_{\tau}(v,c)$, or there exists $u\in N(v)$ for which $S = S_{\sigma}(u,\tau(v))$ or $S' = S_{\tau}(u,\sigma(v))$.
\end{defn}

Note that for any $t$, $S_t$ and/or $S'_t$ may be the empty set. Moreover, because flips of components outside of $D$ are matched via the identity coupling, if $(S_t,S'_t)$ is not terminating, then $S_t = S'_t$.

\begin{lem}
Let components $S$ in $\sigma$ and $S'$ in $\tau$ be chosen according to the greedy coupling. Then $$\frac{k - d - 2}{nk}\le \P[(S,S') \ \text{terminating}] \le \frac{k + 2p_2d}{nk}.$$\label{lem:probend}
\end{lem}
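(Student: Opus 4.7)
The plan is to bound the terminating probability by enumerating the uniformly-random picks $(v', c')$ of the flip dynamics that can cause termination, noting that a step is terminating precisely when at least one of the two (possibly empty) flipped components belongs to the symmetric difference $D$ described in Section~\ref{subsec:introonestep}.

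For the lower bound, I would identify a simple class of picks that are guaranteed to terminate. For any color $c \notin N(v) \cup \{\sigma(v), \tau(v)\}$, the component $S_\sigma(v,c) = S_\tau(v,c) = \{v\}$ since $v$ has no $c$-colored neighbor in either chain and the two chains agree off of $v$. This common singleton component lies outside $D$, so the coupling on it is the identity: picking $(v,c)$ (probability $1/nk$) and flipping (probability $p_1 = 1$) recolors $v$ to $c$ in both chains simultaneously, making $\sigma = \tau$. This is a terminating pair since $S = S_\sigma(v,c)$. Counting, there are at least $k - |N(v)| - 2 \ge k - d - 2$ such colors $c$, each contributing $1/nk$, yielding the lower bound.

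For the upper bound, I would argue that any pick producing a terminating pair must come from one of two families: (a) the $k$ picks of the form $(v,c)$, and (b) the at most $2d$ picks of the form $(u, \tau(v))$ or $(u, \sigma(v))$ for $u \in N(v)$, since only these can generate a component in $D$. Family (a) contributes at most $k/nk$ (probability $1/nk$ per pick). For family (b), the key structural observation is that whichever of the two picks is made, the corresponding component in the \emph{opposite} chain necessarily contains both $u$ and $v$ --- e.g., for the pick $(u, \tau(v))$, vertex $v$ is $\tau(v)$-colored in $\tau$ and $u \in N(v)$, so $\{u, v\} \subseteq S_\tau(u, \tau(v))$, which by the analysis in Section~\ref{subsec:introonestep} coincides with $S_\tau(v, \sigma(u))$. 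Hence this opposite-chain component has size at least $2$ and is flipped with probability at most $p_2$. A careful inspection of Vigoda's greedy coupling tables for each $D_c$ shows that the joint probability of a terminating event traceable to one of these $2d$ picks is controlled by the $p_2$ factor associated to the larger component in the pair, giving a total contribution of at most $2 p_2 d / nk$. Summing the two families yields $(k + 2 p_2 d)/nk$.

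The main obstacle is the upper bound for family (b): when $|S_\sigma(u, \tau(v))| = 1$, the ``near'' side of the pair has flip probability $p_1 = 1$, so naively each such pick could contribute $1/nk$ rather than $p_2/nk$. The correct way to recover the $p_2$ factor is to track the joint distribution over $(S,S')$ defined by the greedy coupling, in which the flip of such a size-$1$ component is paired with the flip of the necessarily-size-$\ge 2$ counterpart in the other chain; the joint probability of the resulting terminating event is then governed by the smaller flip probability $\le p_2$. Carrying out this bookkeeping color-by-color over the blocks $D_c$ is the place where the proof needs to be written out carefully.
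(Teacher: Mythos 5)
Your lower bound is exactly the paper's argument: count the at least $k-d-2$ colors $c\notin N(v)\cup\{\sigma(v),\tau(v)\}$, for each of which the identity coupling flips the common singleton $S_\sigma(v,c)=S_\tau(v,c)=\{v\}$ in both chains with mass $p_1/nk=1/nk$. That part is fine.

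The upper bound, however, has a genuine gap: you attribute the $p_2$ factor to the wrong family of components. Under the greedy coupling, the flip of $S_\sigma(u_i,\tau(v))$ is matched --- for every $i\neq j_{max}$, and for all but a $p_B$-sized portion of its mass even when $i=j_{max}$ --- with the flip of $S_\tau(u_i,\sigma(v))$ or with the empty flip, \emph{not} with the size-$\ge 2$ component $S_\tau(u_i,\tau(v))=S_\tau(v,c)$ that you identify. Both $S_\sigma(u_i,\tau(v))$ and $S_\tau(u_i,\sigma(v))$ can be the singleton $\{u_i\}$ (note $v\notin S_\tau(u_i,\sigma(v))$, since $v$ is colored $\tau(v)$ in $\tau$), and then move 3a) flips them jointly with mass up to $p_1/nk = 1/nk$; this is terminating, as it creates a new disagreement at $u_i$. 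So family (b) contributes up to $1/nk$ per neighbor, i.e.\ up to $d/nk$ in total, and your accounting yields $(k+d)/nk$, not $(k+2p_2d)/nk$. The $2p_2d$ term in the lemma actually comes from your family (a): for a color $c$ with $\delta_c>0$ the flips of $S_\sigma(v,c)$ and $S_\tau(v,c)$ are \emph{disjoint} events (they are coupled to $S_\tau(u_{i_{max}},\sigma(v))$ and $S_\sigma(u_{j_{max}},\tau(v))$ respectively, not to each other), so the naive per-color bound is $2/nk$, and it is only because $A_c,B_c\ge 2$ that one gets $(p_{A_c}+p_{B_c})/nk\le 2p_2/nk$; summed over the at most $d$ colors appearing in $N(v)$ this gives the $2p_2d$. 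The per-neighbor components of family (b) are then absorbed, together with the $\delta_c=0$ colors, into the $k$ term. In short, the paper charges $2p_2$ to each color present in $N(v)$ and $1$ to each neighbor, whereas you charge $1$ to each color and $p_2$ to each neighbor; the second charging scheme is not supported by the structure of the greedy coupling.
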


\begin{proof}
	For the lower bound, note that the pair $(S_{\sigma}(v,c),S_{\tau}(v,c))$ is terminating for any $c$. In particular, for $c\neq\{\sigma(v),\tau(v)\}$ such that $\delta_c = 0$, $S_{\sigma}(v,c) = S_{\tau}(v,c) = \{v\}$ \---- note that while the vertex sets for these components are all $\{v\}$, the flips are all distinct as they vary with $c$. Each such pair of flips has probability mass $(1/nk)\cdot p_1 = (1/nk)$, and there are at least $k - d - 2$ such choices of $c$, giving the lower bound.

	For the upper bound, fix a color $c$ for which $\delta_c\neq 0$ and some $i\in[\delta_c]$. For $i\neq i_{max},j_{max}$, the pairs of flips $(S_{\sigma}(u_i,\tau(v)),S_{\tau}(u_i,\sigma(v)))$, $(S_{\sigma}(u_i,\tau(v)),\emptyset)$, and $(\emptyset,S_{\tau}(u_i,\sigma(v)))$ have probability mass $\min(p_{b_i},p_{a_i}), \max(0,p_{b_i}-p_{a_i})$, and $\max(p_{a_i} - p_{b_i},0)$, for a total of $\max(p_{a_i},p_{b_i})$. The remaining pairs of flips have probabily masses which depend on whether $i_{max} = j_{max}$, as shown in Table~\ref{table:masses}.


\begin{table}[ht]
\centering
\caption{Probability masses for some coupled flips}
\label{table:masses}
\begin{tabularx}{\linewidth}{|c|c|Y|Y|}\hline
Flip in $\sigma$                   & Flip in $\tau$                    & $i_{max} = j_{max}$                                     & $i_{max}\neq j_{max}$                                                                       \\ \hline
$S_{\sigma}(v,c)$                  & $S_{\tau}(u_{i_{max}},\sigma(v))$ & $p_A$                                                   & $p_A$                                                                                       \\ \hline
$S_{\tau}(v,c)$                    & $S_{\sigma}(u_{j_{max}},\tau(v))$ & $p_B$                                                   & $p_B$                                                                                       \\ \hline
$S_{\sigma}(u_{i_{max}},\tau(v))$  & $S_{\tau}(u_{i_{max}},\sigma(v))$ & $\min(p_{a_{i_{max}}}-p_A,p_{b_{i_{max}}}-p_B)$         & $\min(p_{a_{i_{max}}}-p_A,p_{b_{i_{max}}})$                                                 \\ \hline
$(S_{\sigma}(u_{j_{max}},\tau(v))$ & $S_{\tau}(u_{j_{max}},\sigma(v))$ & N/A                                                  & $\min(p_{b_{j_{max}}}-p_B,p_{a_{j_{max}}})$                                                 \\ \hline
$(S_{\sigma}(u_{i_{max}},\tau(v))$ & $\emptyset$                       & $\max(0,p_{b_{i_{max}}} - p_B - p_{a_{i_{max}}} + p_A)$ & $\max(0,p_{b_{i_{max}}} - p_{a_{i_{max}}} + p_A)$                                           \\ \hline
$\emptyset$                        & $S_{\tau}(u_{i_{max}},\sigma(v))$ & $\max(0,p_{a_{i_{max}}} - p_A - p_{b_{i_{max}}} + p_B)$ & $\max(0,p_{a_{i_{max}}} - p_A - p_{b_{i_{max}}})$                                           \\ \hline
$S_{\sigma}(u_{j_{max}},\tau(v))$  & $\emptyset$                       & N/A                                                  & $\max(0,p_{a_{j_{max}}} - p_{b_{j_{max}}} + p_B)$                                           \\ \hline
$\emptyset$                        & $S_{\tau}(u_{j_{max}},\sigma(v))$ & N/A                                                  & $\max(0,p_{b_{j_{max}}} - p_B - p_{a_{j_{max}}})$                                           \\ \hline\hline
\multicolumn{2}{|c|}{Total}                             & $\max(p_{a_{i_{max}}} + p_A,p_{b_{i_{max}}} + p_B)$     & $\max(p_{a_{j_{max}}} + p_B,p_{b_{j_{max}}}) + \max(p_{b_{i_{max}}} + p_A,p_{a_{i_{max}}})$\\ \hline
\end{tabularx}
\end{table}

	From these we can conclude that $$nk\cdot\Pr[(S,S') \ \text{terminating}] \le \left(\sum_{c:\delta_c > 0}p_{A_c} + p_{B_c}\right) + \left(\sum_{c:\delta_c > 0,i\in[\delta_c]}\max(p_{a^c_i},p_{b^c_i})\right) + \left(\sum_{c: \delta_c = 0}p_1\right).$$ The sum of the second and third summands is at most $k$. For the first summand, note that when $A_c,B_c$ are nonzero and $\delta_c > 0$, $A_c,B_c\ge 2$, so the first summand is at most $d\cdot(2p_2)$. The desired upper bound follows.
\end{proof}

\begin{cor}
	$\max_{\sigma^{(0)},\tau^{(0)}}\E[\Tstop]\le\frac{nk}{k - d - 2}$.\label{eq:maxET}
\end{cor}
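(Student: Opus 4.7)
The plan is to exhibit $\Tstop$ as being stochastically dominated by a geometric random variable whose success probability is the lower bound furnished by Lemma~\ref{lem:probend}. By definition of the variable-length coupling, the process terminates at step $t$ precisely when the Hamming distance changes from $d(\sigma^{(t-1)},\tau^{(t-1)})$ to $d(\sigma^{(t)},\tau^{(t)})$. Since we run the coupling only as long as $(\sigma^{(t-1)},\tau^{(t-1)})$ is still a neighboring coloring pair, the Markov property ensures that, conditional on $\Tstop \ge t$, the probability of terminating at step $t$ depends only on the current neighboring pair.

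The first step is to lower-bound this conditional termination probability uniformly over all neighboring coloring pairs by $(k - d - 2)/(nk)$. This requires slightly more than the statement of Lemma~\ref{lem:probend}, because a terminating pair $(S,S')$ is not automatically one at which the distance actually changes. However, the lower bound in Lemma~\ref{lem:probend} is witnessed specifically by the pairs $(S_\sigma(v,c), S_\tau(v,c)) = (\{v\},\{v\})$ for the $k - d - 2$ colors $c \notin \{\sigma(v),\tau(v)\}$ with $\delta_c = 0$; each is chosen with probability $(1/nk)\cdot p_1 = 1/nk$ and, when flipped, deterministically recolors $v$ to $c$ in both $\sigma$ and $\tau$, dropping the distance from $1$ to $0$. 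Thus each such event indeed terminates the coupling, and summing gives a lower bound of $(k - d - 2)/(nk)$ on the probability of termination at any given step.

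The second step is the standard geometric bound: if at every step (conditional on not having terminated) the process terminates with probability at least $p \triangleq (k - d - 2)/(nk)$, then $\Tstop$ is stochastically dominated by $\mathrm{Geom}(p)$, and hence
\[
\E[\Tstop] \;\le\; \frac{1}{p} \;=\; \frac{nk}{k - d - 2},
\]
uniformly over the choice of initial neighboring pair $(\sigma^{(0)},\tau^{(0)})$.

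There is no real obstacle here beyond the bookkeeping check that the distance-decreasing events identified above (a) are a subset of the terminating events counted in Lemma~\ref{lem:probend}, and (b) are available regardless of the current neighboring pair, so that the geometric domination is genuinely uniform in the history. Both are immediate from the definitions, so this is essentially an unpacking of Lemma~\ref{lem:probend}.
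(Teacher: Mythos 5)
Your proposal is correct and matches the paper's (implicit) argument: the corollary is stated without proof precisely because it follows from the lower bound of Lemma~\ref{lem:probend} via the standard geometric-domination bound $\E[\Tstop]\le 1/p$. Your extra care in noting that a terminating pair need not change the distance, and that the $k-d-2$ witnesses $(\{v\},\{v\})$ used in the lower bound do coalesce the colorings (since $p_1=1$) and hence genuinely terminate the coupling uniformly over the current neighboring pair, is a correct and worthwhile clarification of the same argument.
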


\subsection{Reduction to Comparing \texorpdfstring{$N_{bad}(\sigma^{(\Tstop-1)},\tau^{(\Tstop-1)})$ and $N_{good}(\sigma^{(\Tstop-1)},\tau^{(\Tstop-1)})$}{Nbad(Tstop) vs. Ngood(Tstop)}}

We now give a reduction from analyzing the expected change in distance under our variable-length coupling to proving that the relation \eqref{eq:comparable} from our thought experiment holds in expectation by the end of the coupling.

\begin{lem}
	Suppose there exists a constant $\gamma > 0$ for which $$\E[N_{bad}(\sigma^{(\Tstop - 1)},\tau^{(\Tstop - 1)})]\le\gamma\cdot\E[N_{good}(\sigma^{(\Tstop - 1)},\tau^{(\Tstop - 1)})]$$ for any initial neighboring coloring pair $(G,\sigma^{(0)},\tau^{(0)})$.

	Let $\lambda^*_{C\gamma}$ be the objective value of the $C\gamma$-mixed coupling LP, where $C := \frac{k + 2p_2d}{k - d - 2}$. Then $$\E[d(\sigma^{(\Tstop)},\tau^{(\Tstop)})-1]\le \frac{-k+\lambda^*_{C\gamma}d}{k - d - 2}.$$\label{lem:reduction}
\end{lem}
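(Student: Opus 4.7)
The plan is to condition on the state at time $\Tstop-1$ and reduce to a reweighted one-step greedy coupling analysis. Let $\mu$ denote the distribution of $(\sigma^{(\Tstop-1)},\tau^{(\Tstop-1)})$. Since $\Tstop$ is the first terminating step and the distance changes only on terminating steps, conditioning on $(\sigma^{(\Tstop-1)},\tau^{(\Tstop-1)})=(\sigma,\tau)$ reduces step $\Tstop$ to one step of the greedy coupling from $(\sigma,\tau)$ conditioned on being terminating, so
\begin{equation*}
\E[d(\sigma^{(\Tstop)},\tau^{(\Tstop)})-1] \;=\; \E_\mu\!\left[\frac{\E[\Delta_1\mid(\sigma,\tau)]}{q(\sigma,\tau)}\right],
\end{equation*}
where $\Delta_1$ is the distance change under one greedy step from $(\sigma,\tau)$ and $q(\sigma,\tau):=\Pr[\text{step is terminating}\mid(\sigma,\tau)]$. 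Applying the one-step bound \eqref{eq:mainvigodaineq} and the constraints of the $C\gamma$-mixed coupling LP classified by the state (\Sing{c}, \Bad{c}, or \Good{c}) of each color $c$ yields
\begin{equation*}
\E[\Delta_1\mid(\sigma,\tau)] \;\le\; \frac{-k+\Lambda(\sigma,\tau)}{nk},\quad \Lambda(\sigma,\tau) := \lambda^*_{sing}N_{sing}(\sigma,\tau)+2\lambda^*_{bad}N_{bad}(\sigma,\tau)+\lambda^*_{good}\!\!\sum_{c:\,\text{\Good{c}}}\!\!\delta_c.
\end{equation*}

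The conceptual heart of the argument is to handle the $1/q(\sigma,\tau)$ factor by passing to the reweighted measure $\tilde\mu(\sigma,\tau):=\mu(\sigma,\tau)/(q(\sigma,\tau)\,Z)$ with $Z:=\E_\mu[1/q]$. By Lemma~\ref{lem:probend}, $1/q$ varies within a multiplicative window of $C=(k+2p_2 d)/(k-d-2)$, so for any nonnegative $X,Y$ the reweighting by $1/q$ inflates the ratio $\E_\mu[X]/\E_\mu[Y]$ by at most a factor $C$; in particular, the hypothesis $\E_\mu[N_{bad}]\le\gamma\E_\mu[N_{good}]$ yields $\E_{\tilde\mu}[N_{bad}]\le C\gamma\cdot\E_{\tilde\mu}[N_{good}]$. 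I now replay the convex-combination argument from Section~\ref{subsec:modifyinglp} under $\tilde\mu$ with $C\gamma$ in place of $\gamma$: using the identity $N_{sing}+2N_{bad}+\sum_{c:\,\text{\Good{c}}}\delta_c=d(v)$ together with the $\tilde\mu$-ratio bound, one decomposes $\E_{\tilde\mu}[\Lambda]$ into a convex combination (with weights summing to $d(v)$) of $\lambda^*_{sing}d(v)$, $\lambda^*_{good}d(v)$, and $\bigl(\tfrac{C\gamma}{C\gamma+1}\lambda^*_{bad}+\tfrac{1}{C\gamma+1}\lambda^*_{good}\bigr)d(v)$. Each of these three is bounded by $\lambda^*_{C\gamma}d(v)\le\lambda^*_{C\gamma}d$ by the three objective-value constraints of the $C\gamma$-mixed coupling LP, so $\E_{\tilde\mu}[\Lambda]\le\lambda^*_{C\gamma}d$. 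Combining yields $\E[d(\sigma^{(\Tstop)},\tau^{(\Tstop)})-1]\le(Z/nk)(-k+\lambda^*_{C\gamma}d)$, which via the $q$-bounds of Lemma~\ref{lem:probend} gives the claimed $(-k+\lambda^*_{C\gamma}d)/(k-d-2)$.

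The main obstacle I anticipate is the final simplification: the direction in which one bounds $Z/(nk)$ depends on the sign of $-k+\lambda^*_{C\gamma}d$, and the naive substitution $Z\le nk/(k-d-2)$ goes the wrong way precisely in the regime of interest where this quantity is negative. Overcoming this likely requires either a case split based on the sign of $-k+\Lambda(\sigma,\tau)$, or sharpening the pointwise one-step bound by isolating the ``coalescence'' contribution $-(k-d-2)/(nk)$ guaranteed by Lemma~\ref{lem:probend} so that the denominator $k-d-2$ falls out cleanly after reweighting. Conceptually, the role of $C$ in the statement is to absorb the $q$-variation penalty into the $\gamma$-scaling rather than into the denominator, so once the sign bookkeeping is in place the rest of the argument is a direct application of LP duality via the convex-combination decomposition.
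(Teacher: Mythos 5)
Your argument is essentially the paper's own proof in different notation: your reweighted measure $\tilde\mu\propto\mu/q$ is exactly the paper's normalized occupation measure $\sum_{T}\Pr[\sigma^{(T-1)}=\sigma,\tau^{(T-1)}=\tau]$, your ``ratio inflates by at most $C$ under reweighting'' step is precisely the paper's bound \eqref{eq:dbadtodgood}, and the convex-combination step is identical. The one obstacle you flag at the end is real, and it is present in the paper's own final step as well: the paper passes from $\frac{\E[\Tstop]}{nk}\bigl(-k+\lambda^*_{C\gamma}d(v)\bigr)$ to $\frac{-k+\lambda^*_{C\gamma}d(v)}{k-d-2}$ by citing $\E[\Tstop]\le \frac{nk}{k-d-2}$, which only gives the stated inequality when $-k+\lambda^*_{C\gamma}d\ge 0$; in the contractive regime one should instead use the other endpoint of the window, $\E[\Tstop]\ge \frac{nk}{k+2p_2 d}$ (from the upper bound of Lemma~\ref{lem:probend}), yielding $\E[d(\sigma^{(\Tstop)},\tau^{(\Tstop)})-1]\le\frac{-k+\lambda^*_{C\gamma}d}{k+2p_2 d}$. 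This is the clean fix \--- no case split on the sign of $-k+\Lambda(\sigma,\tau)$ is needed, since one first bounds $\E_{\tilde\mu}[\Lambda]\le\lambda^*_{C\gamma}d$ and only then multiplies by the nonnegative factor $\E[\Tstop]/nk$. The resulting constant is still a negative absolute constant, so Theorem~\ref{thm:main} goes through with a marginally smaller contraction rate $\alpha$; your proposal is otherwise a faithful and correct reconstruction.
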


\begin{proof}
This mainly just follows by linearity of expectation and the calculation done at the end of Section~\ref{subsec:modifyinglp}, with the slight complication that the probability that the coupling terminates at any given point is only the same up to constant factors.

Let $\lambda^*_{sing}, \lambda^*_{tree}, \lambda^*_{good}$ be the values for $\lambda_{sing},\lambda_{tree},\lambda_{good}$ of the minimizer of the $C\gamma$-mixed coupling LP from Definition~\ref{def:mixedlp}. For alternating components $S,S'$ in $\sigma,\tau$ respectively, define $E^{S,S'}_{\sigma,\tau} = d(\sigma',\tau') - 1$, where $\sigma',\tau'$ is the pair of colorings obtained by flipping $S$ in $\sigma$ and $S'$ in $\tau$, and let $p^{S,S'}_{\sigma^{(T-1)},\tau^{(T-1)}}$ be the probability that $S,S'$ are flipped in one step of greedy coupling starting from $\sigma^{(T-1)}, \tau^{(T-1)}$. Then we have that \begin{equation}
	\E[d(\sigma^{(\Tstop)},\tau^{(\Tstop)})-1] = \sum_{T,\sigma,\tau}\Pr[\sigma^{(T-1)} = \sigma,\tau^{(T-1)}=\tau]\cdot Z(\sigma^{(T-1)},\tau^{(T-1)}),\label{eq:expdiff}
\end{equation} where $$Z(\sigma^{(T-1)},\tau^{(T-1)}) \triangleq \sum_{S,S'}\mathbb{I}[(S,S') \ \text{terminating}]\cdot p^{S,S'}_{\sigma^{(T-1)},\tau^{(T-1)}}\cdot E^{S,S'}_{\sigma^{(T-1)},\tau^{(T-1)}}$$ But note that for $v^{(T)},c^{(T)}$ not terminating, $E^{S,S'}_{\sigma^{(T-1)},\tau^{(T-1)}} = 0$ because the one-step coupling is just the identity coupling, so $Z(\sigma^{(T-1)},\tau^{(T-1)})$ is just the expected change in distance under one step of greedy coupling on the neighboring coloring pair $(G,\sigma^{(T-1)},\tau^{(T-1)})$. Therefore, for any $T\le\Tstop$, \begin{align}
	Z(\sigma^{(T-1)},\tau^{(T-1)}) &= \E[d(\sigma^{(T)},\tau^{(T)}) - 1] \nonumber\\
	&\le \frac{1}{nk}\Big((-1 + \lambda^*_{sing})\cdot N_{sing}(\sigma^{(T-1)},\tau^{(T-1)}) + (-1 + 2\lambda^*_{bad})\cdot N_{bad}(\sigma^{(T-1)},\tau^{(T-1)}) \nonumber\\
	&\qquad \quad + \sum_{\substack{c: \sigma^{(T-1)},\tau^{(T-1)} \\ \text{\Good{c}}}} (-1 + \delta_c\cdot \lambda^*_{good}) - |\{c: \delta_c = 0\}|\Big) \nonumber\\
	&= \frac{1}{nk}\cdot\Big(-k + \lambda^*_{sing}N_{sing}(\sigma^{(T-1)},\tau^{(T-1)}) + 2\lambda^*_{bad}N_{bad}(\sigma^{(T-1)},\tau^{(T-1)}) \nonumber\\
	&\qquad \quad + \sum_{\substack{c: \sigma^{(T-1)},\tau^{(T-1)} \\ \text{\Good{c}}}}\delta_c\cdot \lambda^*_{good}\Big).\label{eq:Z}
\end{align} Because $$N_{sing}(\sigma,\tau) + 2N_{bad}(\sigma,\tau) +\sum_{\substack{c: \sigma,\tau \\ \text{\Good{c}}}}\delta_c = d(v)$$ for all neighboring coloring pairs $(G,\sigma,\tau)$, we conclude from \eqref{eq:expdiff} and \eqref{eq:Z} that \begin{equation}\E[d(\sigma^{(\Tstop)},\tau^{(\Tstop)}) - 1]=\frac{1}{nk}\left(\sum_{T,\sigma,\tau}\Pr[\sigma^{(T-1)} = \sigma,\tau^{(T-1)} = \tau]\right)\left(-k + \lambda^*_{sing}d_{sing} + \lambda^*_{bad}d_{bad} + \lambda^*_{good}d_{good}\right)\label{eq:convexcombo}\end{equation} for some $d_{sing},d_{bad},d_{good}\ge 0$ satisfying \begin{equation}d_{sing} + d_{bad} + d_{good} = d(v).\label{eq:addtodv}\end{equation} But note that \begin{align}
	\frac{d_{bad}}{d_{good}} &= \frac{\sum_{T,\sigma,\tau}\Pr[\sigma^{(T-1)} = \sigma,\tau^{(T-1)} = \tau]\cdot N_{bad}(\sigma,\tau)}{\sum_{T,\sigma,\tau}\Pr[\sigma^{(T-1)} = \sigma,\tau^{(T-1)} = \tau]\cdot N_{good}(\sigma,\tau)} \nonumber\\
	&\le C\cdot\frac{\E[N_{bad}(\sigma^{(\Tstop - 1)},\tau^{(\Tstop - 1)})]}{\E[N_{good}(\sigma^{(\Tstop - 1)},\tau^{(\Tstop - 1)})]} \nonumber\\
	&\le C\gamma\label{eq:dbadtodgood}
\end{align} for $C := \frac{k + 2p_2d}{k - d - 2}$, where the second inequality follows by hypothesis and the first inequality follows by the fact that for $s\in\{\text{bad},\text{good}\}$, \begin{align*}
	\E[N_{s}(\sigma^{(\Tstop - 1)},\tau^{(\Tstop - 1)})] &= \sum_{T,\sigma,\tau}\Pr[\sigma^{(T-1)} = \sigma,\tau^{(T-1)} = \tau]\cdot \Pr[(S_T,S'_T) \ \text{terminating}]\cdot N_{s}(\sigma,\tau) \\
	&\in \left[\frac{k - d - 2}{nk},\frac{k + 2p_2d}{nk}\right]\cdot \sum_{T,\sigma,\tau}\Pr[\sigma^{(T-1)} = \sigma,\tau^{(T-1)} = \tau]\cdot N_{s}(\sigma,\tau),
\end{align*} where we use the notation $x \in [a,b]\cdot y$ to denote the fact that $a\cdot y\le x\le b\cdot y$. The first step above follows by definition and the second step follows by Lemma~\ref{lem:probend}.

Finally, observe that \eqref{eq:addtodv} and \eqref{eq:dbadtodgood} imply that $-k + \lambda^*_{sing}d_{sing} + \lambda^*_{bad}d_{bad} + \lambda^*_{good}d_{good}$ is a convex combination of $-k + \lambda^*_{sing}d(v)$, $-k + \lambda^*_{good}d(v)$, and $-k + \left(\frac{C\gamma}{C\gamma + 1}\lambda^*_{bad} + \frac{1}{C\gamma + 1}\lambda^*_{good}\right)d(v)$, so in particular from \eqref{eq:convexcombo} we get that $$\E[d(\sigma^{(\Tstop)},\tau^{(\Tstop)}) - 1]\le\frac{1}{nk}\left(\sum_{T,\sigma,\tau}\Pr[\sigma^{(T-1)} = \sigma,\tau^{(T-1)} = \tau]\right)(-k + \lambda^*_{C\gamma}d(v))\le \frac{-k + \lambda^*_{C\gamma}d(v)}{k - d - 2},$$ where the final step follows from the fact that $$\sum_{T,\sigma,\tau}\Pr[\sigma^{(T-1)} = \sigma,\tau^{(T-1)} = \tau]\cdot\Pr[(S_T,S'_T) \ \text{terminating}] = 1$$ and the lower bound of Lemma~\ref{lem:probend}.
\end{proof}


\section{Burn-in for the Flip Dynamics}
\label{sec:rand_process}

In this section we prove our main technical lemma. Throughout, assume that $k\ge 1.833d$.

\begin{lem}
Suppose flip probabilities $\{p_{\alpha}\}_{\alpha\in\N}$ satisfy $0\le p_{\alpha}\le p_{\alpha-1}\le p_1 = 1$ for all $\alpha\ge 2$, constraint \eqref{eq:pap}, and additionally $\alpha p_{\alpha - 2}\le 3$ for all $\alpha\ge 3$.

For $$\gamma\triangleq \frac{(6k - d - 2)(k + 2p_2d)}{4(k - d - 2)(k - d - 1)},$$ we have that $$\E[N_{bad}(\sigma^{(\Tstop - 1)},\tau^{(\Tstop - 1)})]\le\gamma\cdot\E[N_{good}(\sigma^{(\Tstop - 1)},\tau^{(\Tstop - 1)})]$$ for any initial neighboring coloring pair $(G,\sigma^{(0)},\tau^{(0)})$.\label{lem:main}
\end{lem}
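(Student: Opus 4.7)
The plan is to control, for each color $c$, the probability that the terminal state $(\sigma^{(\Tstop-1)},\tau^{(\Tstop-1)})$ lies in \Bad{c} versus \Good{c}, then sum over $c$ by linearity. The strategy exploits the rigidity of \Bad{c}: it essentially fixes the colors of $7$ vertices in the two-layer neighborhood of $v$, namely $v$, its two $c$-colored neighbors $u_1,u_2$, and the $4$ additional vertices $w_i^j$ appearing in the size-$3$ alternating components $S_\tau(u_i,\sigma(v))$. Because this structure is fragile, most non-terminating flips will quickly disrupt it, whereas termination occurs only at rate $\Theta(1/n)$ by Lemma~\ref{lem:probend}.

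The core step is an \emph{exit probability} lemma: starting from any $(\sigma,\tau)\in\Bad{c}$, the probability under one step of greedy coupling of transitioning out of \Bad{c} via a non-terminating flip is at least a known quantity, and moreover the resulting state is almost always in \Good{c} (or has $c\notin N(v)$). I would enumerate the disrupting flips: these are flips of components touching the $w_i^j$ or their outer neighbors (and not touching $v$, the $u_i$, or belonging to $D$), so they change $\vec{a}^c$ or $\vec{b}^c$ and kick the state out of \Bad{c}. The hypothesis $\alpha p_{\alpha-2}\le 3$ for $\alpha\ge 3$, together with $p_1=1$, gives a lower bound of roughly $4(k-d-1)/nk$ on the probability of such a disrupting flip (the factor $k-d-1$ counts the free color choices at each boundary neighbor of the $w_i^j$, and the factor $4$ counts the four $w_i^j$ vertices).

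Next, I would run a renewal/ratio argument. For each $c$ and each time $t$, using the Markov property and summing
\[\P[(\sigma^{(\Tstop-1)},\tau^{(\Tstop-1)})\in\Bad{c}]=\sum_{t\ge 1}\P[(\sigma^{(t-1)},\tau^{(t-1)})\in\Bad{c}]\cdot\P[\Tstop=t\mid\sigma^{(t-1)},\tau^{(t-1)}],\]
and the analogous sum for \Good{c}, the exit lemma shows that each ``visit'' to \Bad{c} is followed (before termination) by at least $\Omega(\text{exit rate}/\text{termination rate})$ visits to \Good{c}. Plugging in the termination bound $(k+2p_2d)/nk$ from Lemma~\ref{lem:probend} and the exit bound $4(k-d-1)/nk$ yields $\E[N_{bad}]/\E[N_{good}]$ at most (termination upper bound)$\times$(exit lower bound)$^{-1}$ times a combinatorial factor counting how many colors $c$ can be affected by a single disrupting flip. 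After assembling all factors, the combinatorial coefficient $(6k-d-2)/4$ arises from the structure of the six nontrivial vertices around $v$ in the bad state (the $u_i$ and $w_i^j$) versus the $k-d-2$ colors not adjacent to $v$, giving the stated $\gamma=\frac{(6k-d-2)(k+2p_2 d)}{4(k-d-2)(k-d-1)}$.

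The hardest part will be the exit lemma and, in particular, verifying that the disrupting flip almost always lands in \Good{c} rather than in another \Bad{c'} or in \Sing{c''}. A given non-terminating flip far from $v$ generically affects only one alternating component around $v$, so it breaks the tight size-$(7,3,(3,3),(1,1))$ tuple without creating a new tight tuple; nevertheless one must rule out delicate cascades where a single disrupting flip simultaneously creates another bad configuration for a different color. I expect this to be handled by showing that the set of ``bad-preserving'' flips is negligible compared to the disrupting ones, and the $\alpha p_{\alpha-2}\le 3$ hypothesis is exactly what forces the flip probabilities to decay fast enough to make this tail bound work.
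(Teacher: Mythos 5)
Your overall architecture matches the paper's: reduce by linearity to the per-color inequality $p_{bad}(c)\le\gamma\, p_{good}(c)$, exploit the rigidity of \Bad{c} by observing that recoloring any one of the four vertices $w^i_j$ to one of its at least $k-d-1$ available colors is a single-vertex flip with probability $p_1=1$ that breaks the $(7,3,(3,3),(1,1))$ structure while keeping $\delta_c=2$ (hence lands in \Good{c}), giving an escape rate of $4(k-d-1)/nk$, and compare this against the $(k+2p_2d)/nk$ upper bound on the termination rate from Lemma~\ref{lem:probend}. Two of your worries are non-issues: the argument is entirely per-color, so whether a disrupting flip creates \Bad{c'} for some other color $c'$ is irrelevant, and the resulting state cannot be \Sing{c} since $\delta_c$ is unchanged.

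However, there is a genuine gap. Your ratio/renewal argument implicitly treats a visit to \Good{c} as worth a fixed amount of termination mass, but a state that has escaped to \Good{c} can subsequently \emph{leave} \Good{c} before the coupling terminates (e.g.\ one of the $u_i$ is recolored, or for $\delta_c>2$ enough $c$-colored neighbors of $v$ are flipped away, or $\vec{a}^c,\vec{b}^c$ drift back toward the bad tuple). You must lower-bound the probability that, once in \Good{c}, the walk terminates \emph{while still in} \Good{c}; this requires an $O(1/n)$ upper bound on the one-step probability of leaving \Good{c} for a ``dead'' state, and proving it is the bulk of the paper's proof (Lemma~\ref{lem:goodtobadend}, a multi-case analysis yielding $5/n$, which is where the hypothesis $\alpha p_{\alpha-2}\le 3$ is actually used --- not in the exit lemma as you suggest). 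That bound produces the factor $\frac{k-d-2}{5k+(k-d-2)}=\frac{k-d-2}{6k-d-2}$, which is the true source of the $6k-d-2$ in $\gamma$; it does not come from counting ``six nontrivial vertices.'' A secondary, fixable issue is double counting: a single termination-from-\Good{c} event must be charged to only one \Bad{c} prefix, which the paper handles by requiring the matched continuation never to revisit \Bad{c} and decoding the prefix as the last visit to \Bad{c}.
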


\begin{remark}
The additional constraint that $\alpha p_{\alpha - 2}\le 3$ in fact already holds for the solutions to the $\gamma$-mixed coupling LP that we will consider, so we assume it just to obtain better constant factors in our analysis below.
\end{remark}

We first make a simple reduction. Fix any initial neighboring coloring pair $(G,\sigma^{(0)},\tau^{(0)})$, and for every color $c$ denote by $p_{bad}(c)$ and $p_{good}(c)$ the probability that $\sigma^{(\Tstop)},\tau^{(\Tstop)}$ is in state \Bad{c} and \Good{c}, respectively.

By linearity of expectation we have that $$\E[N_{bad}(\sigma^{(\Tstop)},\tau^{(\Tstop)})] = \sum_{c}p_{bad}(c), \ \ \ \ \E[N_{good}(\sigma^{(\Tstop)},\tau^{(\Tstop)})] = \sum_{c}p_{good}(c).$$ Therefore to show Lemma~\ref{lem:main}, it is enough to show the following.

\begin{lem}
 	$p_{bad}(c)\le\gamma\cdot p_{good}(c)$ for every color $c$.\label{lem:realmain}
\end{lem}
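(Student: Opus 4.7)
The plan is to fix a color $c$ and analyze $p_{bad}(c)$ and $p_{good}(c)$ directly, exploiting the extreme rigidity of the Bad$_c$ configuration. Per Definition~\ref{def:badgood}, Bad$_c$ requires two $c$-colored neighbors $u_1,u_2$ of $v$ with $(\sigma(v),c)$-components in $\tau$ of size exactly $3$ and $(\tau(v),c)$-components in $\sigma$ of size $1$, together with $\{|S_\sigma(v,c)|,|S_\tau(v,c)|\}=\{7,3\}$. As in Figure~\ref{fig:treeexamples}, this leaves four ``witness'' second-layer vertices $w^1_1,w^1_2,w^2_1,w^2_2$ whose colors are completely forced, giving a very narrow target for the coupling to hit at step $\Tstop-1$.

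The first step is to rewrite $p_s(c)=\sum_{t\ge 0}\sum_{(\sigma,\tau)}\P[\sigma^{(t)}=\sigma,\tau^{(t)}=\tau,\Tstop>t]\cdot r_s(\sigma,\tau)$ for $s\in\{\text{bad},\text{good}\}$, where $r_s(\sigma,\tau)$ is the probability that one step of greedy coupling from $(\sigma,\tau)$ both terminates the coupling and is in state $s$ with respect to color $c$ afterward. Then, for each configuration $(\sigma,\tau)$ in Bad$_c$, I would identify at least four distinct \emph{non-terminating} flips (one per witness $w^i_j$) whose effect is to move the configuration into Good$_c$ without terminating. Concretely, if $w^i_j$ is selected together with a color $c'\notin\{\sigma(v),c\}$ not already present among the neighbors of $w^i_j$, the resulting alternating component at $w^i_j$ is a singleton, flipping it happens with probability $p_1=1$, and this recoloring perturbs $(A_c,B_c,\vec{a}^c,\vec{b}^c)$ away from the Bad$_c$ pattern into a tuple that corresponds to Good$_c$. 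Using the hypothesized $\alpha p_{\alpha-2}\le 3$ together with $p_1=1$ to control the remaining flips of larger components containing the $w^i_j$, I would show that these breaking flips contribute total probability at least $4(k-d-1)/(nk)$ per step from any Bad$_c$ configuration.

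Combining this with Lemma~\ref{lem:probend} gives the comparison: from any Bad$_c$ configuration, a terminating flip has probability at most $(k+2p_2d)/(nk)$ while a non-terminating Bad$_c\to$\,Good$_c$ transition has probability at least $4(k-d-1)/(nk)$. A first-passage comparison on the coupled process — tracking, for each trajectory ending in Bad$_c$ at step $\Tstop-1$, the likely insertion of a breaking flip in the recent past — then yields $p_{bad}(c)/p_{good}(c)\le \gamma$ with $\gamma=\frac{(6k-d-2)(k+2p_2d)}{4(k-d-2)(k-d-1)}$, where the $(k-d-2)^{-1}$ factor comes from the lower bound of Lemma~\ref{lem:probend} on termination, and the $(6k-d-2)$ factor arises from counting the Good$_c$-producing terminating flips available at a generic configuration (the several terms in the termination rate that involve components touching $v$ together with the ``empty-flip'' pairs at colors not appearing around $v$).

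The main obstacle will be a careful combinatorial case analysis verifying the constant $4$ in the denominator: one must check that for each of the four witness vertices $w^i_j$ and each replacement color $c'$, the resulting flip is simultaneously (i) non-terminating, (ii) valid under the greedy coupling applied to the \emph{pair} $(\sigma,\tau)$, and (iii) genuinely lands in Good$_c$ rather than in Bad$_{c'}$ for some other color $c'$ or back in Bad$_c$ via an unexpected reconfiguration of the size-$7$ component $S_\sigma(v,c)$. A secondary subtlety is that the labels $\Bad{c},\Good{c}$ are not a Markov state of the coupled process — the full global configuration matters — so the first-passage step above must bound transition probabilities uniformly over all hidden global states consistent with a given label. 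I expect the bulk of the technical work to lie in this enumeration, while the rest of the proof reduces to linearity of expectation and the two-sided bound of Lemma~\ref{lem:probend}.
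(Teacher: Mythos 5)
Your overall architecture is close to the paper's: the paper also fixes the prefix of the coupling schedule up to the moment the pair sits in state \Bad{c}, shows that with probability at least $4(k-d-1)/(nk)$ the next (non-terminating) flip recolors one of the four forced second-layer witnesses $w^i_j$ and lands the pair in \Good{c} (its Lemma~\ref{lem:badtogood} is exactly your "breaking flip" computation), and then compares against the two-sided termination bounds of Lemma~\ref{lem:probend} via a fractional matching of terminating schedules to longer "satisfying" schedules. So the escape step and the role of Lemma~\ref{lem:probend} in your plan are correct.

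The genuine gap is that your argument never controls what happens \emph{between} the breaking flip and termination. After escaping to \Good{c}, the process can drift back out of \Good{c} (e.g.\ back into \Bad{c}, or to a configuration where $c$ no longer appears around $v$) before a terminating flip occurs, and then the matched trajectory does not end in \Good{c} at time $\Tstop-1$. The paper handles this by introducing an absorbing stage \Badend{c} and proving (its Lemma~\ref{lem:goodtobadend}, a six-case analysis that is the bulk of the technical work) that from any pair in stage \Good{c} the one-step probability of degrading is at most $5/n$, versus at least $(k-d-2)/(nk)$ for terminating. The resulting competition gives the conditional probability $\frac{k-d-2}{5k+(k-d-2)}=\frac{k-d-2}{6k-d-2}$ of terminating before degrading, and \emph{that} is where the factor $6k-d-2$ in $\gamma$ comes from --- not, as you write, from "counting the \Good{c}-producing terminating flips available at a generic configuration." Your misattribution of this factor indicates the missing step rather than a mere bookkeeping difference: without a uniform bound on the \Good{c}-to-bad degradation rate, the first-passage comparison does not close and the stated $\gamma$ cannot be derived. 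Relatedly, your "insertion of a breaking flip in the recent past" needs an injectivity/unique-decoding property to avoid double counting; the paper gets this by requiring the satisfying continuation never to revisit \Bad{c} (property 2 of its Definition~\ref{def:satisfying}), which again presupposes the degradation bound you have not supplied.
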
 

This is certainly true for $c = \sigma^{(0)}(v),\tau^{(0)}(v)$, in which case $p_{bad}(c) = 0$ and $p_{good}(c) = 1$ by Observation~\ref{obs:always}. We point out that while the case of $c = \sigma^{(0)}(v),\tau^{(0)}(v)$ is the one for which the fact that our state space includes all colorings, improper and proper, introduces complications in the definition of the greedy coupling (see Remark~\ref{remark:specialcase}), it happens to be the easiest case of Lemma~\ref{lem:realmain}.

So henceforth assume $c\neq\sigma^{(0)}(v),\tau^{(0)}(v)$. We proceed via a fractional matching argument. Take any coupling schedule $\Sigma_{pre} = (S_1,S_1), (S_2,S_2), \cdots, (S_{T-1},S_{T-1})$ consisting of pairs of identical flips, and define $\mathcal{W}$ to be the set of all coupling schedules of the form $(S_1,S_1), (S_2,S_2), \cdots, (S_{T-1},S_{T-1}), (S_T, S'_T)$ for $(S_T,S'_T)$ terminating. In other words, $\mathcal{W}$ consists of all $T$-step coupling schedules whose first $T - 1$ steps are fixed to $\Sigma_{pre}$ and which only changes the distance between the colorings in the last step $(S_T,S'_T)$. We will match to the collection of schedules $\mathcal{W}$ an (infinite) collection of schedules of the following form.

\begin{defn}
	Fix $S_1,...,S_{T-1}$. A coupling schedule $\Sigma_*$ starting from the neighboring coloring pair $(G,\sigma^{(0)},\tau^{(0)})$ is \emph{satisfying} if it is of the form \begin{equation}\Sigma_* = (S_1,S_1),\cdots,(S_{T-1},S_{T-1}),(S^*_T,S^*_T),\cdots, (S^*_{T^*-1}, S^*_{T^*-1}), (S^*_{T^*},S'^*_{T^*})\label{eq:defsatisfying}\end{equation} for $(S^*_{T^*},S'^*_{T^*})$ terminating, and gives rise to a sequence of colorings $$(\sigma^{(0)},\tau^{(0)}), (\sigma^{(1)},\tau^{(1)}),\cdots,(\sigma^{(T-1)},\tau^{(T-1)}),(\sigma^{(T)}_*,\tau^{(T)}_*),...,(\sigma^{(T^*)}_*,\tau^{(T^*)}_*)$$ for which \begin{enumerate}
 	\item $\sigma^{(T^*-1)}_*,\tau^{(T^*-1)}_*$ are in state \Good{c}
 	\item $\sigma^{(t)}_*,\tau^{(t)}_*$ is not in state \Bad{c} for any $T\le t<T^*$.
	\end{enumerate}\label{def:satisfying}
\end{defn}

Property 2) in Definition~\ref{def:satisfying} ensures that from any satisfying schedule $\Sigma_*$, we can uniquely decode the collection $\mathcal{W}$ to which it is being fractionally matched: in $\Sigma_*$, take the last pair of colorings in state \Bad{c}, and $\Sigma_{pre}$ is the subsequence of $\Sigma_*$ starting from $(\sigma^{(1)},\tau^{(1)})$ and ending at that pair.

If we can show that for any $\Sigma_{pre} = S_1,...,S_{T-1}$ $$\sum_{\Sigma_* \ \text{satisfying}}\Pr[(S^*_T,S^*_T),\cdots, (S^*_{T^*-1}, S^*_{T^*-1}), (S^*_{T^*},S'^*_{T^*})\vert \Sigma_{pre}]\ge \frac{1}{\gamma}\cdot\frac{k + 2p_2d}{nk},$$ then because in general, $$\Pr[(S_T,S'_T) \ \text{terminating}\vert\Sigma_{pre}]\le\frac{k+2p_2d}{nk}$$ by the upper bound of Lemma~\ref{lem:probend}, this will imply that $p_{bad}(c)\le\gamma\cdot p_{good}(c)$.

To exhibit such a collection of satisfying coupling schedules $\Sigma_*$, we first define a coarsening of the state space as follows. Starting from the neighboring coloring pair $(G,\sigma^{(T-1)},\tau^{(T-1)})$ which is in state \Bad{c}, take any subsequent coupling schedule \begin{equation}(S^*_T,S^*_T),...,(S^*_{T^*},S'^*_{T^*})\label{extraschedule}\end{equation} with $(S^*_{T^*},S'^*_{T^*})$ terminating which gives rise to a sequence of pairs of colorings \begin{equation}(\sigma^{(T)}_*,\tau^{(T)}_*),...,(\sigma^{(T^*)}_*,\tau^{(T^*)}_*),\label{eq:extra}\end{equation} where $(G,\sigma^{(t)}_*,\tau^{(t)}_*)$ is a neighboring coloring pair for all $t$ except $t = T'$. Define the following auxiliary states. To avoid confusion with the states defined in Definition~\ref{def:badgood}, we will refer to the auxiliary states defined below as \emph{stages}.

\begin{defn}
	Let $c$ be any color, not necessarily one appearing in the neighborhood of $v$. We say that $\sigma^{(t)}_*,\tau^{(t)}_*$ is in \emph{stage \Goodend{c}} if $\sigma^{(t-1)}_*,\tau^{(t-1)}_*$ is in state \Good{c} and the pair of flips $(S,S')$ giving rise to $\sigma^{(t)}_*,\sigma^{(t)}_*$ from $\sigma^{(t-1)}_*,\tau^{(t-1)}_*$ is terminating.

	We say $\sigma^{(t)}_*,\tau^{(t)}_*$ is in \emph{stage \Badend{c}} if, intuitively, we choose to quit looking for satisfying coupling schedules among those of which $(\sigma^{(0)}_*,\tau^{(0)}_*),...,(\sigma^{(t)}_*,\tau^{(t)}_*)$ is a prefix. Formally, $\sigma^{(t)}_*,\tau^{(t)}_*$ is in stage \Badend{c} if least one of the following conditions holds (note that these conditions aren't necessarily mutually exclusive):
	\begin{enumerate}
		\item[(i)] $t = T $ and the pair of flips $(S,S')$ giving rise to $\sigma^{(T)}_*,\sigma^{(T)}_*$ from the initial pair $\sigma^{(T-1)},\tau^{(T-1)}$ is terminating (i.e. if $(S_1,S_1),...,(S_{T-1},S_{T-1}),(S,S')\in\mathcal{W}$).
		\item[(ii)] $t = T$ and $\sigma^{(t)}_*,\tau^{(t)}_*$ is not in state \Good{c}.
		\item[(iii)] $\sigma^{(t-1)}_*,\tau^{(t-1)}_*$ is in state \Good{c} but $\sigma^{(t)}_*,\tau^{(t)}_*$ is not in state \Good{c} or stage \Goodend{c} (this includes the case that $c$ does not appear in the neighborhood of $v$ in $\sigma^{(t)}_*,\tau^{(t)}_*$).
		\item[(iv)] $t > T$ and $\sigma^{(t-1)}_*,\tau^{(t-1)}_*$ is in stage \Badend{c}.
	\end{enumerate}

	If $\sigma^{(t)}_*,\tau^{(t)}_*$ is not in stage \Badend{c} or \Goodend{c} and is in state \Bad{c} (resp. \Good{c}), then we say it is also in \emph{stage \Bad{c}} (resp. \emph{stage \Good{c}}).\label{def:stages}
\end{defn}

Note that if a sequence of the form \eqref{eq:extra} contains a pair of colorings in stage \Goodend{c}, that pair must be $\sigma^{(T^*)}_*,\tau^{(T^*)}_*$. Furthermore, given any sequence \eqref{eq:extra} for which $\sigma^{(T^*)}_*,\tau^{(T^*)}_*$ is in stage \Goodend{c} with associated coupling schedule \eqref{extraschedule}, note that the corresponding coupling schedule $\Sigma_*$ defined in \eqref{eq:defsatisfying} is satisfying, by definition of stage \Badend{c}.

So it is enough to show that if we start from a neighboring coloring pair $(G,\sigma^{(T-1)},\tau^{(T-1)})$ which is in state \Bad{c} and evolve a sequence of pairs of colorings \eqref{eq:extra} according to the greedy coupling at each step, then \begin{equation}\Pr[\sigma^{(T^*)}_*,\tau^{(T^*)}_*) \ \text{are in stage \Goodend{c}}\vert \sigma^{(T-1)},\tau^{(T-1)}]\ge\frac{1}{\gamma}\cdot\frac{k + 2p_2d}{nk}.\label{eq:mainmarkov}\end{equation} It remains to bound the probabilities of the transitions between the different stages of Definition~\ref{def:stages} under the flip dynamics and the greedy coupling (see Figure~\ref{fig:transitions} for a depiction of the transitions that can occur). A key point is that these bounds will be independent of the specific colorings or structure of $G$.


\begin{figure}[ht]
\centering
\begin{tikzpicture}[align=center]
\usetikzlibrary{positioning,shapes.geometric,arrows}
\tikzstyle{box} = [rectangle, rounded corners, minimum width=2cm, minimum height=1cm,text centered, draw=black, fill=white]
\tikzstyle{arrow} = [thin,->,>=stealth]
\tikzstyle{line} = [draw, -latex']

	\node (bad) [box] {\Bad{c}};
	\node (badend) [box, below=0.75cm of bad] {\Badend{c}};
	\node (goodend) [box,right=3cm of badend] {\Goodend{c}};
	\node (good) [box,right=3.1cm of bad] {\Good{c}};
	\draw [arrow] (bad) -- (badend);
	\draw [arrow] (bad) -- (good);
	\draw [arrow] (good) -- (badend);
	\draw [arrow] (good) -- (goodend);
	\path [line] (good) edge[loop right]();
	\path [line] (badend) edge[loop left]();
\end{tikzpicture}
\caption{Possible transitions among stages of Definition~\ref{def:stages}}
\label{fig:transitions}
\end{figure}
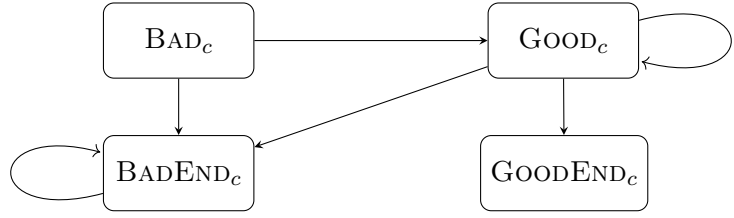

For $\sigma^{(T^*)}_*,\tau^{(T^*)}_*$ to be in stage \Goodend{c}, $\sigma^{(t)}_*,\tau^{(t)}_*$ cannot be in stage \Bad{c} for any $t\ge T$. In other words, because $\sigma^{(T-1)}_*,\tau^{(T-1)}_*$ is in state \Bad{c}, the pair of colorings must escape from \Bad{c} in the very first step of \eqref{extraschedule} and never return. We first show this probability of escape is comparable to the total probability mass of $\mathcal{W}$.


\begin{lem}
Let $\sigma,\tau$ be any neighboring coloring pair in state \Bad{c}, and let $\sigma',\tau'$ be derived from one step of greedy coupling. Then $P[\sigma',\tau' \ \text{in state \Good{c}}]\ge\frac{4(k - d - 1)}{nk}.$\label{lem:badtogood}
\end{lem}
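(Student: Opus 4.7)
The plan is to lower-bound $\P[\sigma',\tau'\ \text{in state \Good{c}}]$ by exhibiting at least $4(k-d-1)$ singleton, identity-coupled, non-terminating flips in the greedy coupling, each of which deterministically moves the pair from \Bad{c} to \Good{c}. Since $p_1 = 1$, any such flip is selected and executed with probability exactly $\frac{1}{nk}$, giving the claimed bound. The two \Bad{c} tuples are exchanged by swapping $\sigma$ and $\tau$, so WLOG I treat $(A_c,B_c,\vec{a}^c,\vec{b}^c) = (7,3,(3,3),(1,1))$.

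The 4 distinguished vertices arise as follows. The decomposition $S_\sigma(v,c) = \{v\}\cup S_\tau(u_1,\sigma(v))\cup S_\tau(u_2,\sigma(v))$, combined with $|S_\sigma(v,c)| = 7$ and $|S_\tau(u_i,\sigma(v))| = 3$, forces the two size-$3$ components to be disjoint, so $|S_\sigma(v,c)\setminus\{v,u_1,u_2\}| = 4$. Denote these four vertices $w_1,\ldots,w_4$, two from each $S_\tau(u_i,\sigma(v))$. Each $w_j$ is colored $c$ or $\sigma(v)$ in both $\sigma,\tau$; a $c$-colored $w_j$ must satisfy $w_j\notin N(v)$ (since $u_1,u_2$ exhaust $v$'s $c$-colored neighbors), while a $\sigma(v)$-colored $w_j$ may or may not lie in $N(v)$ (the latter corresponding to triangles or improper coloring at $v$). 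The alternating-path condition at $w_j$ forces the color $\{\sigma(v),c\}\setminus\{\sigma(w_j)\}$ into $N(w_j)$'s palette. Moreover, from $u_1,u_2\in S_\tau(v,c)$ and $|S_\tau(v,c)|=3$, one deduces $S_\tau(v,c) = \{v,u_1,u_2\}$, a fact I use below.

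For each $w_j$, I count valid choices of $c'$ such that $(w_j,c')$ triggers a singleton identity non-terminating flip moving to \Good{c}. The required conditions on $c'$ are: $c'\neq\sigma(w_j)$ (non-empty); $c'\notin N_\sigma(w_j)\cup N_\tau(w_j)$'s palette (singleton in both colorings); and $c'\neq\tau(v)$ (to prevent $w_j$ from entering the $(\tau(v),c)$-subgraph of $\tau'$ and inflating $|S_{\tau'}(v,c)|$ toward $7$). A case analysis bounds the union of excluded colors by $d+1$: in the triangle case, $v\in N(w_j)$ contributes both $\sigma(v)$ and $\tau(v)$ to the two palettes, absorbing both the $\sigma(w_j)=\sigma(v)$ and $\tau(v)$ exclusions into $N_\sigma(w_j)$'s palette of size $\leq d$; in the non-triangle case, the opposite color to $\sigma(w_j)$ is in $N(w_j)$'s palette (from the alternating structure), absorbing one exclusion, and the $\tau(v)$-exclusion is handled either by absorption (when $\tau(v)\in N(w_j)$'s palette) or by invoking $S_\tau(v,c) = \{v,u_1,u_2\}$ together with the limited extent of $w_j$'s $c$-colored neighborhood to ensure $|S_{\tau'}(v,c)|$ stays well below $7$ even when $c' = \tau(v)$. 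This yields at least $k-d-1$ valid choices of $c'$ per vertex.

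For each such $(w_j,c')$: (i) $S_\sigma(w_j,c') = S_\tau(w_j,c') = \{w_j\}$ is an identical singleton; since $w_j\notin\{v,u_1,u_2\}$, it lies outside $D$, so the flip is non-terminating. (ii) Since $c'\notin\{\sigma(v),c\}$, $w_j$ leaves the $(\sigma(v),c)$-subgraph under the flip, forcing $|S_{\sigma'}(v,c)|\leq 6$, ruling out the first \Bad{c} template. (iii) $|S_{\tau'}(v,c)|<7$ by the argument above, ruling out the second template. (iv) Neither $u_1$ nor $u_2$ is recolored, so $\delta_c$ remains equal to $2$, ruling out \Sing{c}. Hence $(\sigma',\tau')$ is in \Good{c}. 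Multiplying 4 vertices by $\geq k-d-1$ colors each, and by $\frac{1}{nk}$ per $(w,c')$ choice, yields the bound $\frac{4(k-d-1)}{nk}$. The main obstacle I anticipate is the uniform justification of the $d+1$-bound on excluded colors — specifically, handling the $\tau(v)$-exclusion in the non-triangle subcase by appealing to the structural facts of \Bad{c} (notably $S_\tau(v,c)=\{v,u_1,u_2\}$ and $|S_\sigma(u_i,\tau(v))|=1$) rather than only the singleton/identity constraints.
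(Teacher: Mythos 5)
Your construction is essentially the paper's own proof: the paper likewise takes the four vertices of $S_{\tau}(u_1,\sigma(v))\cup S_{\tau}(u_2,\sigma(v))$ other than $u_1,u_2$, recolors each to one of at least $k-d-1$ colors yielding an identity-coupled singleton flip of probability $1/nk$, and notes that the recolored vertex leaves the $(\sigma(v),c)$-subgraph so that the pair lands in \Good{c}; if anything you are more explicit than the paper about the triangle case, non-termination, and why neither \Bad{c} template nor \Sing{c} can result.

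The one step I would push back on is your handling of $c'=\tau(v)$ in the non-triangle subcase, which you need because $\sigma(w_j)$ need not lie in the palette of $N(w_j)$ and so naively you would only get $k-d-2$ colors. Your stated justification \---- that $|S_{\tau'}(v,c)|$ ``stays well below $7$'' \---- is not true in general: recoloring $w_j$ to $\tau(v)$ can attach $w_j$, then all of $w_j$'s $c$-colored neighbors, then their $\tau(v)$-colored neighbors, and so on, to $S_{\tau'}(v,c)$, which can therefore easily exceed $7$. The conclusion survives for a different reason: reaching the template $(3,7,(1,1),(3,3))$ requires the two components $S_{\sigma'}(u_1,\tau(v))$ and $S_{\sigma'}(u_2,\tau(v))$ to become two \emph{distinct} sets of size $3$, but a single recoloring of $w_j$ to $\tau(v)$ either leaves at least one of the singletons $S_{\sigma}(u_i,\tau(v))=\{u_i\}$ unchanged (if $w_j$ does not become a $\tau(v)$-colored neighbor of $u_i$) or causes the two components to coincide through $w_j$, in which case the duplicate-removal convention zeroes one entry of $\vec{b}$; either way $\vec{b}\neq(3,3)$. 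With that substitution the argument is complete, and it matches the paper's (which silently allows $c'=\tau(v)$ and does not address this point at all).
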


\begin{proof}
Without loss of generality suppose that $(A_c,B_c,\vec{a}^c,\vec{b}^c) = (7,3,(3,3),(1,1))$. Let $u_1,u_2$ be the two $c$-colored neighbors of $v$, and denote the elements of $S_{\tau}(u_1,\sigma(v))$ and $S_{\tau}(u_2,\sigma(v))$ by $\{u_1,w^1_1,w^2_1\}$ and $\{u_2,w^1_2,w^2_2\}$ respectively. We know that the vertices $\{w^1_1,w^2_1,w^1_2,w^2_2\}$ are all distinct. With probability $\frac{4}{n}\cdot\frac{k - d - 1}{k}$, the pair of flips $(S,S')$ chosen under the greedy coupling satisfies $S = S' = S_{\sigma}(w^i_j,c')$ for some $i,j\in\{1,2\}$ and $c'\in A_{\sigma}(w^i_j)\backslash\{\sigma(w^i_j)\}$ (note that $A_{\sigma}(w^i_j)\backslash\{\sigma(w^i_j)\}$ contains neither $\sigma(v)$ nor c). In this case, the flips are just of vertex $w^i_j$ from color $\sigma(w^i_j)$ to a different color not already present in its neighborhood, so the neighboring coloring pair $\sigma',\tau'$ resulting from the flips is in state \Good{c}.
\end{proof}

Once a pair of colorings has escaped from state \Bad{c} into state \Good{c}, at every step it can only stay at \Good{c}, end at stage \Goodend{c}, or get absorbed into stage \Badend{c}. We show that the last two events have probability $\Omega(1/n)$ and $O(1/n)$ respectively.

\begin{lem}
Let $\sigma,\tau$ be any neighboring coloring pair in stage \Good{c}, and let $\sigma',\tau'$ be derived from one step of greedy coupling. Then $P[\sigma',\tau' \ \text{in stage \Goodend{c}}]\ge\frac{k - d - 2}{nk}.$\label{lem:goodtogoodend}
\end{lem}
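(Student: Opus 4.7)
\medskip

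\noindent\textbf{Proof plan.} My plan is to read this off almost directly from the lower bound in Lemma~\ref{lem:probend}, after verifying that the definition of stage \Goodend{c} forces no extra requirements beyond the pair of flips being terminating.

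First I would observe that being in stage \Good{c} at time $t-1$ implies, by the last clause of Definition~\ref{def:stages}, that $\sigma,\tau$ is in state \Good{c}. So the only remaining requirement for $\sigma',\tau'$ to be in stage \Goodend{c} is that the pair of flips $(S,S')$ selected in the greedy one-step coupling starting from $\sigma,\tau$ be terminating. I would then simply invoke the lower bound $\Pr[(S,S') \text{ terminating}] \ge (k-d-2)/(nk)$ from Lemma~\ref{lem:probend}, which is stated for any neighboring coloring pair (so in particular it applies here).

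The one subtlety I would be careful about is ruling out the possibility that $\sigma',\tau'$ simultaneously satisfies the defining conditions for stage \Badend{c}, which would be a problem since the ``stage'' labels are intended to be mutually exclusive. Inspecting the four cases in Definition~\ref{def:stages}: case (iv) requires the previous pair to already be in \Badend{c}, which contradicts the assumption that $\sigma,\tau$ is in stage \Good{c}; cases (i) and (ii) refer to $t=T$, but (i) simply says the flip is terminating, which is precisely the event we are conditioning on and which is already consistent with \Goodend{c}, and (ii) requires $\sigma,\tau$ not to be in state \Good{c}, which again contradicts our hypothesis; and case (iii) requires $\sigma',\tau'$ not to be in state \Good{c} \emph{or} in stage \Goodend{c}, which is vacuous once we have argued that $\sigma',\tau'$ is in \Goodend{c}. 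So the only live requirement really is that the pair of flips be terminating.

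Hence the argument collapses to one line: by Lemma~\ref{lem:probend}, the probability that the greedy coupling chooses a terminating pair of flips is at least $(k-d-2)/(nk)$, and on that event $\sigma',\tau'$ lies in stage \Goodend{c} by definition. I do not anticipate any real obstacle here; the content of the lemma lives entirely in Lemma~\ref{lem:probend} and the bookkeeping of Definition~\ref{def:stages}.
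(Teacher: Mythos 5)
Your proposal is correct and matches the paper's proof, which is exactly the one-line invocation of the lower bound in Lemma~\ref{lem:probend} combined with the definition of stage \Goodend{c}. The extra check that the \Badend{c} conditions cannot simultaneously apply is sound bookkeeping but not a different argument.
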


\begin{proof}
By Lemma~\ref{lem:probend}, the probability that the next pair of flips $(S,S')$ chosen under the greedy coupling is terminal is at least $\frac{k - d - 2}{nk}$.
\end{proof}

\begin{lem}
Let $\sigma,\tau$ be any neighboring coloring pair in stage \Good{c}, and let $\sigma',\tau'$ be derived from one step of greedy coupling. Then $P[\sigma',\tau' \ \text{in stage \Badend{c}}]\le\frac{5}{n}.$\label{lem:goodtobadend}
\end{lem}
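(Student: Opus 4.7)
The transition from stage \Good{c} to stage \Badend{c} in a single step of the greedy coupling can only occur via clause (iii) of Definition~\ref{def:stages}: the flip must be non-terminating (otherwise the pair lands in \Goodend{c}, not \Badend{c}), and the resulting pair $(\sigma', \tau')$ must fail to be in state \Good{c} (so it is in \Sing{c}, \Bad{c}, or has $c$ absent from the neighborhood of $v$). My plan is to bound the probability of these combined events by enumerating non-terminating flips that perturb the summary $(\delta_c, A_c, B_c, \vec{a}^c, \vec{b}^c)$ enough to leave the \Good{c} region.

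For a non-terminating flip, the flipped set satisfies $S = S' = S_\sigma(v', c')$ and is not one of the terminating components $S_\sigma(v, \cdot)$, $S_\tau(v, \cdot)$, $S_\sigma(u, \tau(v))$, or $S_\tau(u, \sigma(v))$ for $u \in N(v)$. I would classify such flips into two types. Type (a): $S$ contains some neighbor $u$ of $v$, in which case $\sigma(u)$ is altered and hence possibly $\delta_c$ or the vectors $\vec{a}^c, \vec{b}^c$ are changed. The relevant flipped components are $S_\sigma(u, c'')$ for $u \in N(v)$ and $c''$ ranging over a restricted set; the count is $O(d)$ (not $O(dk)$) because for most $c''$ the flip either is terminating or merely decreases $\delta_c$ without leaving \Good{c}, so summing the flip probabilities $p_{|S|}/(nk)$ using $p_1 = 1$ and $\alpha p_\alpha \le 1$ yields an $O(1/n)$ contribution since $k \ge 1.833 d$. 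Type (b): $S$ avoids $\{v\} \cup N(v)$ but shares vertices with one of the alternating components defining $A_c, B_c, \vec{a}^c, \vec{b}^c$, so it truncates or splits that component. Any such $S$ satisfies $|S| \ge 2$ (it attaches at distance $\ge 2$ from $v$) and hangs off one of $O(d)$ alternating components rooted at $v$ or its neighbors. I would use the extra hypothesis $\alpha p_{\alpha-2} \le 3$ together with $\alpha p_\alpha \le 1$ to bound the per-component contribution, again giving an $O(1/n)$ bound.

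The main obstacle is making the case analysis for type (b) exhaustive while keeping the constants small enough to fit within the $5/n$ budget. The flipped component can affect the summary in many distinct ways: shortening $S_\sigma(v, c)$ midway, splitting an entry of $\vec{a}^c$ or $\vec{b}^c$, or indirectly exposing or concealing a vertex along a subtle alternating chain. I would handle all sub-cases uniformly by assigning each qualifying $S$ a unique ``attachment vertex'' at distance $\ge 2$ from $v$ along one of the $O(d)$ relevant alternating components, bounding the number of candidate components at each attachment by a constant, and applying $\alpha p_{\alpha-2} \le 3$ to control the per-component cost. Summing the contributions of types (a) and (b) then yields the claimed $5/n$ bound.
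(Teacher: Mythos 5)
Your setup is right — the transition can only happen through clause (iii) of Definition~\ref{def:stages}, the flip must be non-terminating so $S=S'$, and the target states are \Sing{c}, \Bad{c}, or ``$c$ absent.'' But your plan then diverges from what actually makes the bound work, and the divergence is where the gap lies. The paper's proof is organized not by the \emph{type of flip} but by the \emph{starting configuration} $(\delta_c, A_c,B_c,\vec{a}^c,\vec{b}^c)$, and this is essential: \Bad{c} consists of exactly two tuples ($(7,3,(3,3),(1,1))$ and its mirror) with $\delta_c=2$, and \Sing{c} requires $\delta_c=1$, so the probability of entering them depends entirely on how ``far'' the current configuration is from these targets. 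When the configuration is far (e.g.\ some $a_i>3$, or $\delta_c>2$), reaching the target forces a single large component to be flipped, which is controlled by $\alpha p_\alpha\le 1$ and $\alpha p_{\alpha-2}\le 3$ (giving the worst-case $3/n$). When it is close (e.g.\ $(a_1,a_2)=(1,1)$ already), the flip must simultaneously hit a specific $O(d)$-sized vertex set \emph{and} involve one of two specific colors, giving $O(d/(nk))=O(1/n)$ with an explicit constant. Your uniform ``attachment vertex'' scheme for Type~(b) does not see this distinction — you cannot bound ``the number of candidate components at each attachment by a constant'' without conditioning on which target state is reachable from the current configuration — and you explicitly flag this as an unresolved obstacle rather than resolving it. Since the lemma's value lies entirely in the explicit constant $5$ (it is what determines $\gamma$ and hence $\epsilon_0$), a plan that only promises ``$O(1/n)$'' has not proved the statement.

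There is also a concrete error in your Type~(a) analysis: you claim that for most colors $c''$ the flip ``merely decreases $\delta_c$ without leaving \Good{c}.'' When $\delta_c=2$ — which is the generic case the paper spends six sub-cases on — \emph{any} recoloring of a $c$-colored neighbor of $v$ drops $\delta_c$ to $1$ and lands in \Sing{c}, hence in \Badend{c}. This event is not negligible; it is exactly the event $\mathcal{E}_1$ in the paper, it costs $2/n$, and together with the worst sub-case bound of $3/n$ for $\mathcal{E}_2$ it accounts for the entire $5/n$ budget. So the claim that Type~(a) contributes a small $O(1/n)$ remainder is both unquantified and, as stated, false in the case that dominates the bound.
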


\begin{proof}
First note that in order for $\sigma',\tau'$ to be in stage \Badend{c} given that $\sigma,\tau$ were in stage \Good{c}, it must be that condition (iii) of Definition~\ref{def:stages} holds. Furthermore, the pair of components $(S,S')$ flipped to get from $\sigma,\tau$ to $\sigma',\tau'$ cannot be terminal, so $S = S'$.

Suppose that $\delta_c > 2$. In this case, the probability that $\sigma,\tau$ leave stage \Good{c} for stage \Badend{c} is at most the probability that enough $c$-colored neighbors of $v$ are flipped so that $\delta_c$ becomes at most 2. An alternating component $S$ outside of $D_c$ and containing at least $(\delta_c - 2)$ $c$-colored neighbors of $v$ must be flipped in both colorings to achieve this, and the probability the greedy coupling chooses any particular such $(S,S)$ is $p_{\delta_c - 2}/(nk)$. The number of such alternating components is at most $\delta_c\cdot(k - 2)$, so by a union bound the probability that $\delta_c$ becomes 2 is at most $\frac{\delta_c p_{\delta_c - 2}\cdot (k-2)}{nk} < 3/n$.

On the other hand, if $\delta_c < 2$, then $\sigma,\tau$ are not in stage \Good{c} to begin with. So for the rest of the proof, we consider the case of $\delta_c = 2$. We will proceed by casework on $(A_c,B_c,\vec{a}^c,\vec{b}^c)$, which we will denote as $(A,B,(a_1,a_2),(b_1,b_2))$ for simplicity.

Let $\mathcal{E}$ denote the event that $\sigma,\tau$ transition to stage \Badend{c}. Denote the two $c$-colored neighbors of $v$ by $u_1, u_2$. We have that $\mathcal{E}\subseteq \mathcal{E}_1\cup\mathcal{E}_2$, where $\mathcal{E}_1$ is the event that $u_1$ or $u_2$ is flipped in both colorings to a new color, and $\mathcal{E}_2$ is the event that $u_1$ or $u_2$ are not flipped but $\sigma,\tau$ nevertheless transition to stage \Badend{c}. Obviously $\Pr[\mathcal{E}_1]\le 2/n$. We now proceed to bound $\Pr[\mathcal{E}_2]$.

\begin{case}
	If $a_i > 3$ or $b_i > 3$ for some $i = 1,2$, then $\Pr[\mathcal{E}_2]\le\frac{3}{n}$.\label{case:worst}
\end{case}


\begin{proof}
	Without loss of generality, say that $a_1 > 3$. From the vertices of $S_{\tau}(u_1,\sigma(v))$ pick out $w,w'\neq u_1$ such that $w,w',u_1$ form an alternating component. We have that event $\mathcal{E}_2\subseteq\mathcal{A}\cup\mathcal{B}$, where $\mathcal{A}$ is the event that all vertices in $S_{\tau}(u_1,\sigma(v))\backslash\{u_1,w,w'\}$ are flipped so that $S_{\tau'}(u_1,\sigma(v))\subseteq\{u_1,w,w'\}$, and $\mathcal{B}$ is the event that $w$ or $w'$ is flipped and no longer belongs to $S_{\tau'}(u_1,\sigma(v))$. Obviously $\Pr[\mathcal{B}]\le 2/n$. For $\mathcal{A}$, the $(a_1 - 3)$ neighbors of $u_1,w,w'$ in $S_{\tau}(u_1,\sigma(v))$ must be flipped at once, which by a union bound occurs with probability at most $\frac{1}{n}\cdot(a_1 - 3)\cdot p_{a_1 - 3}\le\frac{1}{n}$, where the inequality follows by \eqref{eq:pap}. So $\Pr[\mathcal{E}_2]\le\Pr[\mathcal{A}] + \Pr[\mathcal{B}]\le 3/n$.
\end{proof}

\begin{case}
	If $a_i = 0$ for some $i$ and $b_1,b_2\le 3$, or if $b_i = 0$ for some $i$ and $a_1,a_2\le 3$, then $P[\mathcal{E}_2]\le\frac{1}{n}$.
\end{case}

\begin{proof}
	Suppose without loss of generality that $a_1 = 0$ and $b_1,b_2\le 3$. By the definition of the greedy coupling and the fact that $c\neq\sigma(v),\tau(v)$, $a_1 = 0$ if and only if $S_{\tau}(u_2,\sigma(v))$ consists of $u_1,u_2,w$ for some $\sigma(v)$-colored $w\in N(u_1)\cup N(u_2)$. So $\mathcal{E}_2$ is a subset of the event that $w$ is flipped to any other color. Thus, $\Pr[\mathcal{E}_2]\le\frac{1}{n}$.
\end{proof}

\begin{case}
	If $1\le a_1,a_2,b_1,b_2\le 3$, and if $(a_1,a_2)$ and $(b_1,b_2)$ are both not among $\{(1,1),(3,3)\}$, then $P[\mathcal{E}_2]\le\frac{48}{nk}$.\label{case:insignificant}
\end{case}

\begin{proof}
Suppose $(a_1,a_2)$ and $(b_1,b_2)$ are both not among $\{(1,1),(3,3)\}$. Then $\mathcal{E}_2$ is a subset of the event that the pair of flips $(S,S)$ chosen increases or decreases at least one of $a_1,a_2$ and decreases or increases at least one of $b_1,b_2$, respectively. But for a flip $S$ to decrease some $a_i$ for $i\in\{1,2\}$, it must contain a member of $S_{\tau}(u_i,\sigma(v))$, and for a flip $S$ to increase some $b_j$ for $j\in\{1,2\}$, it must contain the color $c$ or $\tau(v)$. There are at most 3 members of $S_{\tau}(u_i,\sigma(v))$, so the probability of $(S,S)$ both increasing $a_i$ and decreasing $b_j$ is at most $\frac{3}{n}\cdot\frac{2}{k} = \frac{6}{nk}$, and by a union bound over the eight different choices of $i,j$, and increasing/decreasing, we conclude that $\Pr[\mathcal{E}_2]\le\frac{48}{nk}$.
\end{proof}

\begin{case}
	If $1\le a_1,a_2,b_1,b_2\le 3$ and exactly one of the tuples $(a_1,a_2)$ and $(b_1,b_2)$ is among $\{(1,1),(3,3)\}$, then $P[\mathcal{E}_2]\le\frac{4d + 48}{nk}$.
\end{case}

\begin{proof}
Suppose $(b_1,b_2) = (1,1)$. $\mathcal{E}_2\subseteq\mathcal{X}\cup\mathcal{Y}$, where $\mathcal{X}$ is the event that the pair of flips $(S,S)$ chosen increases or decreases some $a_i$ and decreases or increases some $b_i$, respectively, and $\mathcal{Y}$ is the event that $(a_1,a_2)$ becomes $(3,3)$. We already know by Case~\ref{case:insignificant} that $\Pr[\mathcal{X}]\le \frac{48}{nk}$. Supposing without loss of generality that $a_1 < 3$, the event $\mathcal{Y}$ is a subset of the event that a neighbor of a vertex in $S_{\tau}(u_1,\sigma(v))$ is flipped to the color $c$ or $\sigma(v)$. There are at most $2d$ such neighbors, so $\Pr[\mathcal{Y}]\le\frac{2d}{n}\cdot\frac{2}{k} = \frac{4d}{nk}$, and thus $\Pr[\mathcal{E}_2]\le\frac{4d + 48}{nk}$.

Now suppose $(b_1,b_2) = (3,3)$. $\mathcal{E}_2\subseteq\mathcal{X}\cup\mathcal{Z}$ where $\mathcal{X}$ is the event defined above and $\mathcal{Z}$ is the event that $(a_1,a_2)$ becomes $(1,1)$. Supposing without loss of generality that $a_1 > 1$, $\mathcal{Z}$ is a subset of the event that one of the members of $S_{\tau}(u_1,\sigma(v))$ other than $u_1$ is flipped. There are at most two such vertices, so $\Pr[\mathcal{Z}]\le 2/n$ and $\Pr[\mathcal{E}_2]\le\frac{2k + 48}{nk}$.
\end{proof}

\begin{case}
	If $1\le a_1,a_2,b_1,b_2\le 3$ and $(a_1,a_2,b_1,b_2) = (1,1,1,1)$, then $P[\mathcal{E}_2]\le\frac{4d}{nk}$.
\end{case}

\begin{proof}
	$\mathcal{E}_2$ is a subset of the event that one of the neighbors of $u_1$ or $u_2$ is flipped to the color $\sigma(v)$ or $\tau(v)$, so $\Pr[\mathcal{E}_2]\le\frac{2d}{n}\cdot\frac{2}{k} =\frac{4d}{nk}$.
\end{proof}

\begin{case}
	If $1\le a_1,a_2,b_1,b_2\le 3$ and $(a_1,a_2,b_1,b_2) = (3,3,3,3)$, then $P[\mathcal{E}_2]\le\frac{2}{n}$.
\end{case}

\begin{proof}
	$\mathcal{E}_2\subseteq\mathcal{S}\cup\mathcal{T}$, where $\mathcal{S}$ (resp. $\mathcal{T}$) is the event that all $\sigma(v)$-colored (resp. $\tau(v)$-colored) neighbors in $N(u_1)\cup N(u_2)$ in $\tau$ (resp. $\sigma$) are flipped to a different color. Consider an arbitrary $\sigma$-colored neighbor $w$ of $u_1$. $\mathcal{S}$ is a subset of the event that $w$ is flipped, so $\Pr[\mathcal{S}]\le 1/n$. We can bound $\Pr[\mathcal{T}]$ similarly, so $\Pr[\mathcal{E}_2]\le 2/n$.
\end{proof}

Of the upper bounds on $\Pr[\mathcal{E}_2]$ in all of the above cases, the bound of $3/n$ from Case~\ref{case:worst} is the greatest when $k\ge 1.833d$, completing the proof of Lemma~\ref{lem:goodtobadend}.\end{proof}

We are now ready to complete the proof of the main result of this section. 

\begin{proof}[Proof of Lemma~\ref{lem:main}]
Starting from $\sigma^{(T-1)},\tau^{(T-1)}$ in stage \Bad{c}, by Lemma~\ref{lem:badtogood}, the probability of transitioning to stage \Good{c} in the very next step is at least $\frac{4(k - d - 1)}{nk}$. As shown in Figure~\ref{fig:transitions}, once we leave stage \Bad{c} we never return. From stage \Good{c}, it is at most $\frac{5}{n}\cdot\frac{nk}{k - d - 2} = \frac{5k}{k - d - 2}$ times as likely to eventually end up at stage \Badend{c} as it is to end up at stage \Goodend{c}, by Lemmas~\ref{lem:goodtogoodend} and Lemma~\ref{lem:goodtobadend}. So the probability of ending in stage \Goodend{c} is at most $\frac{k - d - 2}{5k + (k - d - 2)}\cdot\frac{4(k - d - 1)}{nk}$, and we conclude that \eqref{eq:mainmarkov} and consequently Lemma~\ref{lem:main} hold for $$\gamma = \frac{(6k - d - 2)(k + 2p_2d)}{4(k - d - 2)(k - d - 1)}$$ as claimed.
\end{proof}

We now finish the proof of Theorem~\ref{thm:main}.

\begin{proof}[Proof of Theorem~\ref{thm:main}]
Note that for $k > 1.833d$ and $p_2 < 0.3$, $\gamma = \frac{(6k - d - 2)(k + 2p_2d)}{4(k - d - 2)(k - d - 1)}< 7.683410$, while $C = \frac{k + 2p_2d}{k - d - 2} < 2.920764$, so $C\gamma < 25.597784$ as defined in Lemma~\ref{lem:reduction}. Thus, substituting $25.597784$ into the $\gamma$ parameter for Linear Program~\ref{def:mixedlp} and solving numerically\footnote{Code for solving Linear Program~\ref{def:mixedlp} can be found at \url{https://github.com/sitanc/mixedlp}.}, we find that for $$p_1 = 1, p_2\approx 0.296706, p_3\approx 0.166762, p_4\approx 0.101790, p_5\approx 0.058475, p_6\approx 0.025989, p_{\alpha} = 0 \ \forall \alpha\ge 7,$$ Linear Program~\ref{def:mixedlp} attains value $\lambda^* < 1.833239$. So provided $k\ge1.833239 d$, Lemma~\ref{lem:reduction} implies that $$1 - \E[d(\sigma^{(\Tstop)},\tau^{(\Tstop)})]\le\frac{k-\lambda^*\cdot d}{k - d - 2} := \alpha$$ for some absolute constant $\alpha>0$. 

For $k\ge 1.833239 d$, Corollary~\ref{eq:maxET} implies that $\beta$ in the definition of Theorem~\ref{thm:hayesvigoda} is at most $\frac{nk}{k - d - 2}\le 2.21n$, so applying Theorem~\ref{thm:hayesvigoda} with $\beta = 2.21n$, and $W = 2N_{max} + 1 = 13$ gives that $$\tau_{mix}\le 2\left\lceil 26(2.21n)/\alpha\right\rceil\left\lceil\ln(n)/\alpha\right\rceil = O(n\log n)$$ as claimed.
\end{proof}

Theorem~\ref{thm:compareglauber} follows as a corollary of Theorem~\ref{thm:main}.

\begin{proof}[Proof of Theorem~\ref{thm:compareglauber}]
	In \cite{vigoda2000improved}, Vigoda proves using the comparison theorem of Diaconis and Saloff-Coste \cite{diaconis1993comparison} that if the flip dynamics mix in time $\tau$, then the Glauber dynamics for sampling colorings mix in time $O(\tau\cdot\log|\Omega|) = O(\tau\cdot k\log n)$.
\end{proof}



\bibliographystyle{alpha}
\bibliography{biblio}

\appendix


\section{Proof of Observation~\ref{obs:slack}}
\label{app:obsproof}

\begin{proof}
	The tightness of \eqref{eq:pap} only for $\alpha = 1$ is obvious. That the other constraints mentioned in the observation have zero slack can be checked by hand. We verify that all other constraints have nonzero slack.

	\setcounter{case}{0}
	\begin{case}
		Constraint \eqref{eq:mainconstraint} for $m = 1$
	\end{case}

	We first consider realizable $(A,B,\vec{a}^c,\vec{b}^c)$. It is easy to see that $(i - 1)(p_i - p_{i+1})\le 71/500$ with equality if and only if $2\le i\le 4$, and that $i(p_i - p_{i+1})\le 1037/1500$ with equality if and only if $i = 1$. Note that for $m = 1$, \begin{align*}H(A,B,\vec{a},\vec{b}) &= \max\left(a_1(p_{a_1} - p_{a_1+1}) + (b_1 - 1)(p_{b_1} - p_{b_1 + 1}), (a_1 - 1)(p_{a_1} - p_{a_1+1}) + b_1(p_{b_1} - p_{b_1 + 1})\right) \\ &\le\frac{71}{500} + \frac{1037}{1500} = \frac{5}{6},\end{align*} with equality if and only if $a_1 = 1$ and $2\le b_1\le 4$ or $b_1 = 1$ and $2\le a_1\le 4$.

	\begin{case}
		Constraint \eqref{eq:mainconstraint} for $m = 2$
	\end{case}

	We analyze this case in the same way that Claim 6 of \cite{vigoda2000improved} is proved. Assume without loss of generality that $p_{a_{max}} - p_A \le p_{b_{max}} - p_B$ and $a_1\ge a_2$. In \cite{vigoda2000improved} it is noted that one may assume that $b_2\ge b_1$ so that $$H(A,B,\vec{a},\vec{b}) = (A - 2a_1)p_A + (B-2b_2 - 1) + (a_1 - 1)p_{a_1} + a_2p_{a_2} + b_1p_{b_1} + b_2p_{b_2} - \min(p_{a_2},p_{b_2} - p_B).$$ Now we proceed by casework on $\min(p_{a_2},p_{b_2} - p_B)$:

	\begin{itemize}
	 	\item $p_{a_2}\le p_{b_2} - p_B$: In this case we have \begin{equation}H(A,B,\vec{a},\vec{b}) = (a_1 - 1)p_{a_1} + (a_2 - 1)p_{a_2} + (A-2a_1)p_A + b_1p_{b_1} + b_2p_{b_2} + (B - 2b_2 - 1)p_B.\label{eq:case1}\end{equation} One can check that $(a - 1)p_{a}\le 1/3$ with equality if and only if $a = 3$. Furthermore, when $a_1 = 3$, $(A-2a_1)p_A \le 0$ with equality if and only if $6\le A\le 7$; when $a_1\neq 3$, $(A-2a_1)p_A > 0$ if and only if $a_1 = a_2$ and $A = 2a_1 + 1$. But for the latter case, $(a_1 - 1)p_{a_1} + (a_2 - 1)p_{a_2} + (A-2a_1)p_A < 2/3$, so we conclude that for any fixed $b_1,b_2,B$, \eqref{eq:case1} is only maximized when $a_1 = a_2 = 3$ and $6\le A\le 7$. In a similar manner, we can verify that for any fixed $a_1,a_2,A$, \eqref{eq:case1} is only maximized when $b_1 = b_2 = 1$ and $B = 3$.

	 	\item $p_{a_2}> p_{b_2} - p_B$: In this case we have \begin{equation}
	 		H(A,B,\vec{a},\vec{b}) = (a_1 - 1)p_{a_1} + a_2p_{a_2} + (A - 2a_1)p_A + b_1p_{b_1} + (b_2 - 1)p_{b_2} + (B-2b_2)p_B.\label{eq:case2}\end{equation} This is symmetric with respect to flipping the roles of $(a_1,a_2)$ and $(b_2,b_1)$, and it can be verified that $(a_1 - 1)p_{a_1} + a_2p_{a_2} + (A-2a_1)p_A\le 4/3$. On the other hand, as we have seen in above, for $(a_1, a_2,b_1,b_2,A,B) = (7,3)$, we have that $H(A,B,\vec{a},\vec{b}) = 8/3$, concluding the proof for $m = 2$.
	\end{itemize}

	\begin{case}
		Constraint \eqref{eq:Hforsigmav}
	\end{case}

	For $m = 2$, the left-hand side of \eqref{eq:Hforsigmav} is $b_1p_{b_1 + b_2} + b_1p_{b_1} + b_2p_{b_2}$, which attains its maximum value of $p_2 + 2 < -1 + 2\lambda$ at $b_1 = b_2 = 1$. For $m > 2$, note that \eqref{eq:pap} implies that the left-hand side of \eqref{eq:Hforsigmav} is at most $m + 1 < -1 + \lambda\cdot m$ provided $\lambda > 5/3$, which is certainly the case.

	\begin{case}
	 	Constraint \eqref{eq:approxconstraint}
	\end{case}

	One can check that $(A-2)p_A\le 287/1500$. And if $p_a\le p_b$, then $a\cdot p_a + b\cdot p_b - \min(p_a,p_b) = (a-1)p_a + b\cdot p_b$. But $(a-1)p_a\le 1/3$ and $b\cdot p_b\le 1$. So $x^* = 287/1500$ and $y^* = 4/3$, and it is clear that $-1 + (11/6)\cdot 3 > 2\cdot x^* + m^*\cdot y^*$ for $m^* = 3$, so \eqref{eq:approxconstraint} has nonzero slack.
\end{proof}

\end{document}